\documentclass[acmsmall,review=false]{acmart}
\usepackage{acm-ec-26} 

\PassOptionsToPackage{sort}{natbib}

\author{Shangyuan Yang}
\author{Kirthevasan Kandasamy}

\usepackage{amsfonts,amsmath, mathtools}
\usepackage{amsthm, thmtools, thm-restate}
\usepackage{bm,bbm}
\usepackage{enumitem}
\usepackage{wrapfig}

\usepackage{booktabs}
\usepackage{enumitem}
\usepackage{graphicx}
\usepackage[noend]{algpseudocode}
\usepackage{algorithm}
\usepackage{subcaption}
\usepackage{kk, defns_data_sharing}

\title[Pairwise Exchanges of Freely Replicable Goods with Negative Externalities]{Pairwise
Exchanges of Freely Replicable Goods with Negative Externalities}

\begin{abstract}
We study a setting where a set of agents engage in \emph{pairwise exchanges} of \emph{freely replicable goods} (\eg digital goods such as data), where two agents grant each other a copy of a good they possess in exchange for a good they lack. Such exchanges introduce a fundamental tension: while agents benefit from acquiring additional goods, they incur \emph{negative externalities} when others do the same. This dynamic typically arises in real-world scenarios where competing entities may benefit from selective collaboration.
For example, in a data sharing consortium, pharmaceutical companies might share (copies of) drug discovery data, when the value of accessing a competitor’s data outweighs the risk of revealing their own.

In our model, an altruistic central planner wishes to design an \emph{exchange protocol} (without money), to structure such exchanges between agents. The protocol operates over multiple rounds, proposing sets of pairwise exchanges in each round, which agents may accept or reject.
We formulate three key desiderata for such a protocol:  
\emph{(i) individual rationality:} agents should not be worse off by participating in the protocol;  
\emph{(ii) incentive-compatibility:} agents should be incentivized to share as much as possible by accepting all exchange proposals by the planner;  
\emph{(iii) stability:} there should be no further mutually beneficial exchanges upon termination.  
We design an exchange protocol for the planner that satisfies all three desiderata.

While the above desiderata are inspired by classical models for exchange, free-replicability and negative externalities necessitate novel and nontrivial reformalizations of these goals. We also argue that achieving Pareto-efficient agent utilities---often a central goal in exchange models without externalities---may be ill-suited in this setting.

\end{abstract}

\begin{document}

\maketitle

\setcounter{tocdepth}{1}
\tableofcontents


\newcommand{\insertFigNaiveNotNic}{

\begin{figure}[t]
    \centering
    \vspace{-0.2in}
    \begin{subfigure}[b]{0.40\textwidth}
        \includegraphics[width=\textwidth]{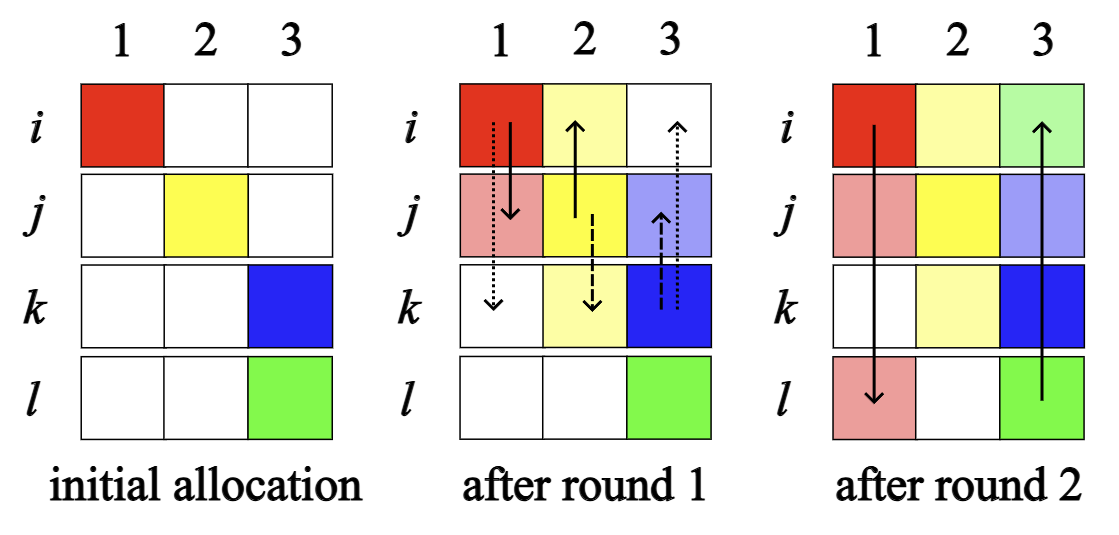} \\[-0.22in]
        \caption{\emph{An illustration of our initial attempt.}}
        \label{fig:naivesub}
    \end{subfigure}
    \hspace{0.3in}
    \begin{subfigure}[b]{0.50\textwidth}
        \includegraphics[width=\textwidth]{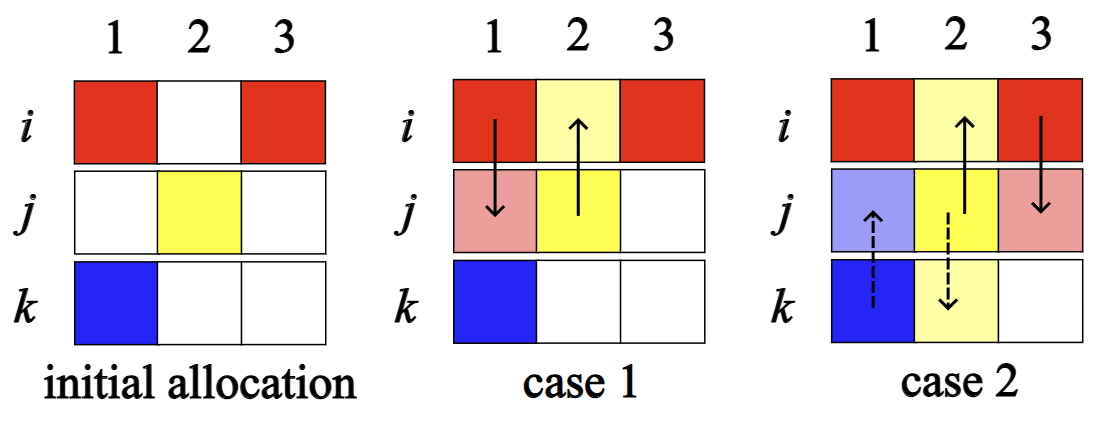}\\[-0.2in]
        \caption{\emph{Agents rejecting proposals to obtain rarer goods.}}
        \label{fig:notnicsub}
    \end{subfigure}
        \vspace{-0.12in}
    \caption{
        \emph{(\subref{fig:naivesub})}
        There are 4 agents and 3 goods with $\initdata_i = [1, 0, 0]$, $\initdata_j = [0, 1, 0]$, and $\initdata_k = \initdata_l = [0, 0, 1]$.
        In round 1, the protocol proposes three exchanges $P_1 = \{((i, 1), (j, 2)), ((i, 1), (k, 3)), ((j, 2), (k, 3))\}$; if all are accepted, it results in a stable allocation.
        However, suppose $((i, 1), (k, 3))$ was rejected by either agent.
        In round 2, the protocol proposes $P_2 = \{((i, 1), (l, 3))\}$. If accepted, this results in the stable allocation shown. 
        Recall, we disallow agents from onward-sharing goods they received from others in exchange for more goods.
        \\
        \emph{(\subref{fig:notnicsub})}
        Consider 3 agents $i, j, k$ and initial goods $\initdata_i = [1, 0, 1]$, $\initdata_j = [0, 1, 0]$, $\initdata_k = [1, 0, 0]$.
        Suppose the protocol proposes $P_1 = \{((i, 1), (j, 2))\}$, which results in a stable allocation if accepted. However, agent $j$ may reject it, as she would gain only one additional good in the final allocation. In particular, she may hope to get (the rarer) good 3 from $i$ and good 1 from $k$, for a total of two goods.
        However, had the protocol instead proposed $P_1 = \{((i, 3), (j, 2)), ((j, 2), (k, 1))\}$, then $j$ will accept both proposals since she gains two goods.  
        Similarly, $k$ will accept her proposal as she gains a good, and $i$ will also accept (provided that both $j$ and $k$ follow the accepting policy), since rejecting it would result in not receiving good 2.
        \vspace{-0.15in}
    }
    \label{fig:naivenotnic}
\end{figure}

}

\newcommand{\insertFigIR}{
    \begin{figure}[t]
    \centering
    \begin{subfigure}[b]{0.20\textwidth}
        \includegraphics[width=\textwidth]{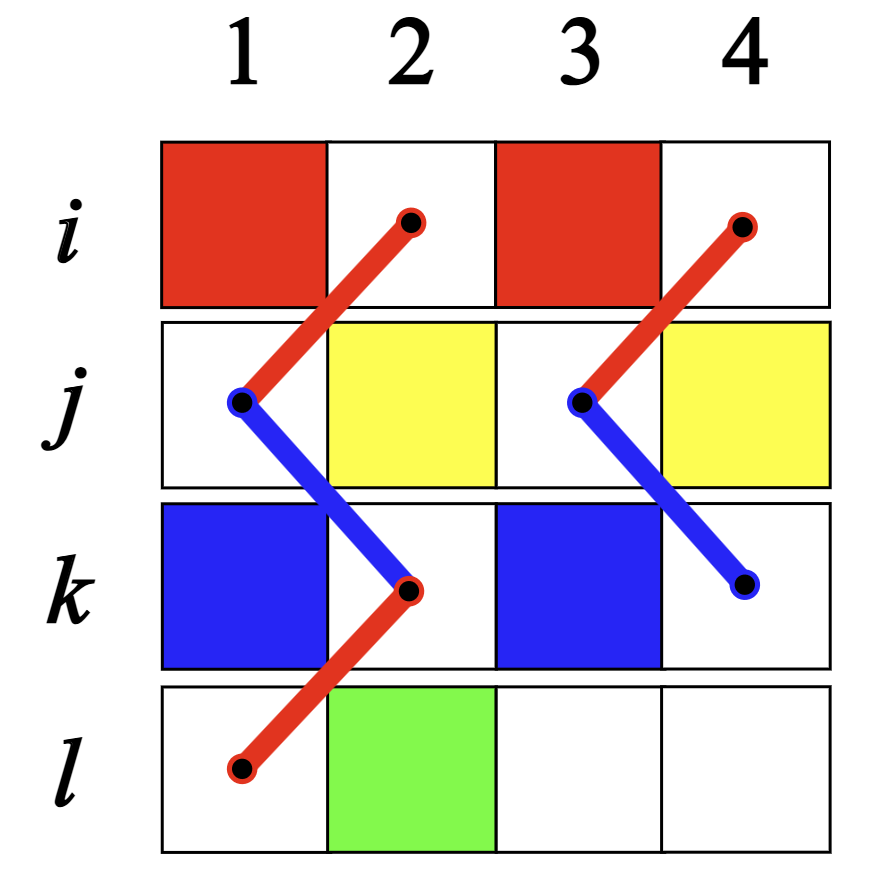}
        \caption{}
        \label{fig:IRsub1}
    \end{subfigure}
    \hspace{1.0in}
    \begin{subfigure}[b]{0.20\textwidth}
        \includegraphics[width=\textwidth]{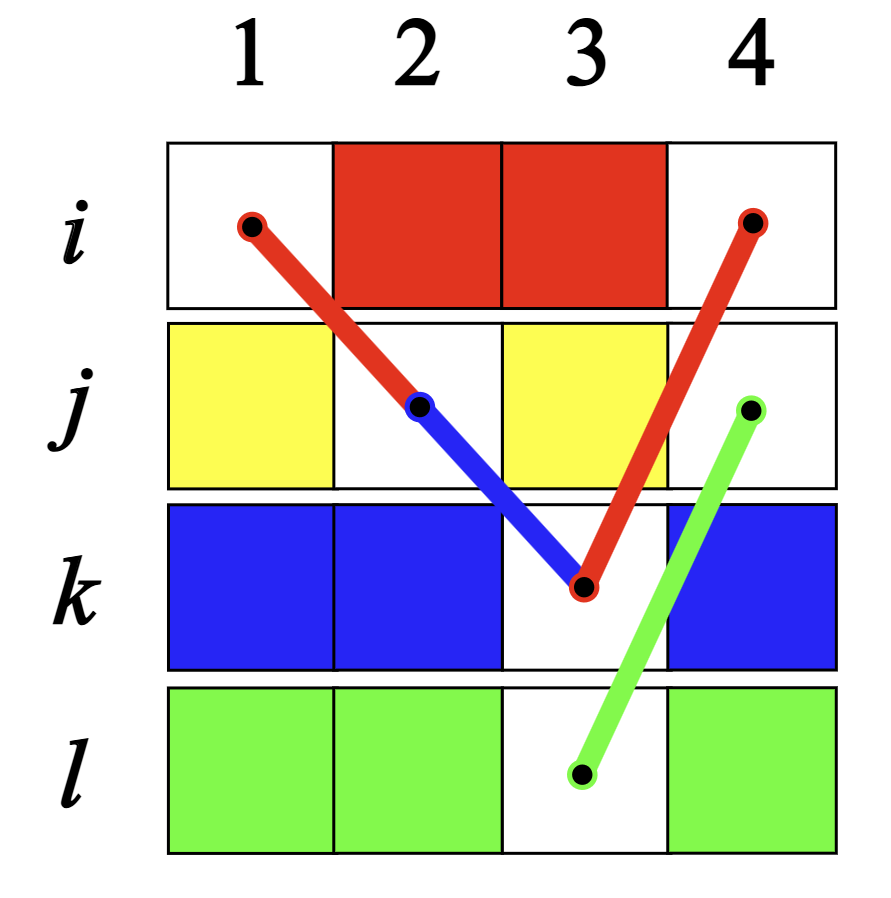}
        \caption{}
        \label{fig:IRsub2}
    \end{subfigure}
    \vspace{-0.1in}
    \caption{
        Illustration of \emph{tracked paths} for the IR proof via two separate instances (\subref{fig:IRsub1}) and (\subref{fig:IRsub2}).
        In both instances, we have 4 goods and 4 agents $\{i,j,k,l\}$ with $\beta_{ij} < \beta_{ik} < \beta_{il} < \beta_{jk} < \beta_{jl} < \beta_{kl}$.
        Edges only in $E$ are red, edges only in $E'$ are blue, and edges in both are green. 
        In instance (\subref{fig:IRsub1}),
        if $i$ participates, \algname{} would propose $E = \{\{(i, 2), (j, 1)\}, \{(i, 4), (j, 3)\}, \{(k, 2), (l, 1)\}\}$. If $i$ does not participate, \algname{} would propose $E' = \{\{(j, 1), (k, 2)\}, \{(j, 3), (k, 4)\}\}$. The path along goods 1 and 2 is a $\Tcal_2$ tracked path (see Step 3 of the IR proof), where $i$ gains $1-\beta_{il}$ from participating. The path along goods 3 and 4 is a $\Tcal_1$ tracked path, where $i$ gains $1 + \beta_{ik}$ from participating. 
        In (\subref{fig:IRsub2}),
        if $i$ participates, \algname{} would propose $E = \{\{(i, 1), (j, 2)\}, \{(i, 4), (k, 3)\}, \{(j, 4), (l, 3)\}\}$. If $i$ does not participate, \algname{} would propose $E' = \{\{(j, 2), (k, 3)\}, \{(j, 4), (l, 3)\}\}$. All edges except $\{(j, 4), (l, 3)\}$ form a $\Tcal_3$ tracked path, where $i$ gains 2 utility by participating. Finally, $\{(j, 4), (l, 3)\}$ is an exchange unaffected by $i$'s participation, and it is disjoint from any tracked paths.
    }
    \label{IR-proof-graph}
    \end{figure}
}

\newcommand{\insertFigIRIntro}{
    \begin{figure}[t]
    \centering
    \begin{subfigure}[b]{0.20\textwidth}
        \includegraphics[width=\textwidth]{IRnew1.png}
        \caption{}
        \label{fig:IRsub1-intro}
    \end{subfigure}
    \hspace{1.0in}
    \begin{subfigure}[b]{0.20\textwidth}
        \includegraphics[width=\textwidth]{IRnew2.png}
        \caption{}
        \label{fig:IRsub2-intro}
    \end{subfigure}
    \vspace{-0.1in}
    \caption{
        Illustration of our graph construction in the IR proof on two instances (a) and (b).
        In both instances, we have 4 goods and 4 agents $\{i,j,k,l\}$ with $\beta_{ij} < \beta_{ik} < \beta_{il} < \beta_{jk} < \beta_{jl} < \beta_{kl}$.
            A shaded square means the agent originally has that good;
        for instance, in~(\subref{fig:IRsub1-intro}), agent $i$'s initial allocation is $\initdatai = [1, 0, 1, 0]$.
        Each agent-good pair is a vertex.
        Edges represent exchanges; for instance, the edge between $(i, 2)$ and $(j, 1)$ in~(\subref{fig:IRsub1-intro}) represents the exchange $((i, 1), (j, 2))$ where $i$ \emph{receives} a copy of good $2$ from $j$, while $j$ receives $1$.
        In instance~(\subref{fig:IRsub1-intro}),
        if $i$ participates, \algname{} would propose $E = \{\{(i, 2), (j, 1)\}, \{(i, 4), (j, 3)\}, \{(k, 2), (l, 1)\}\}$. If $i$ does not participate, \algname{} would propose $E' = \{\{(j, 1), (k, 2)\}, \{(j, 3), (k, 4)\}\}$.
        Edges only in $E$ are red, edges only in $E'$ are blue, and edges in both are green. 
        The paths which have alternating red and blue edges are tracked paths.
        \vspace{-0.1in}
    }
    \label{IR-proof-graph-intro}
    \end{figure}
}

\newcommand{\insertFigTree}{
    \begin{figure}
        \centering
        \begin{subfigure}[b]{0.48\textwidth}
            \includegraphics[width=\textwidth]{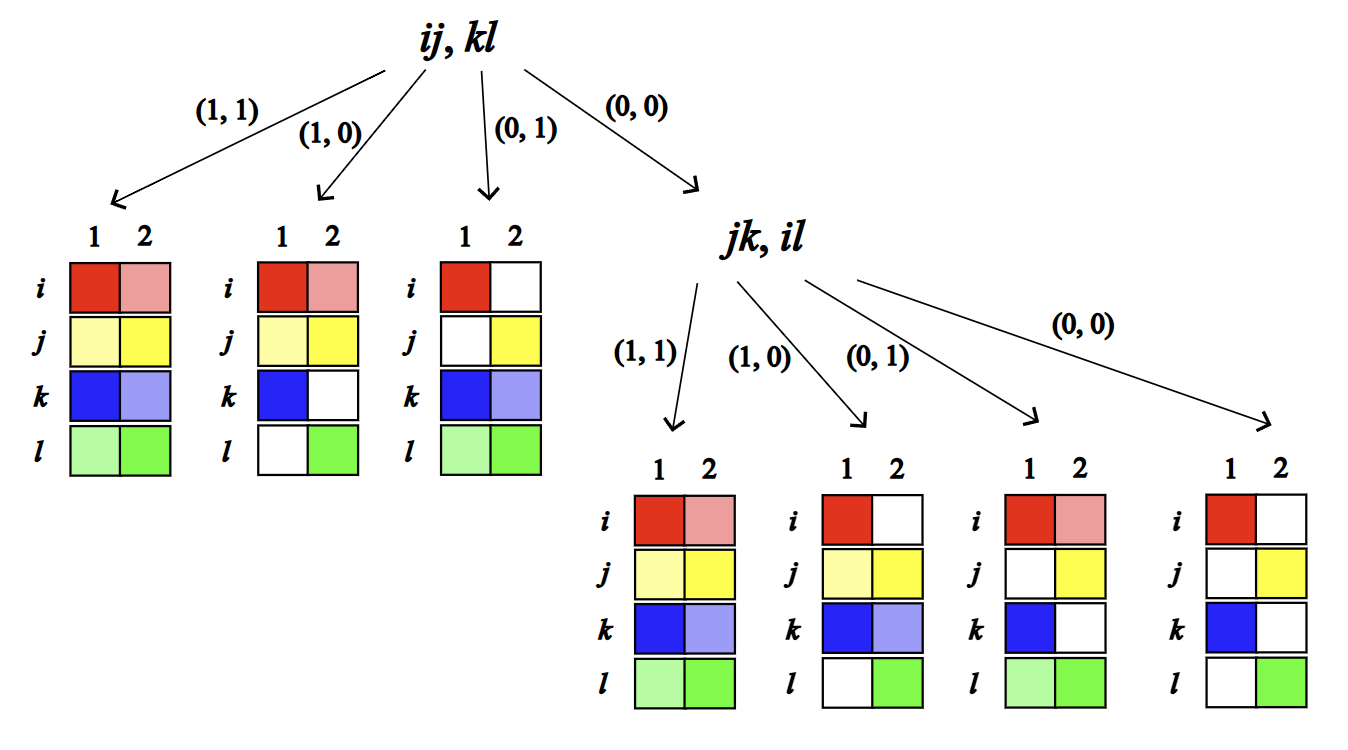}
            \caption{}
            \label{fig:tree-sub1}
        \end{subfigure}
        \begin{subfigure}[b]{0.50\textwidth}
            \includegraphics[width=\textwidth]{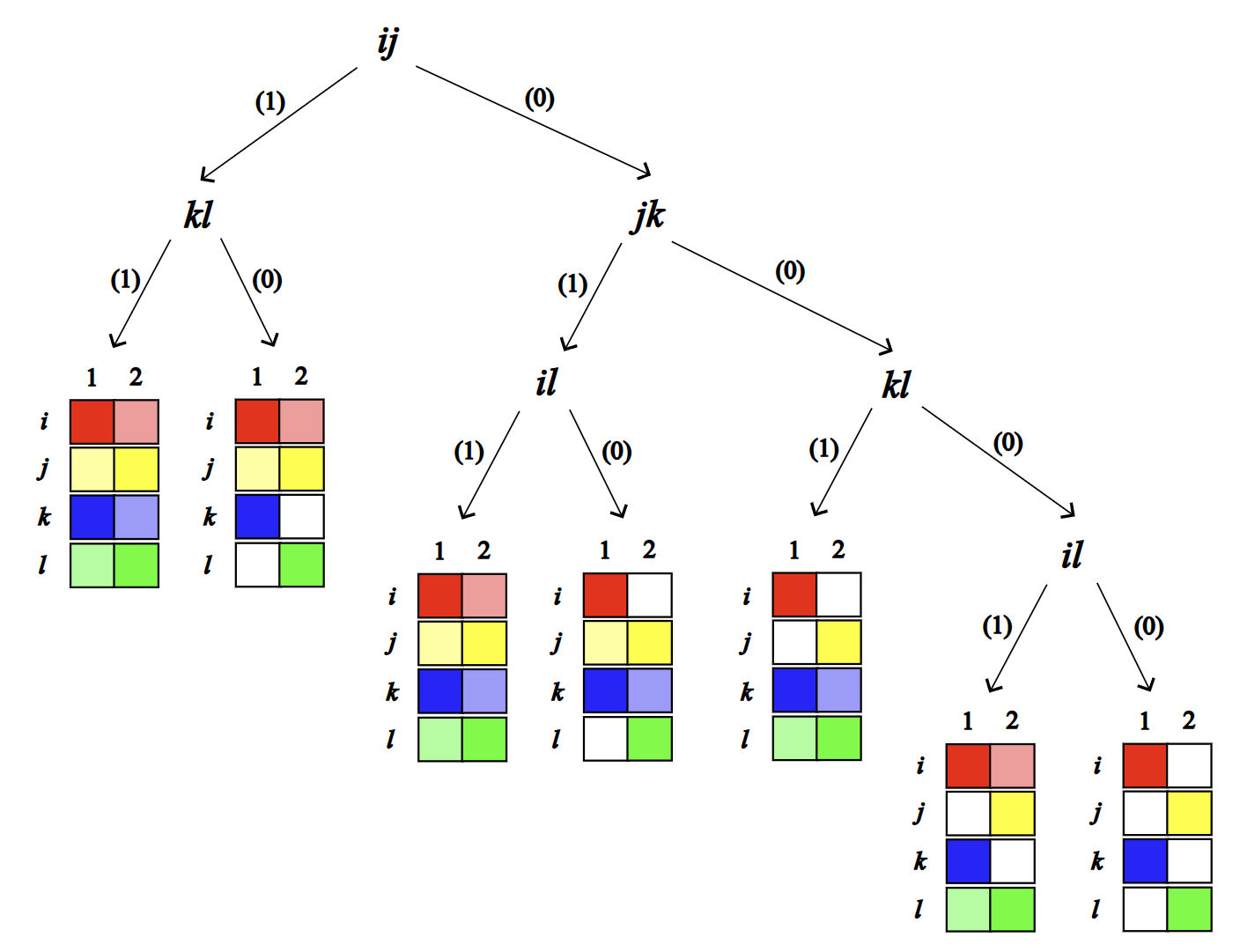}
            \caption{}
            \label{fig:tree-sub2}
        \end{subfigure}
        \vspace{-0.1in}
        \caption{
        An illustration of trees representing 
        stable protocols on the problem instance $(\initdata, \beta)$ in the proof of Theorem~\ref{thm:nodsic}.
         The ordered pair on each edge encodes acceptance/rejection of proposals. \emph{(\subref{fig:tree-sub1})} is the tree for \algname{}, which proposes exchanges $ij$ and $kl$ in round 1 and terminates unless both exchanges are rejected. If so, it proposes exchanges $jk$ and $il$ in round 2, and is guaranteed to terminate by then.
         \emph{(\subref{fig:tree-sub2})} is the tree for the SAC protocol,
         which proposes exchanges one at a time in ascending order of competition factors. If 2 exchanges are not simultaneously feasible ($ij$ and $jk$ for example), then the less prioritized exchange ($jk$) can only be scheduled once the more prioritized exchange ($ij$) is rejected.
        }
        \label{decision-tree}
    \end{figure}
}

\newcommand{\insertFigTreeIntro}{
    \begin{figure}
        \centering
        \begin{subfigure}[b]{0.48\textwidth}
            \includegraphics[width=\textwidth]{DecisionTree1.png}
            \caption{}
            \label{fig:tree-sub1-intro}
        \end{subfigure}
        \begin{subfigure}[b]{0.50\textwidth}
            \includegraphics[width=\textwidth]{DecisionTree2.png}
            \caption{}
            \label{fig:tree-sub2-intro}
        \end{subfigure}
        \vspace{-0.1in}
        \caption{
        An illustration of trees representing stable protocols on the problem instance in the proof of Theorem~\ref{thm:nodsic}.
        The internal nodes represent proposals by the protocol.
        The ordered pair on each edge encodes acceptance/rejection of proposals.
        For example, the protocol in \emph{(\subref{fig:tree-sub1-intro})}  proposes exchanges $ij$ and $kl$ in round 1 and terminates unless both exchanges are rejected. If so, it proposes exchanges $jk$ and $il$ in round 2.
        The protocol in \emph{(\subref{fig:tree-sub2-intro})} is the only protocol on this problem instance which does not admit an inversion pair.
        \vspace{-0.25in}
        }
        \label{fig:decision-tree-intro}
    \end{figure}
}

\newcommand{\insertFigPO}{
    \begin{figure}[t]
        \centering
        \begin{subfigure}[b]{0.45\textwidth}
            \includegraphics[width=\textwidth]{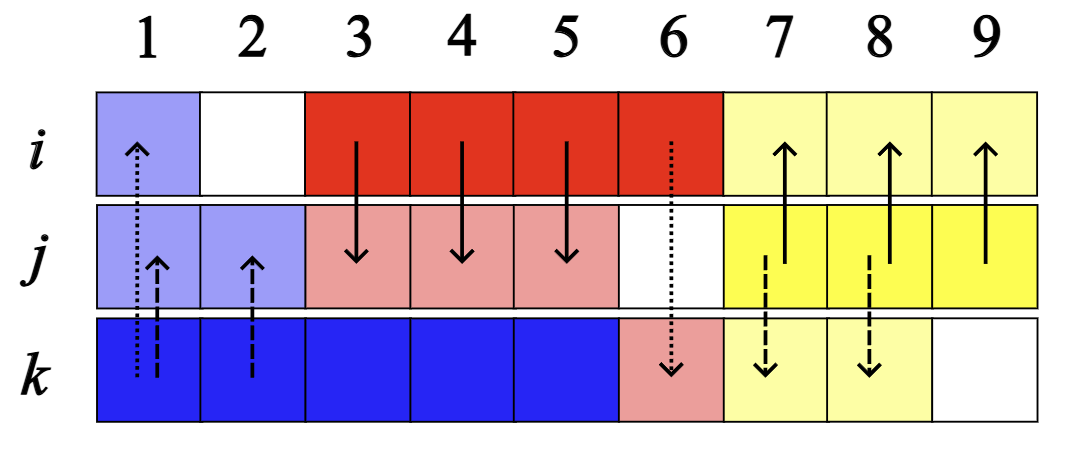}
            \caption{}
            \label{fig:PObad}
        \end{subfigure}
        \hspace{0.3in}
        \begin{subfigure}[b]{0.45\textwidth}
            \includegraphics[width=\textwidth]{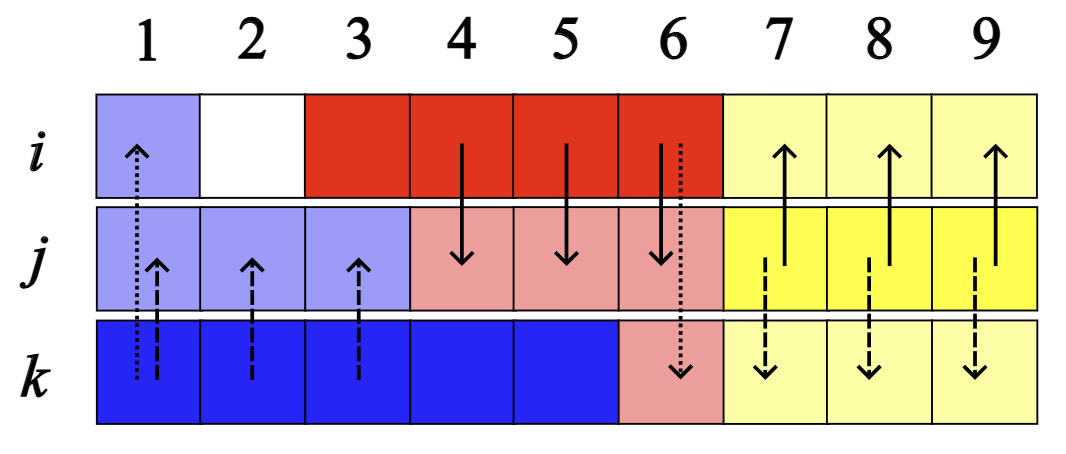}
            \caption{}
            \label{fig:POgood}
        \end{subfigure}
        \vspace{-0.10in}
        \caption{
        (\subref{fig:PObad}) An illustration that stability does not imply PE, even in LECs. Assume all agents follow the accepting policy. If a protocol proposes the illustrated exchanges, the resulting allocation is stable. However, it is not PE, as giving each agent one additional good would improve everyone's utility.
        (\subref{fig:POgood}) The \retrospect{} subroutine modifies the proposed exchanges as shown, resulting in a Pareto-efficient allocation, since at least one agent receives all goods (see Lemma~\ref{perfect-agent-PE}).
}
        \label{fig:POexample}
    \end{figure}
}

\newcommand{\insertPairwiseIllus}{
\begin{wrapfigure}{r}{0.42\textwidth}
    \vspace{-0.22in}
    \centering
    \includegraphics[width=0.41\textwidth]{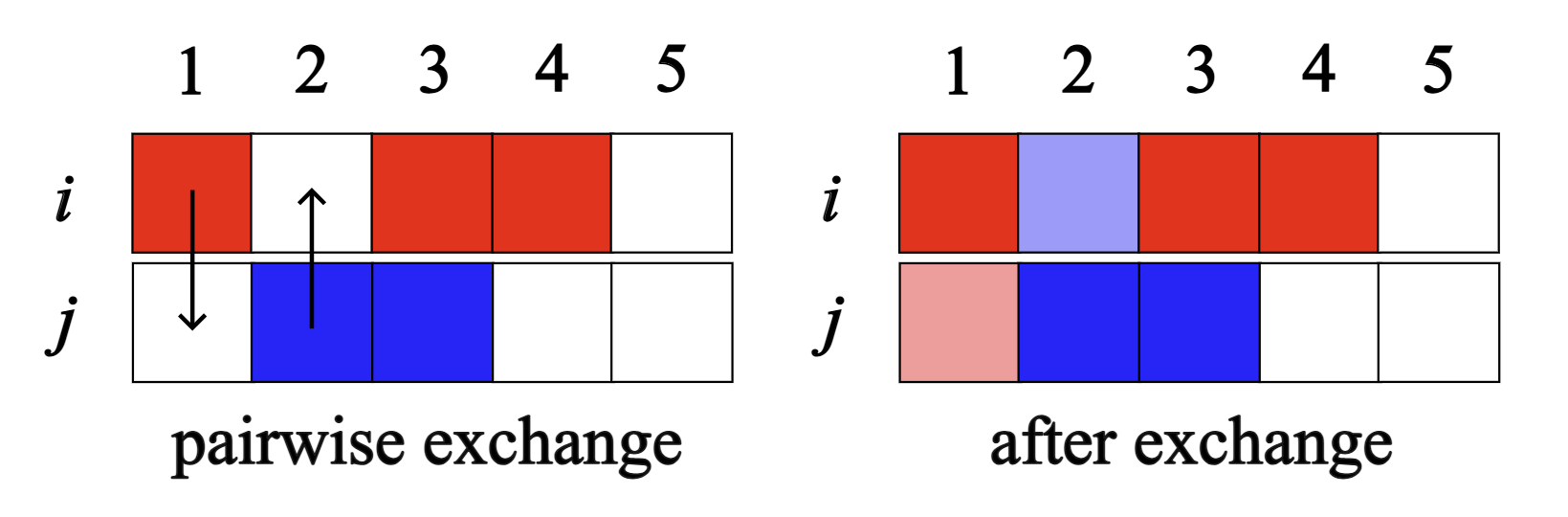}
    \vspace{-0.18in}
    \caption{\small Agent~$i$ holds goods 1,3, and 4, while $j$ holds goods 2 and 3.  
    In a pairwise exchange, $i$ could give good 1 to $j$ and receive good 2.  
    Both agents then retain all previously held goods and also acquire the newly received good.
    \vspace{-0.2in}}
    \label{fig:pairwiseillus}
\end{wrapfigure}
}

\newcommand{\insertFigIRIllus}{
\begin{wrapfigure}{r}{0.3\textwidth}
    \vspace{-0.35in}
    \centering
    \includegraphics[width=0.29\textwidth]{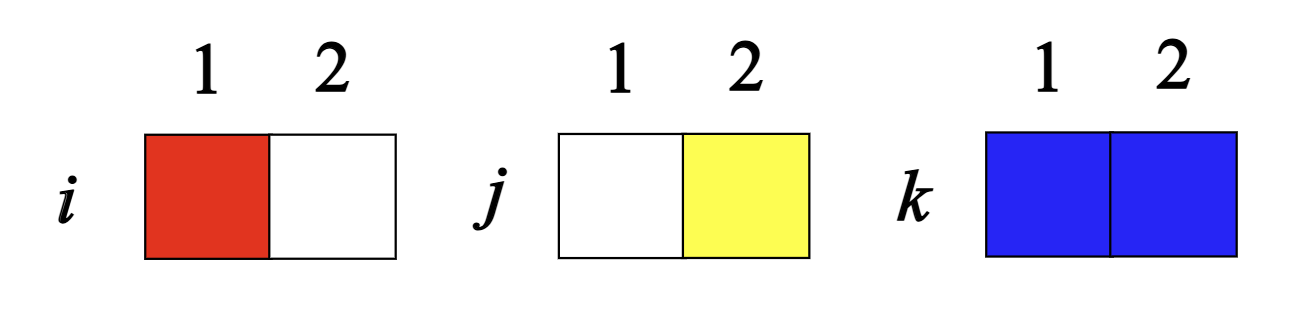}
    \vspace{-0.1in}
    \caption{\small Agent~$k$ holds both goods, while $i$ and $j$ hold only one each.
    \vspace{-0.3in}}
    \label{fig:irillus}
\end{wrapfigure}
}

\newcommand{\insertFigIRProofIllus}{
\begin{wrapfigure}{r}{0.25\textwidth}
    \vspace{-0.3in}
    \centering
    \includegraphics[width=0.24\textwidth]{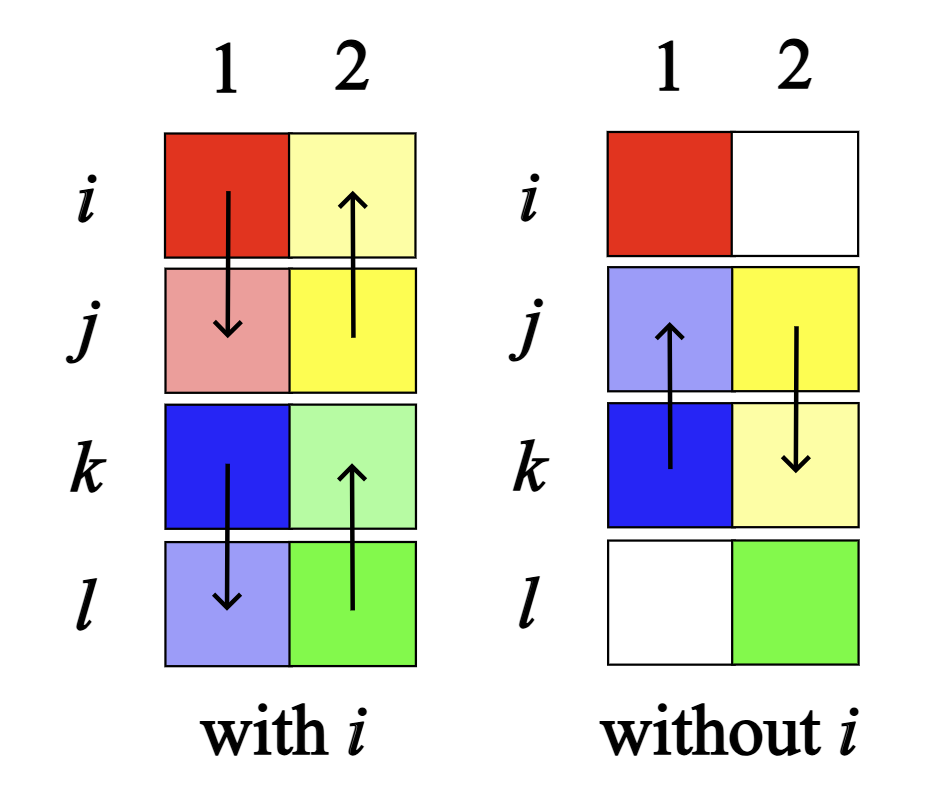}
    \vspace{-0.15in}
    \caption{An instance illustrating the challenges in IR.}
    \vspace{-0.3in}
    \label{fig:irproofillus}
\end{wrapfigure}
}


\newcommand{\insertAlgoMain}{
    \begin{algorithm}
    \caption{\algname}
    \label{alg:main}
    \begin{algorithmic}[1]
    \State \textbf{Input:} 
            participating agents $\partagents$,
            initial goods $\{\initdatai\}_{i\in\partagents}$,
            competition levels $\beta = \{\betaij\}_{i,j\in\partagents}$.
    \State $\rejprops \leftarrow \emptyset$, \Comment{Previously rejected proposals}
    \For{$t \leftarrow 1, 2, ...$}
        \State $\Pt \leftarrow \emptyset$ \Comment{$\Pt$ are
    the exchanges that will be proposed
                at the end of round $t$.}
        \State $\interdatat \leftarrow \datatmo$.
            \Comment{$\interdatat$ tracks the allocation if all proposals
                        in $\Pt$ are accepted.}
        \For{each pair of agents $(i,j)\in\partagents\times\partagents$ in
            increasing order of $\beta_{ij}$} \label{increasing-beta}
            \Statex \Comment{Pick the lexicographically smallest $(i, j)$ if $\betaij$ values are equal.} 
            \While{\textbf{true}} \label{exchange-loop}
                \State $\dmdjfromi \leftarrow \{r \in [m] ;\, \interdatat_{j,r} = 0,
                                                \initdata_{i,r} = 1$\}
                \Comment{Identify goods $i$ could give to $j$.}
               \label{lin:dmdjfromi}
                \State $\dmdifromj \leftarrow \{s \in [m] ;\, \interdatat_{i,s} = 0,
                                                \initdata_{j, s} = 1$\}
                \Comment{Identify goods $j$ could give to $i$.}
               \label{lin:dmdifromj}
                \If{$\dmdjfromi \neq \emptyset \text{ \textbf{and} } \dmdifromj \neq \emptyset$} \label{demand-nonempty}
    \label{lin:checkdmds}
                    \If{there exists $r \in \dmdjfromi$, $s \in \dmdifromj$ s.t. $((i, r),
(j, s)) \notin \rejprops$ } \label{find-exchange}
                        \Statex \Comment{Pick the lexicographically smallest $(r, s)$ if multiple are available.}
                        \State $\Pt \leftarrow \Pt \cup \{ ((i, r), (j, s)) \}$,
                        \Comment{Lazily add an exchange to $\Pt$.}
                    \label{lin:Ptadd}
                        \State $\interdatat_{i,s} \leftarrow 1, \interdatat_{j,r} \leftarrow 1$
                        \Comment{and update $\interdatat$.}
                    \Else
                        \Comment{If all such pairs have been rejected previously.}
                        \State \textbf{break} \label{no-exchange1}
                            \Comment{End while loop and move to the next agent pair.}
                    \EndIf
                \Else
                        \Comment{If either $\dmdjfromi$ or $\dmdifromj$ are empty.
                    Else from line \ref{lin:checkdmds}.}
                    \If{$\dmdjfromi = \emptyset$} $r \leftarrow \retrospectmath(j, i, \{i, j\})$ \label{IC1}
                    \EndIf
                    \If{$\dmdifromj = \emptyset$} $s \leftarrow \retrospectmath(i, j, \{i, j\})$ \label{IC2}
                    \EndIf
                    \If{$r = \retrofailmath$ \text{ \textbf{or} } $s=\retrofailmath$}
                    \Comment{End while loop and move to next agent pair.}
                        \State Undo changes to $\interdatat$ and $\Pt$ in last two calls
                                to \retrospect. \label{no-exchange2}
                            \textbf{break} 
                    \EndIf
                \EndIf
            \EndWhile
        \EndFor
        \State Propose $\Pt$ to agents and let them accept/reject proposals.
        \State $\datat \leftarrow \datatmo$
            \Comment{Compute allocation at end of round $t$. Initialized to $\datatmo$.}
        \label{lin:agentphasestart}
        \State $\datat_{j, r} \leftarrow 1$,  $\datat_{i, s} \leftarrow 1$,
        \; for every accepted proposal $((i, r), (j, s))$ in $\Pt$
        \State Add every rejected proposal in $\Pt$ to $\rejprops$.
        \label{lin:agentphaseend}
        \If{all proposals in $\Pt$ were accepted}
            \State $\Ptpo \leftarrow \emptyset$,
                       \textbf{break}.
            \Comment{Termination.}
        \label{lin:termination}
        \EndIf
    \EndFor

\Statex

\Function{\retrospect}{receiving agent $i$, giving agent $j$, examined agents $S$}
\Statex
    \Comment{Check if any of $i$'s exchanges in $\Pt$ can be adjusted to accommodate
            another good from $j$.}
    \For{each good $s \in \{1,\dots,m\}$ s.t. $\datatmo_{i,s} = 0$, $\initdata_{j, s} = 1$}
    \label{lin:retroforstart}
\Statex \Comment{Goods which $i$ has not received by round $t-1$ but could receive from $j$.}
        \label{all-possible}
        \If{$\interdatat_{i,s} = 0$}
            \Comment{If $i$ is not yet assigned this good in the current round.}
            \State \Return $s$ \label{base-case}
            \Comment{Base case. End recursion if we have found such a good $s$.}
        \Else
        \Comment{This means $i$ is scheduled to receive good $s$ in $\Pt$.}
            \State Let $((i, r), (j', s))\in\Pt$ be the exchange in $\Pt$ via
                which $i$ receives $s$.\label{query}
            \If{$j' \notin S$} \label{condition}
                \Comment{If $j'\in S$, we go back to line~\ref{lin:retroforstart}
            to check the next good.}
                \State $s' \leftarrow \retrospectmath(i, j', S \cup \{j'\})$ \label{recursive-call}
                        
                \If{$s' \neq \retrofailmath$ and $((i, r), (j', s')) \notin \rejprops$}
                    \State $\interdatat_{i, s'} \leftarrow 1,
                    \interdatat_{i, s} \leftarrow 0$, $\Pt \leftarrow \Pt \setminus \{((i, r), (j', s))\} \cup \{((i, r), (j', s'))\}$ \label{alter-x}
                    \State \Return $s$ \label{recursive-case}
                    \Comment{found a good from $j'$, change $\interdatat$ and $\Pt$.}
                \Else
                    \State Undo changes to $\interdatat$ and $\Pt$ in the above \retrospect \ call, if any.
                \EndIf
            \EndIf
        \EndIf
    \EndFor
    \State \Return \retrofail \label{all-examined}
\EndFunction
    \end{algorithmic}
    \end{algorithm}
}

\newcommand{\insertSimulation}{
    \begin{table}[t]
    \centering
    \small
    \resizebox{\linewidth}{!}{\footnotesize
    \begin{tabular}{|c|cc|cc|cc|}
    \hline
    \textbf{Dimension} & \multicolumn{2}{c|}{$p=0.1$} & \multicolumn{2}{c|}{$p=0.5$} & \multicolumn{2}{c|}{$p=0.9$} \\
    \textbf{$n \times m$} & \textbf{Trials} & \textbf{Avg Runtime} & \textbf{Trials} & \textbf{Avg Runtime} & \textbf{Trials} & \textbf{Avg Runtime} \\
    \hline
    $10 \times 10$   & $10^6$ & 0.017 & $10^6$ & 0.021 & $10^6$ & 0.011 \\
    $20 \times 20$   & $10^5$ & 0.12  & $10^5$ & 0.13  & $10^6$ & 0.048 \\
    $30 \times 30$   & $10^5$ & 0.42  & $10^5$ & 0.40  & $10^5$ & 0.20  \\
    $40 \times 40$   & $10^4$ & 1.96  & $10^4$ & 1.13  & $10^5$ & 0.23  \\
    $50 \times 50$   & $10^3$ & 21.2  & $10^4$ & 2.89  & $10^5$ & 0.37  \\
    $60 \times 60$   & $10^2$ & 95.3  & $10^4$ & 7.90  & $10^5$ & 0.57  \\
    $70 \times 70$   & N/A  & N/A   & $10^3$ & 19.3  & $10^5$ & 0.79  \\
    $80 \times 80$   & N/A  & N/A   & $10^3$ & 46.7  & $10^4$ & 1.28  \\
    $90 \times 90$   & N/A  & N/A   & $10^2$ & 223   & $10^4$ & 1.67  \\
    $100\times100$   & N/A  & N/A   & $10^2$ & 1242  & $10^4$ & 2.12  \\
    \hline
    \end{tabular}
    }
    \vspace{5pt}
    \caption{Simulation of \algname{} over various settings of parameters $n, m, p$. We run each setting for a reasonable number of trials based on the runtime. The Avg Runtime shown is an average over all trials, in milliseconds, for the given $(n,m,p)$ values. An N/A means that the setting takes too long to run. Notably, no counter example against Conjecture \ref{conj:clearperfagent} was found.}
    \label{tab:simulation}
    \end{table}
}

\clearpage
\pagenumbering{arabic}


\section{Introduction}
\label{sec:intro}

In many domains, competing agents can mutually benefit from collaborating on \emph{freely replicable goods}\footnote{A \emph{freely replicable} (a.k.a. nonrivalrous) good is one that can be used by multiple agents without reducing its availability to others; examples include digital goods such as datasets, and intellectual property such as patents or software.}.
For instance, pharmaceutical companies may share drug discovery data with competitors, when the benefits of accessing others' data outweigh the risks of revealing their own proprietary information. Such collaboration can strengthen the competitive standing of the firms relative to the broader market, while also accelerating drug development for the benefit of society.
Similar dynamics arise when technology firms cross-license patents or software, or when academic laboratories share data with each other.
We describe some real‑world instances of such collaboration, both to motivate this work, and to justify the design choices that follow.
\begin{enumerate}[leftmargin=0.22in]
    \item \emph{Data sharing between academic laboratories~\citep{clariti,aflow,hgp,genova2016rda}:}  
    Academic laboratories at universities compete to publish first and are therefore protective of their datasets, especially in fields such as medicine and materials science, where a single dataset can yield multiple publications. Nevertheless, labs often enter into data-sharing agreements to accelerate research. Trading datasets for money is often illegal and viewed as unethical in academic contexts. 
    Different labs have varying degrees of competition with each other, and these are often publicly known through overlaps in research focus.
    Through a lab's published papers, it is often publicly known which labs hold which datasets, even when the datasets themselves are undisclosed.
    Moreover, once a dataset is shared, the recipient is typically prohibited from re-sharing it, as onward sharing could undermine the original lab’s ownership; such restrictions are enforced through legal agreements and by the risk of exclusion from future collaborations.

    \item \emph{Cross‑licensing consortia in industry~\citep{iqconsortium,ccc,wifialliance}:}  
    Commercial firms, competing for the same customer base, may form consortia to cross-license intellectual property (IP)---such as patents or software---to foster innovation and reduce litigation risk.
    Firms are often more willing to enter cross-licensing agreements than purchase usage rights, particularly when the monetary value of an IP is uncertain.
    Firms' competition levels are publicly known 
    based on product and market overlap
    (\eg Pfizer and Merck compete more directly than Novartis; HBO and Netflix more directly than Spotify). 
    IP received through a consortium is restricted from onward licensing via legal agreements, as it could dilute the competitive advantage intended by the original cross-license agreement.
    Finally, patent ownership information is in public record.

    \item \emph{Data and model‑sharing consortia in industry~\citep{freightdat,cpad,crdsa,foshan}:}  
    Similarly, firms may collaborate by sharing datasets or trained models as it allows smaller organizations to compete against larger, data-rich players.
    While the data itself is private, information about the type of data each owns is known through press releases, industry reports, consortium registries, or regulatory filings.
\end{enumerate}

Motivated by these examples, we study a model that captures the competitive dynamics of
agents sharing freely replicable goods, such as digital goods (\eg datasets) and IP (\eg patents).
Each agent seeks to acquire more goods for
her own benefit, but experiences a \emph{negative externality} when others possess
her goods; for instance, a pharmaceutical company risks losing market share if its competitor develops a drug using its data or IP.
Naively granting all agents access to all goods maximizes sharing, but disincentivizes participation from agents who initially hold many goods, as they risk giving others access to their many goods, while receiving only a few in return.

\parahead{Pairwise exchanges}
An alternative approach, which we explore in this work, is to facilitate \emph{pairwise exchanges},
where, as illustrated in Figure~\ref{fig:pairwiseillus}, two agents grant each other access to a good they have in exchange for a good they lack (\eg exchanging datasets).  
We study the design of \emph{exchange protocols}---without money---to structure such exchanges.  
The protocol operates over multiple rounds, proposing a set of pairwise exchanges at each round, which agents may choose to accept or reject.  
The protocol should
ensure that agents benefit by participating,
incentivize agents to share as much as possible by accepting all proposed exchanges,
and produce a stable outcome such that no further mutually beneficial exchanges remain upon termination.

\insertPairwiseIllus
\emph{Why pairwise exchanges?}
One could consider more general frameworks where agents submit their goods to a central planner, who then reallocates access. 
However, we focus on pairwise exchanges for three reasons: 
\emph{(i)} To our knowledge, no prior work studies exchanging freely replicable goods under negative externalities; pairwise exchanges offer a nontrivial yet theoretically tractable starting point, yielding interesting technical insights and socially desirable outcomes. 
\emph{(ii)} Restricting to pairwise exchanges leads to computationally feasible algorithms, as considering all possible multilateral exchanges among agents quickly becomes expensive without additional assumptions. 
\emph{(iii)} Pairwise exchanges also give agents greater control over who gains access and under what conditions, making participation more appealing in practical settings.

\vspace{0.1in}

Next, in~\S\ref{sec:summarymodel} and~\S\ref{sec:summaryresults} we summarize our contributions.  
To simplify exposition, going forward, we describe our work in the context of a set of competitors sharing copies of digital goods.

\subsection{Summary of our Model}
\label{sec:summarymodel}

As the first to study exchanging freely replicable goods with negative externalities, we contribute a novel framework. In~\S\ref{sec:summaryresults}, we present  a set of impossibility results that justify this model.

\parahead{Environment and agent utilities}
There is a finite set of agents $\allagents$ and $m$ freely replicable and \emph{indivisible} goods (\eg each good is a data point).
Initially, each agent possesses a subset of these goods,
where multiple agents may hold the same good.
Agent $i$’s \emph{initial allocation} is encoded by $\initdatai \in \{0,1\}^m$, where $\initdatair = 1$ if $i$ has good $r$. 
Denote  $\initdata = \{\initdatai\}_{i\in\allagents} \in \{0,1\}^{|\Acal| \times m}$.
Competition between agents is described by symmetric parameters $\betaij = \betaji \in (0,1)$: the higher $\betaij$, the more strongly $i$ and $j$ compete.
An instance of the problem is thus given by $(\initdata, \beta)$.

After performing exchanges, let
$\datai \in \{0,1\}^m$ be the goods held by agent $i$. Let $\data = \{\datai\}_{i\in\allagents} \in \{0,1\}^{|\Acal| \times m}$ be the allocation to all agents.
An agent’s utility $\utilalloci$ under allocation $\data$ is,
\begin{align}
\utilalloci(\data) = \onev^\top \datai - \sum_{j \neq i} \betaij \onev^\top \dataj.
\label{eqn:utilsummary}
\end{align}
Here, the first term rewards $i$ for her own holdings, while the second penalizes her for the goods held by competitors.
In this work, we assume that all agents value all goods equally;
while admittedly a simplification, the problem is quite rich and nontrivial even under this assumption, and captures many of the core challenges of exchanging freely replicable goods under competition.

\parahead{Exchange protocol}
An altruistic \emph{planner} (\eg a consortium administrator) facilitates the exchange of goods through a multi-round protocol (a.k.a. mechanism or policy).
A subset of agents $\partagents \subset \allagents$ choose to participate. 
In each round, the planner proposes a set of \emph{pairwise exchanges},
where
a proposal $p = ((i,r),(j,s))$ recommends that $i$ gives a copy of good $r$ to $j$, and $j$ gives a copy of $s$ to $i$. 
Each agent can either accept or reject proposals involving them, and an exchange occurs only if both agents accept.
At the end of each round, the agents exchange goods,
retaining the goods they give away, while acquiring copies of goods received from others.
In the next round, the protocol  proposes a new set of exchanges.
This process continues for multiple rounds, with the planner updating future proposals based on past acceptances and rejections, until the protocol chooses to terminate.
For technical convenience, we do not allow
a protocol to propose exchanges which have been previously rejected by agents\footnote{\label{fn:cannotproposerejected}We invoke this condition only in one impossibility result, to rule out pathological protocols and simplify the proof.}.

We highlight two key features of our model: 
\emph{(i)} Onward sharing of goods received from one agent in exchange for additional goods from another is disallowed. Such constraints reflect real-world practices in data-sharing and cross-licensing consortia, as noted in the examples above. Moreover, allowing onward sharing would fundamentally change the problem characteristics, and could be an interesting direction for future work. We briefly discuss the challenges in Appendix~\ref{sec:discussion}.
\emph{(ii)} We assume that the initial allocations $\{\initdatai\}_{i\in\partagents}$ and competition levels $\{\betaij\}_{i,j\in\partagents}$ of participating agents are \emph{truthfully revealed} to the planner. As the examples above indicate, this assumption is often justified in practice, as they may already be publicly known.
Moreover, as we will show via Theorem~\ref{thm:truthfulreporting},  eliciting the initial allocation $\initdata$ truthfully from the agents is impossible in this setting.

\parahead{Strategic behavior}
An agent $i$’s strategy $\strati$ specifies, based on the history of previous exchanges, whether to accept or reject each proposal involving her.
To maximize sharing and deter strategic behavior, we seek to design protocols that incentivize agents to follow the \emph{accepting policy}, where an agent accepts all proposed exchanges.

\parahead{Desiderata} 
We identify three key desiderata for any exchange protocol, stated informally below.
\begin{enumerate}[leftmargin=0.22in]
    \item \emph{Nash Incentive compatibility (NIC).}  
    The best strategy for each agent should be to accept all exchanges proposed by the planner, when other agents are also doing so.
    Precisely, the accepting strategy profile should constitute a Nash equilibrium of the protocol.

    \item \emph{Individual rationality (IR).}  
    Agents should not be worse off by participating.
    Precisely, an agent who joins the protocol and follows the accepting strategy should obtain at least as much utility as she would, if she stayed out while the others participate with the accepting strategy. 

    \item \emph{Stability.}  
    We say that an \emph{allocation} $\data$ is stable if no two agents can further increase their utilities by exchanging goods.
An \emph{exchange protocol} is stable if, whenever a mutually beneficial exchange remains upon termination, it has already been rejected by the agents.
Hence, a protocol is stable only if it terminates with a stable allocation when all agents follow the accepting strategy.

\end{enumerate}

\subparahead{Key challenges in formulating desiderata} 
Similar desiderata appear in classical models for exchange of goods, such as stable matching~\citep{gale1962college}, one-sided matching~\citep{shapley1974cores}, and exchange economies~\citep{varian1973equity}, under irreplicable  goods and no externalities.
However, studying freely replicable goods and/or negative externalities
requires a careful reformulation of these desiderata.

For example, dominant-strategy incentive compatibility (DSIC), where it is the best strategy for an agent to accept all proposals regardless of others' strategies, is often achievable in classical exchange models~\citep{gale1962college,shapley1974cores}.
However, in Theorem~\ref{thm:nodsic}, we show that this is not always attainable in our setting, motivating our focus on the weaker NIC criterion.

\insertFigIRIllus
Similarly, standard IR definitions, which require that each agent be better off participating than in the initial state, is problematic under competition:
achieving this will prevent many exchanges and will not result in a stable exchange protocol.
To illustrate, consider Figure~\ref{fig:irillus}.
Agent $k$'s initial utility is $2-\betaik-\betajk$ (see~\eqref{eqn:utilsummary}). If $i$ and $j$ exchange goods 1 and 2, $k$ is worse off, with utility  $2-2\betaik-2\betajk$, as both her competitors have gained goods.
As $k$ cannot gain any goods herself, the only way to ensure that she is not worse off than the initial state would be to not propose the exchange between $i$ and $j$.
However, this is undesirable, as $k$'s participation should not hinder trades between the others.
Moreover, enforcing this requirement will not result in a stable protocol, as $i$ and $j$ can benefit by privately exchanging outside the protocol.
Hence, we define IR in our setting for an agent~$k$ by comparing her utility in two scenarios: 
\emph{(a)} when she does not participate, but a subset of other agents \( \partagents \) do, and  
\emph{(b)} when she joins \( \partagents \).  
IR requires that agent \( k \)'s utility in \emph{(b)} is no worse than in \emph{(a)} for all \( \partagents \subset \allagents \setminus \{k\} \).

Finally, defining stability using a core-like notion from coalitional games~\citep{shapley1971core,shapley1974cores} is also unsuitable under large negative externalities. A break-away coalition may obtain higher utility only because it reduces overall sharing, and its members need not remain committed to the coalition as they can benefit from trades outside the coalition.
We discuss this further in Appendix~\ref{sec:discussion}.

\parahead{Efficiency criteria}
In classical \emph{non-competitive} exchange models~\citep{gale1962college,varian1973equity,shapley1974cores}, an exchange benefits the two agents without hurting others.
Therefore, repeated exchanges improves everyone's utility.
In these settings, stability (no further mutually beneficial exchanges) coincides with Pareto efficiency (no agent can be made better without making another worse off).
In contrast, in our setting, while two agents $i$ and $j$ each gain utility $1 - \beta_{ij}$ from exchanging a good, every other agent $k$ suffers a utility loss of $-\beta_{ik} - \beta_{jk}$ (see~\eqref{eqn:utilsummary}).
As a result, after multiple exchanges, all agents may end up worse off than they were initially, particularly when the $\beta_{ij}$ values are large\footnote{
Importantly, this phenomenon is natural in competitive settings and is not an artifact of our specific model.
In any exchange model (even with irreplicable goods), if two agents benefit from an exchange, their competitors are necessarily worse off.
}.
Yet, the initial allocation may itself be unstable, as any two agents could carry out mutually beneficial exchanges.

Therefore, we do not impose an efficiency desideratum. Conventional notions such as welfare maximization and Pareto efficiency (PE) are ill-suited, and diverge from stability under high competition.
For instance, the social welfare under an allocation $\data$ is 
$\sum_{i} \utilalloci(\data)
= \sum_{i} (1 - \sum_{j \neq i} \beta_{ij}) \onev^\top \datai$, 
which can \emph{decrease} as agents acquire more goods when the $\beta_{ij}$’s are large.  
Moreover, via Theorem~\ref{thm:pehardness}, we will show that in highly competitive environments, stability and PE are incompatible: an unstable initial allocation can be PE, while a stable final allocation may be Pareto-dominated.
However, as we will see, under low competition, we conjecture that our protocol may achieve PE.

\parahead{Departure from classical models}
While our model draws inspiration from classical work in mechanism design and exchange models~\citep{gale1962college,shapley1974cores,fisher1891mathematical,varian1973equity,walras1900elements}, it also departs sharply: we assume truthful revelation of the initial allocation, only require NIC rather than DSIC, 
do not compare to an agent's initial utility in IR,
and observe that stability need not imply efficiency. 
As our impossibility results and discussions indicate, these
adaptations are both necessary and natural.  Reformulating classical desiderata for competitive exchanges was one of the major conceptual and technical challenges of this work.
It is worth pointing out that the limited prior work on mechanism design in competitive environments either overlooks these challenges~\citep{chen2023equilibrium,agarwal2020towards,tsoy2023strategic,branzei2013externalities}, or considers regimes with small competition, where such difficulties do not arise~\citep{dorner2023incentivizing}.

\subsection{Summary of our results, methods, and proof techniques}
\label{sec:summaryresults}

Next, we summarize our results, including three impossibility results, our method and key result, and additional results on achieving Pareto-efficiency in this setting.

\parahead{Impossibility results}
We present three impossibility results that justify our modeling choices.

\subparahead{1) DSIC vs NIC protocols}
Our first result shows that no nontrivial DSIC protocols are possible in this problem.
This motivates our focus on NIC.

\begin{restatable}[Informal]{theorem}{nodsic}
    \label{thm:nodsic}
    There exists a problem instance $(\initdata, \beta)$ on which no protocol can
    simultaneously satisfy DSIC and stability.
\end{restatable}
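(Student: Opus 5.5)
We will construct an explicit instance and show, by a case analysis on the protocol's first-round proposals, that every stable protocol leaves some agent with a profitable deviation from the accepting policy for some strategy profile of the others. The figure describing the trees for this proof suggests the instance: four agents $i,j,k,l$ and competition levels ordered as $\beta_{ij} < \beta_{ik} < \beta_{il} < \beta_{jk} < \beta_{jl} < \beta_{kl}$. The initial allocation is chosen so that the only feasible exchanges are pairwise swaps along a ``cycle'' $ij, jk, kl, il$. Each agent holds a good the neighbouring agents lack, and each agent can complete at most one exchange before running out of tradeable goods. For instance, take two goods, with $i,k$ holding good 1 and $j,l$ holding good 2. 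Under this allocation any stable terminal allocation is a perfect matching of the four agents into exchanging pairs, either $\{ij, kl\}$ or $\{jk, il\}$. It cannot be one exchange alone, since the other two agents could still trade, and that trade is mutually beneficial because $1-\beta>0$.

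Recall the protocol cannot re-propose rejected exchanges. A stable protocol is therefore a finite tree: it proposes some subset of these four exchanges, observes accept/reject outcomes, and must eventually reach a perfect matching unless exchanges were rejected. We then analyze what any such tree must look like.

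\textbf{Key step: DSIC forces consistent preferences.} Each agent's utility depends on which partner she obtains and on which other pairs trade. DSIC requires accepting to be optimal against \emph{every} profile of the others, including adversarial rejections. Take agent $i$. Pairing with $j$ versus $l$ changes her externality, and the outcome she faces depends on the rejections of others. The goal is an ``inversion'' in the tree: a history where some agent, say $j$, is offered a partner, while by rejecting (given some fixed rejecting behaviour of others) the protocol would be forced to the other matching, which gives $j$ a strictly better partner or fewer goods for her rival pair. Concretely, if the root proposes $ij$ and $kl$, consider $k$ and $l$ rejecting each other while $j$ rejects $i$. Stability then forces the protocol to propose $jk$ and $il$. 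Among her two possible partners, $j$ prefers the one with smaller $\beta$, but she also cares about which other pair forms. Comparing utilities reduces to comparing sums like $\beta_{jk}+\beta_{jl}$ versus $\beta_{ij}$ terms. Choosing the ordering of $\beta$'s appropriately, one agent strictly gains by rejecting.

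We enumerate all root choices up to symmetry: proposing both exchanges of one matching, a single exchange, or mixtures. We show each admits such an inversion pair. The only candidate without one is the sequential ascending-$\beta$ protocol. For that protocol we exhibit a separate deviation: an agent rejects the first exchange to wait for a partner with lower competition, while the others' fixed strategies make that later partner available.

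\textbf{Main obstacle.} The hardest part is handling the last protocol. It is also hard to make the case analysis exhaustive over adaptive trees rather than only over root proposals. I would first try to argue inductively: any subtree after a rejection is again a stable protocol on a smaller instance, so its behaviour is nearly forced, which limits branching. Then I would tune the numeric $\beta$ values, for example with all $\beta$ close to some $\beta_0$ and small perturbations, so that every needed utility inequality holds strictly.
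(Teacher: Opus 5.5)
Your skeleton matches the paper's: the same instance ($i,k$ hold good~1, $j,l$ hold good~2), protocols as finite trees, elimination of every tree containing an inversion pair, and a separate counterexample for the surviving ascending-order protocol. The genuine gap is in that last step, which you flag as the hardest.

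The deviation you propose there (rejecting the first exchange to wait for a less competitive partner) cannot work. Inversion-freeness means precisely that no agent is ever offered an exchange that is an ancestor of one with a less competitive partner. So waiting for a cheaper partner never pays in that tree; for instance, the first proposal $ij$ already has the smallest $\beta$ among the four feasible exchanges.

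The missing idea is to exploit the negative externality. You pair the deviation with another agent's rejection, so that fewer exchanges happen overall. In the paper, $i$ accepts $ij$ but rejects $il$, while $k$ and $l$ accept everything.
\begin{itemize}
\item If $j$ accepts, both $ij$ and $kl$ execute, and everyone ends with both goods.
\item If $j$ instead rejects $ij$ and accepts $jk$, then $i$ rejects $il$ and $kl$ is no longer beneficial to $k$. So $jk$ is the only exchange.
\end{itemize}
Agent $j$ ends up with a \emph{more} competitive partner, yet gains $\beta_{ij}+\beta_{jl}>0$, because $i$ and $l$ are blocked from gaining goods.

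Some smaller repairs are also needed.
\begin{itemize}
\item For the inversion-pair step, use the paper's clean profile. The deviator $i'$'s two potential partners $j',k'$ accept everything, and the fourth agent rejects everything. Accepting then yields only $i'k'$, while rejecting $i'k'$ yields only $i'j'$. The comparison is $1-\beta_{i'k'}$ versus $1-\beta_{i'j'}$, which the ordering settles with no numerical tuning.
\item The scenario you sketch for a root $\{ij,kl\}$ gives $j$ no profitable deviation under your ordering, since she prefers $i$ to $k$. The agent with an inversion there is $k$, as $\beta_{jk}<\beta_{kl}$.
\item Under the ordering you chose, the inversion-free tree is not unique. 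After $ij$ is rejected, $jk$ and $il$ may be proposed in either order or together, so ``the only candidate is the ascending protocol'' is false as stated. The paper picks $\beta_{ij}<\beta_{jk}<\beta_{kl}<\beta_{il}$ so that the inversion constraints force a single chain. With your ordering you would instead argue that the blocking deviation above defeats every inversion-free tree, which it does.
\item Exhaustiveness over adaptive trees needs no induction on smaller instances. Stability in the all-reject branch forces all four exchanges to appear in the tree, and anything proposed at a node is an ancestor of everything below it. So the inversion constraints determine the tree level by level. This is after discarding the trivially non-DSIC protocols that offer an agent two exchanges in one round, or a good she has already received.
\end{itemize}
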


A key challenge in this proof is the potentially unbounded space of protocols. We show that, for a carefully constructed instance $(\initdata, \beta)$, any stable protocol can be represented as a tree whose branches correspond to agent responses. This representation allows us to reduce the analysis to a finite, though still large, set of trees. By partitioning these trees into cases, we systematically construct counterexample strategy profiles, thereby establishing the impossibility of DSIC.

\subparahead{2) Truthful revelation of $\initdata$}
Our model assumes that the initial allocations $\{\initdatai\}_{i\in\partagents}$ of participating agents are always revealed truthfully.  
Beyond the practical motivations outlined in our examples, this assumption is further supported by the following theorem, which demonstrates that eliciting the true initial allocation \(\initdata\) is impossible in our setting.

\begin{restatable}[Informal]{theorem}{truthfulreporting}
\label{thm:truthfulreporting}
There exists a problem instance $(\initdata, \beta)$ such that for any stable exchange protocol, all agents truthfully reporting the initial allocation and accepting all exchange proposals does not constitute a Nash equilibrium.
\end{restatable}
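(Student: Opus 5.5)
We will construct a small instance in which some agent can profitably \emph{hide} a good she owns, and show that this deviation pays off against \emph{every} stable protocol. The mechanism is the one in Figure~\ref{fig:irillus}: an agent who reports owning a good that others lack becomes a forced trading partner, and each extra exchange spreads her goods to competitors. Under-reporting can shut down such exchanges, or redirect them to weaker competitors, while the good is retained privately. Since onward sharing is disallowed and the agent keeps everything she physically holds, hiding a good never lowers her own holdings below her true $\initdatai$. It can, however, change which exchanges a stable protocol is forced to run. We therefore look for an instance in which the hidden good determines the set of mutually beneficial exchanges.

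\textbf{Candidate instance.} Take three agents $i,j,k$ and two goods, with $\initdata_i=[1,1]$, $\initdata_j=[1,0]$, $\initdata_k=[0,1]$. Choose $\betaij$ small and $\betaik$ close to $1$. First consider truthful reporting with all agents accepting. Stability forces the exchange between $j$ and $k$, since each gains $1-\beta_{jk}>0$. Agent $i$ cannot gain anything, so she is only hurt. The loss appears as the negative-externality terms $-\betaij-\betaik$ when $j$ and $k$ trade. This loss is the same whatever $i$ reports, so the instance must be refined until $i$'s report actually changes the outcome.

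\textbf{Refinement.} Give the deviating agent a good $r$ that she is the unique holder of, and arrange that some agent $j$ lacks $r$ but holds a good $s$ that $i$ lacks. Under truthful reporting, stability then forces every protocol to eventually propose an exchange in which $i$ hands $r$ to some competitor. To make this costly, take two agents $j,l$ who can each give $i$ the same good $s$. Once $i$ owns $s$, stability forces the protocol to propose $((i,r),(l,\cdot))$ whenever $l$ still lacks $r$ and holds something $i$ lacks. Set $\beta_{il}$ so that this second forced exchange has net value $1-\beta_{il}$ to $i$, and enlarge the instance so that each forced exchange also carries indirect externalities that make the truthful outcome strictly worse for $i$. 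Under the misreport, where $i$ conceals $r$, stability imposes no obligation involving $r$. This is because the protocol only proposes exchanges over reported holdings, and any exchange of a hidden good would have to be initiated by $i$ herself. Now $i$ trades some other reported good for $s$ and is strictly better off.

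\textbf{Argument structure.} Given the instance, the argument runs as follows. First, pin down the final allocation under truthful reporting. Stability together with the accepting profile fixes which goods each agent ends with, regardless of the protocol's order or tree structure. This step needs care, because different stable outcomes can exist depending on the order of exchanges. We resolve it by designing the instance so that every stable allocation reachable by the accepting profile gives $i$ the same utility, or at least bounding that utility above uniformly. Second, do the same for the misreport, lower-bounding $i$'s utility over all stable outcomes. Third, compare the two bounds. The main obstacle is the first step: controlling \emph{all} stable protocols simultaneously. We plan to enumerate the feasible pairwise exchanges of the instance, show that stability forces a specific exchange set up to symmetry, and only then tune the $\beta$ values so that the inequality holds strictly.
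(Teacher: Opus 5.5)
Your proposal does not yet contain a proof. No concrete instance is ever fixed, and the one instance you do write down is discarded for a wrong reason.

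With true allocations $[1,1]$, $[1,0]$, $[0,1]$ for $i,j,k$, you claim that $i$'s loss from the $j$--$k$ exchange ``is the same whatever $i$ reports.'' This is false. Suppose $i$ hides good $2$ and reports $[1,0]$. The reported profile becomes $[1,0],[1,0],[0,1]$, so $k$ now has two candidate partners for good $1$. If the protocol runs only $((i,1),(k,2))$, then:
\begin{itemize}
\item $k$ gets good $1$;
\item $i$ is ``given'' a copy of good $2$ she secretly already holds;
\item the reported allocation is already stable, so $j$ never obtains good $2$.
\end{itemize}
Agent $i$'s utility rises from $2-2\beta_{ij}-2\beta_{ik}$ to $2-\beta_{ij}-2\beta_{ik}$.

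This is exactly the paper's mechanism, and it runs opposite to yours. Under-reporting creates fake demand that turns $i$ into a decoy trading partner, crowding a competitor out of an exchange. It is not a way for $i$ to avoid handing out a good. Your refinement (hiding a uniquely held good $r$) is never instantiated, and its premise is doubtful. Each forced exchange $((i,r),(l,s'))$ gives $i$ a new good, worth $1-\beta_{il}>0$ directly. The ``indirect externalities'' meant to outweigh this are left unspecified. So is the claim that $i$ can still obtain $s$ with some other reported good.

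Your argument structure is also missing the step that makes the paper's proof work. You plan to pin down the truthful outcome, lower-bound $i$'s utility over all stable outcomes under the misreport, and compare. But stability alone leaves the protocol free on the misreported profile. On $[1,0],[1,0],[0,1]$ it may propose $((j,1),(k,2))$, or both exchanges, and then hiding gains $i$ nothing.

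The paper closes this by applying the equilibrium hypothesis to that profile viewed as a \emph{truthful} instance:
\begin{itemize}
\item proposing nothing violates stability;
\item proposing both exchanges lets $k$ profitably reject one;
\item so the protocol must propose exactly one, say $((i,1),(k,2))$ with probability $\alpha$.
\end{itemize}
The proof then splits on $\alpha$. If $\alpha>0$, $i$ profits by hiding good $2$ when the truth is $[1,1],[1,0],[0,1]$. If $\alpha=0$, $j$ profits symmetrically by hiding good $2$ when the truth is $[1,0],[1,1],[0,1]$.

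The deviating agent and the true allocation therefore depend on the protocol. In fact, on each of these allocations taken alone, some stable protocol does make truthful reporting plus acceptance an equilibrium. For instance, when the truth is $[1,1],[1,0],[0,1]$, a protocol that always schedules $((j,1),(k,2))$ first on the misreported profiles gives $i$ no profitable deviation. So with this instance, a protocol-independent bound on one fixed allocation, as your plan envisions, is not available. The equilibrium-based case split over the protocol's choice is essential.
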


Our proof shows that an agent $i$ who holds many goods can benefit by misreporting and claiming to hold only a few. This may cause the mechanism to schedule exchanges in which $i$ receives a good she already possesses but has not revealed. This can reduce potential exchanges for a competitor who genuinely needs the good, thereby increasing $i$’s utility.

\subparahead{3) Multi-round vs single-round protocols}
Finally, we focus on multi-round protocols, since a protocol which only proposes exchanges once cannot be simultaneously stable and NIC.
That said, our method terminates in a single round if all agents follow the accepting policy.

\begin{restatable}[Informal]{observation}{singleround}
\label{obs:singleround}
    There exists $\initdata$ such that, for all competition levels $\beta$, no single-round protocol can simultaneously satisfy stability and NIC.
\end{restatable}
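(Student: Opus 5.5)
We use the instance from Figure~\ref{fig:naivenotnic}(\subref{fig:notnicsub}): three agents $i,j,k$ with $\initdata_i=[1,0,1]$, $\initdata_j=[0,1,0]$, $\initdata_k=[1,0,0]$, and arbitrary $\beta\in(0,1)^3$. The plan has two parts. First, stability forces any single-round stable protocol to propose a particular pair of exchanges simultaneously. Second, that pair gives agent $i$ a profitable unilateral deviation from the accepting profile, which contradicts NIC. Since $\beta$ is arbitrary, the conclusion holds for all competition levels.

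\emph{Step 1: which exchanges can ever be mutually beneficial.} We enumerate these directly, using that onward sharing is disallowed. Any feasible exchange gives both parties utility $1-\beta_{ab}>0$, so it is beneficial whenever it is feasible. The only good $j$ can give is good $2$. Hence the feasible exchanges are
\begin{itemize}
\item $((i,1),(j,2))$ and $((i,3),(j,2))$ between $i$ and $j$;
\item $((k,1),(j,2))$ between $k$ and $j$.
\end{itemize}
There is no feasible $i$--$k$ exchange, because $k$ has nothing that $i$ lacks.

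\emph{Step 2: stability pins down $P_1$.} A single-round protocol proposes $P_1$ and then terminates, whatever the responses. Stability of a protocol requires that every mutually beneficial exchange remaining at termination was rejected, and this must hold under every strategy profile. Suppose $P_1$ contains at most one of the two $i$--$j$ exchanges. Consider the profile in which $i$ rejects every proposal involving her. Then $i$ still lacks good $2$, and $i$ holds good $3$, which $j$ never obtains from anyone else. So the $i$--$j$ exchange not in $P_1$ (or both, if $P_1$ contains neither) is mutually beneficial and unrejected at termination, which violates stability. Hence $P_1 \supseteq \{((i,1),(j,2)),((i,3),(j,2))\}$. Whether $P_1$ also contains the $k$--$j$ exchange does not matter for the argument.

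\emph{Step 3: a profitable deviation for $i$.} Let all other agents accept, and let $i$ accept $((i,1),(j,2))$ but reject $((i,3),(j,2))$. Agent $i$ still receives good $2$, so her own holdings are unchanged. Agent $k$'s final holdings are also unchanged. Agent $j$, however, no longer obtains good $3$, because only $i$ holds it. By~\eqref{eqn:utilsummary}, $i$'s utility therefore rises by exactly $\beta_{ij}>0$. So the accepting profile is not a Nash equilibrium, and the protocol is not NIC.

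The main obstacle is Step 2. One must argue carefully, from the paper's formal definition of a stable protocol, that stability is required for all strategy profiles and not merely the accepting one. One must also check that no other design of $P_1$ escapes: for example, proposals of the same exchange in different guises, or proposals that hand $j$ a good she would receive twice. The enumeration in Step 1 makes the latter check finite.
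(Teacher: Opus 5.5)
Your argument follows the paper's route. Stability must hold under every strategy profile, so a single-round protocol is forced to propose every exchange in $\feaspropsp(\initdata,\partagents)$. Some agent then receives the same good twice, and she profits by rejecting one copy, which denies her partner an extra good. The paper uses the smaller instance $\initdata_i=\initdata_j=[1,0]$, $\initdata_k=[0,1]$ (matching the $m=2$ in its formal restatement), where the deviator $k$ receives good $1$ from two different partners. Your three-good instance, where $i$ receives good $2$ twice from $j$, is fine for the statement as given.

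Step 2 needs one repair. Letting only $i$ reject fails when $P_1$ contains $((i,3),(j,2))$ and $((k,1),(j,2))$ but not $((i,1),(j,2))$. If $j$ and $k$ accept, $j$ obtains good $1$ from $k$. The omitted $((i,1),(j,2))$ is then no longer in $\feaspropsp(\datafinal,\partagents)$. The only remaining beneficial exchange, $((i,3),(j,2))$, was rejected by $i$, so this profile violates nothing. Your remark that the $k$--$j$ exchange does not matter is therefore false in exactly the subcase you need. Your justification that $j$ never obtains good $3$ elsewhere covers only the missing good-$3$ exchange, not the missing good-$1$ one.

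The fix is the paper's choice of profile: let every agent reject every proposal. Then $\datafinal=\initdata$, so stability forces all three exchanges of $\feaspropsp(\initdata,\partagents)$ into $P_1$, in particular both $i$--$j$ exchanges. Your Step 3 deviation then goes through unchanged.
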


\parahead{Our method and key result}
We start with an intuitive approach to designing a stable protocol.
On each round, we evaluate all possible pairwise exchanges and propose a set of exchanges
that satisfy two conditions:
\emph{(a)} no agent receives duplicate items, and
\emph{(b)} if all proposals are accepted, the resulting allocation is stable, \ie no further mutually beneficial
pairwise exchanges are possible.

While intuitive, this approach is not \emph{NIC}.
Agents may find it beneficial to decline proposals for
two reasons.
\emph{(i) Level of competition:}
If agent $k$ is less competitive with agent $i$ than agent $j$ is,
then $i$ might reject an exchange with $j$,
anticipating that the protocol might propose a similar exchange with $k$ in a
future round.
While $i$'s allocation does not change in either scenario, she benefits since a less competitive agent gains access to her good.
\emph{(ii) Rarity of goods:}
Suppose $j, k$ both have a common good and $k$ has a rare good. Then, $i$ might decline the common good from $k$, hoping to acquire both the rare good from $k$ and the common good from $j$ in a future round so that she receives two goods in total instead of one.

\subparahead{An improved method}
Our protocol \algname{} (\textbf{C}ompetitive-order \textbf{L}azy 
\textbf{E}xch\textbf{A}nges with 
\textbf{R}etrospection)
builds on this initial attempt but resolves the two shortcomings.
Addressing the first issue is relatively straightforward:
we can simply iterate over all pairs of agents $(i,j)$ in
increasing order of competition factor $\betaij$, scheduling as many exchanges as possible between them.
This ensures that each agent has the opportunity to exchange with less competitive agents first.

Addressing the second issue is much more challenging.
For this, on any given round,
we first lazily add exchanges to the set of proposals for that round. Then, for each agent pair $(i, j)$, we retrospectively
check if either agent could have obtained a different (rarer) good from previous exchanges, had these been scheduled differently, so that more exchanges are possible between $(i, j)$.
The key tool is a recursive subroutine, called \retrospect, which backtracks through previously scheduled exchanges in that round to determine whether a rarer good can be made accessible by modifying these exchanges, but without altering the \emph{number} of these exchanges.

We now state the main theoretical result of this paper. 

\begin{restatable}[Informal]{theorem}{main}
\label{thm:main}
The above protocol 
 satisfies NIC, IR, and stability.
When all agents follow the accepting policy, it terminates in one round in
$\bigO(|\partagents|^2 m^2 \max\{|\partagents|, m\})$ time.
\end{restatable}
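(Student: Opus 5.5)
We would prove the four claims (stability, runtime/termination, NIC, IR) separately, in increasing order of difficulty, all about the protocol \algname{} in Algorithm~\ref{alg:main}.

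\emph{Stability.} Suppose, for contradiction, that the final allocation admits a mutually beneficial unrejected exchange $((i,r),(j,s))$. Since $\betaij<1$, every exchange of a good each agent lacks is mutually beneficial, so this just means such an exchange exists. Consider the round in which \algname{} terminates and the iteration of the while loop for the pair $(i,j)$. Because goods in $\interdatat$ are only ever added, except for swaps inside \retrospect{} that preserve the number of goods each agent receives, the sets $\dmdjfromi$ and $\dmdifromj$ at that iteration can only have shrunk by goods that were actually scheduled. The loop exits only via line~\ref{no-exchange1} (all candidate pairs rejected) or line~\ref{no-exchange2} (\retrospect{} fails). We then show an invariant: \retrospect$(i,j,S)$ fails only if no rearrangement of $i$'s scheduled exchanges with agents not in $S$ frees a good obtainable from $j$. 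This is proved by induction on $|\partagents\setminus S|$, an alternating-path (augmenting-path) argument. Hence the surviving exchange would have been found, a contradiction. Termination after the first round under the accepting profile follows immediately, since all proposals are accepted. The runtime is bounded by counting: there are $|\partagents|^2$ pairs, at most $m$ while iterations per pair, and each \retrospect{} call explores at most $m$ goods and $|\partagents|$ recursion levels, with $\bigO(m\max\{|\partagents|,m\})$ amortized work per successful augmentation. This gives $\bigO(|\partagents|^2 m^2\max\{|\partagents|,m\})$.

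\emph{NIC.} Fix agent $i$ and assume all others accept. Any deviation consists of rejecting some set of round-1 proposals; these enter $\rejprops$, and the protocol reruns in later rounds. We compare $i$'s utility in the two outcomes through two quantities: the number of goods $i$ ends with, and the weighted sum $\sum_j\betaij\onev^\top\dataj$. The plan has two steps. First, show that the number of goods $i$ receives under \algname{} with full acceptance is maximal over all outcomes reachable by deviation. The retrospection invariant says $i$'s received set is a maximum matching-like object given the greedy order, so a deviation cannot increase it. Second, show that whenever the count is equal, the recipients of $i$'s goods are ``lexicographically smallest in $\beta$'' under acceptance, because pairs are processed in increasing $\betaij$. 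Every exchange $i$ makes transfers exactly one good to a partner, so the externality term is controlled by partners' competition levels. A careful exchange argument then shows that rejecting cannot move $i$'s giving to lower-$\beta$ partners without losing a good, and losing a good costs $1>\betaij$.

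\emph{IR.} This is the main obstacle. We compare the proposal sets $E$ (with $i$ participating) and $E'$ (without $i$). The symmetric difference $E\triangle E'$ on the vertex set of agent--good pairs decomposes into alternating paths (tracked paths), as in Figure~\ref{IR-proof-graph-intro}, since each vertex has degree at most one in each set. Edges common to both sets cancel. On each tracked path we classify the cases ($\Tcal_1,\Tcal_2,\Tcal_3$) by whether the path starts or ends at $i$. In each case we compute $i$'s utility change: she gains $+1$ per good received and pays $\beta$-penalties for goods moved between other agents. We show it is nonnegative using the increasing-$\beta$ order, which guarantees that an agent $j$ displaced by $i$ has lower $\beta$-weight to $i$ than the agent who then takes its place. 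The hardest part is establishing that the difference really decomposes into such paths despite retrospective swaps. We would attack this first by induction over the processing order of pairs, maintaining the path structure as an invariant.
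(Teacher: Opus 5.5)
\textbf{IR.} The central gap is here. You assert that $E\triangle E'$ decomposes into tracked paths because each vertex has degree at most one in $E$ and in $E'$. That only shows the components are alternating paths or even cycles. It says nothing about whether each component reaches a vertex of $i$. Components that avoid $i$ still change other agents' holdings, and hence $i$'s externality term. For example, an alternating path among other agents whose end vertices $(a,g)$ and $(b,h)$ are covered only by $E$ contributes $-\beta_{ia}-\beta_{ib}<0$ to $U_i-U'_i$. Your plan neither excludes such components nor shows they cancel. The induction you propose (``maintaining the path structure'') is aimed at the part that is automatic. The paper does not claim that untracked edges agree across the two runs. It proves the weaker Lemma~\ref{lem:IR-tracked-path}: the \emph{number} of untracked exchanges between every pair is the same in $E$ and $E'$, which is exactly what makes their total contribution vanish. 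That lemma is where the work is. It is an induction over the pair-processing stages, using two facts: \retrospect{} maximizes the current pair's count given the earlier pairs' counts (Observation~\ref{optimality-of-RT}), and one run's untracked exchanges can be transplanted into the other (Observation~\ref{untracked-flexibility}). A discrepancy at the earliest pair then contradicts optimality. After that, the per-path values $1+\beta_{ij}$, $1-\beta_{ij}$, $2$ follow by telescoping. They are nonnegative simply because $\beta_{ij}\in(0,1)$; the increasing-$\beta$ order plays no role there.

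\textbf{NIC.} Maximal count plus ``lexicographically smallest $\beta$'s at equal count'' does not give a larger $\sum_p(1-\beta_{ij_p})$. The sorted profile $(0.1,0.9)$ under acceptance is lexicographically smaller than $(0.2,0.3)$ under deviation, yet gains $1.0$ versus $1.5$. With unequal counts, acceptance at $(0.9,0.9)$ versus deviation at $(0.1)$ (gains $0.2$ versus $0.9$) shows that ``a lost good costs $1>\beta$'' does not suffice either. What you need is the threshold form of Lemma~\ref{lem:betagood}: for every $\beta'$, accepting maximizes the number of $i$'s exchanges with partners of competition at most $\beta'$. This yields componentwise dominance of the sorted $\beta$'s. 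Its proof shows, via Claim~\ref{expanding-possession} and the failure of \retrospect{}, that every extra later exchange with $j$ must be paid for by rejecting a round-1 exchange with some $k$ with $\beta_{ik}\le\beta_{ij}$. You also ignore exchanges among other agents that later rounds schedule after $i$'s rejections. The fix is to note that every round-1 proposal not involving $i$ is still executed, so that term only grows under deviation.

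\textbf{Stability and runtime.} Stability needs Claim~\ref{expanding-possession}: a swap immediately refills the freed slot, so a scheduled good is never lost. Count preservation alone would let a demand set reopen after the pair's loop exits. With the claim, no optimality of \retrospect{} is needed. For the runtime, depth at most $|\partagents|$ with branching up to $m$ only bounds the recursion by roughly $m^{|\partagents|}$. This is because $S$ records only the current path, so failed giving agents are re-explored. The stated bound requires the giver-indexed allocation and the global pruning of failed giving agents justified by Lemma~\ref{repeating-target}.
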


\emph{Proof techniques.}
The proof of stability is by design.
For NIC,
we show that \retrospect{} always optimizes the number of exchanges an agent can get as the protocol proceeds from low to high competition, and that prioritizing exchanges with lower competition optimizes the total utility gain from these exchanges.
Thus, an agent cannot hope for higher utility from any set of alternative exchanges involving her in future rounds.
For IR, a key challenge is that an agent $i$'s participation will create knock-on effects among other exchanges \algname{} scheduled when $i$ does not participate. We construct a graph that captures the difference in exchanges when $i$ is present or absent.
Through careful analysis, we show that $i$'s utility is only affected by chains of such knock-on effects represented by certain connected components in the graph.
Furthermore, each chain contributes nonnegatively to her utility, so summing these results in nonnegative utility change for $i$ overall.

Intuitively, what \algname{} schedules after the first round is to ensure stability in case some agents may reject.
However, NIC should convince agents to follow the accepting policy.

\parahead{On achieving Pareto-efficiency}
Finally, we study when Pareto-efficiency  (\ie to increase the utility of one agent, we have to decrease the utility of another) is achievable in this problem.
For simplicity, let us assume that all agents are participating.
We first identify two (non-exhaustive) regimes characterizing the level of competition among agents.
We refer to the agents as a \emph{low externality coalition} (LEC) if, upon giving each agent one additional good they do not currently possess, all agents are made better off,
\ie $1 - \sum_{j \neq i} \beta_{ij} > 0$ for all $i \in \allagents$ (see~\eqref{eqn:utilsummary}).
Conversely, we call them a \emph{high externality coalition} (HEC) if, upon giving each agent one additional good they do not currently possess, no agent is made better off, \ie $1 - \sum_{j \neq i} \beta_{ij} \leq 0$ for all $i \in \allagents$.
Our first result below shows that, in an HEC, one cannot improve all agents' utilities via multiple exchanges from the initial allocation; hence in any stable protocol,
either at least one agent is strictly worse off, or none of the agents are strictly better off when compared to the initial allocation.

\begin{restatable}[Informal]{theorem}{pehardness}
\label{thm:pehardness}
Suppose $\allagents$ is an HEC and none of the agents possess all goods initially. Then, any other allocation $x$ does not Pareto-dominate the initial allocation $\initdata$.
\end{restatable}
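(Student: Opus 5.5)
Suppose, for contradiction, that some allocation $\data \neq \initdata$ Pareto-dominates $\initdata$: $\utilalloci(\data) \ge \utilalloci(\initdata)$ for all $i$, with strict inequality for at least one agent. I will use that allocations only grow under exchanges, so $\datai \ge \initdatai$ componentwise. (If the statement is meant over arbitrary $\data$, I would first reduce to this case, or note that the same argument below only uses the nonnegative gains defined next.) Write $g_i = \onev^\top \datai - \onev^\top \initdatai \ge 0$ for the number of goods agent $i$ gains. By~\eqref{eqn:utilsummary}, the utility change of agent $i$ is
\begin{align*}
\Delta_i = g_i - \sum_{j \neq i} \betaij g_j .
\end{align*}
Pareto dominance requires $\Delta_i \ge 0$ for every $i$, with at least one strict.

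The key step is a weighted summation argument exploiting the symmetry $\betaij = \betaji$ and the HEC condition $\sum_{j\neq i}\betaij \ge 1$. I would sum $\Delta_i$ over all agents:
\begin{align*}
\sum_i \Delta_i = \sum_i g_i - \sum_i \sum_{j\neq i} \betaij g_j = \sum_j g_j\Bigl(1 - \sum_{i \neq j} \beta_{ij}\Bigr) \le 0 .
\end{align*}
Combined with $\Delta_i \ge 0$ for all $i$, this forces $\sum_i \Delta_i = 0$, so every $\Delta_i = 0$. This contradicts the existence of a strictly improving agent.

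The remaining case, which I expect to be the main subtlety, is where the non-strict HEC inequality permits equality. In that case the sum alone gives only $\le 0$, which still yields the contradiction as above. The hypothesis that no agent holds all goods initially seems needed only to rule out degenerate situations, for instance to make the HEC condition meaningful (each agent can receive an additional good). Alternatively, it is needed if the formal definition of Pareto dominance compares against strictly improving allocations in a way that requires $\data \neq \initdata$ to imply some $g_i > 0$. I would check the formal definitions and invoke the hypothesis at exactly that point: if $\data \ne \initdata$ then some $g_i>0$, while all $\Delta_i = 0$ with every agent's gain bounded as above remains consistent only with no strict improvement. This concludes that $\data$ cannot Pareto-dominate $\initdata$.
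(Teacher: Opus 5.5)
Your argument is correct, but it takes a different route from the paper. You sum the utility changes over all agents and use the symmetry $\beta_{ij}=\beta_{ji}$ to get $\sum_i \Delta_i=\sum_j g_j\bigl(1-\sum_{i\neq j}\beta_{ij}\bigr)\le 0$. Weak improvement for everyone then forces every $\Delta_i=0$. In words: under HEC, utilitarian welfare can never rise when goods are added, so no Pareto improvement exists.

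The paper instead looks at the agent $i$ with the fewest gained goods $q$ and bounds her change directly. Since every $g_j\ge q$ and $\beta_{ij}>0$, it gets $\Delta_i\le q\bigl(1-\sum_{j\neq i}\beta_{ij}\bigr)\le 0$, strictly if some other agent gains more than $q$. If all gains equal $q$, every agent $k$ changes by $q\bigl(1-\sum_{j\neq k}\beta_{kj}\bigr)\le 0$.

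Your version is shorter and has a clean welfare interpretation. The paper's min-gain argument does not need symmetry of $\beta$: it only uses each agent's own HEC row sum, whereas you need the column sums. It also names a concrete agent who is not made better off.

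Two minor remarks. First, your last paragraph can be dropped. The equality case of the HEC inequality is already covered by your $\le 0$ bound. The hypothesis that no agent initially holds all goods is used neither by your argument nor by the paper's.

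Second, the parenthetical about reducing arbitrary $x$ to the case $x_i\ge\initdatai$ cannot work, because $g_j\ge 0$ is essential. Take three agents, each holding one distinct good, with all $\beta_{ij}=0.9$ (an HEC). Deleting every good raises each utility from $-0.8$ to $0$, a strict Pareto improvement. This is why the paper's formal statement restricts to allocations in which agents only receive additional goods relative to $\initdata$.
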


Next, even when externalities are small, stability does not necessarily imply PE.  
However, we conjecture that our protocol achieves  PE in LECs.  
We partially support this claim by proving this under the following conjecture about the protocol’s behavior:
when all agents follow the accepting policy, at least one agent receives all goods.  
Extensive simulations in Appendix~\ref{app:simulations}  suggest that this conjecture is likely true.  
Under this conjecture, we obtain the following result.

\begin{restatable}[Informal]{theorem}{clearpelec}
    \label{thm:clearpelec}
    Suppose $\allagents$ is an LEC and all agents follow the accepting policy.
    Under the above conjecture, the final allocation produced by the protocol in Theorem~\ref{thm:main} is Pareto-efficient.
\end{restatable}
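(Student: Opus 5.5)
We must prove Theorem~\ref{thm:clearpelec}: in an LEC, under the conjecture that when all agents accept at least one agent (call her $a$) ends with all $m$ goods, the final allocation $\data$ of \algname{} is Pareto-efficient. The plan is to show directly that no allocation $y$ Pareto-dominates $\data$. The argument combines the perfect agent $a$ with the LEC inequality $1-\sum_{j\neq i}\betaij>0$. This mirrors the role of Lemma~\ref{perfect-agent-PE} referenced in Figure~\ref{fig:POexample}, so I would essentially prove that lemma and then apply it to \algname{}'s output.

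\textbf{Step 1: the constraint coming from $a$.} Suppose $y$ weakly dominates $\data$. I will take feasible allocations to be those reachable by exchanges, so that $y_i\geq\data_i$ coordinatewise. If instead arbitrary $y$ are allowed, I would work only with the counts $\onev^\top y_i$, since utilities depend only on these counts. Agent $a$ already holds all goods, so $\onev^\top y_a\leq m=\onev^\top\data_a$. Write $\Delta_i=\onev^\top y_i-\onev^\top\data_i$. Then $u_a(y)\geq u_a(\data)$ gives
\[
\Delta_a-\sum_{j\neq a}\beta_{aj}\Delta_j\geq 0, \qquad \Delta_a\leq 0,
\]
and hence $\sum_{j\neq a}\beta_{aj}\Delta_j\leq 0$.

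\textbf{Step 2: a weighted sum of the other agents' constraints.} For every other agent $i$, dominance gives $\Delta_i\geq\sum_{j\neq i}\betaij\Delta_j$. I want to show all $\Delta_i=0$, which means no agent is strictly better off. The natural route is to view the system as $(I-B)\Delta\geq 0$, where $B$ has zero diagonal and entries $\betaij$. The LEC condition says every row sum of $B$ is below $1$. Because $B$ is symmetric, every column sum is also below $1$, so $I-B$ is strictly diagonally dominant. Summing the inequalities gives $\sum_i c_i\Delta_i\geq 0$ with $c_i=1-\sum_{j\neq i}\betaij>0$. This alone does not force $\Delta=0$, because some $\Delta_i$ may be negative. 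So I would use the fact that $(I-B)^{-1}=\sum_k B^k$ is entrywise nonnegative when the spectral radius of $B$ is below $1$, which holds under the row-sum bound. Then $(I-B)\Delta=w\geq 0$ gives $\Delta=(I-B)^{-1}w\geq 0$. For agent $a$ we then have $\Delta_a\geq 0$ and $\Delta_a\leq 0$, so $\Delta_a=0$. Step 1 then gives $\sum_{j}\beta_{aj}\Delta_j\leq 0$ with every $\Delta_j\geq 0$ and every $\beta_{aj}>0$, so $\Delta_j=0$ for all $j$. Feeding this back shows $w=0$, so all utilities are equal and $y$ does not strictly dominate $\data$.

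\textbf{Step 3: applying the lemma.} Invoke the conjecture (Conjecture~\ref{conj:clearperfagent}) together with Theorem~\ref{thm:main}: under the accepting policy \algname{} terminates, in one round, with an allocation containing a perfect agent. Applying Steps 1--2 to this allocation yields PE.

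The main obstacle is Step 2. One must check carefully that the nonnegativity of $(I-B)^{-1}$ is justified by the LEC condition, via a Neumann series bound in the $\ell_\infty$ norm, and that strictness of domination propagates correctly. If the feasible set were not restricted, one would also need to confirm that counts alone capture all feasible $y$. That is harmless here, since any count vector with entries at most $m$ can be realized.
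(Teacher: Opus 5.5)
Your proof is correct, but it takes a different route from the paper's.

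\textbf{The paper's route.} The paper derives the theorem from the sufficiency half of Lemma~\ref{perfect-agent-PE}. Its core claim is that, in an LEC, if $x$ Pareto-dominates $x'$ then every agent holds strictly more goods in $x$. This is proved by an extremal argument. Pick the agent $k$ with the largest loss $q\ge 0$ from $x'$ to $x$. Subtract $q$ from every agent's count in $x'$ to get $x''$; by the LEC inequality nobody gains. Then $k$ has the same count in $x$ and $x''$ while everyone else has weakly more, so $k$ is weakly worse off, strictly unless the count vectors coincide. A perfect agent cannot gain a good, so nothing dominates.

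\textbf{Your route.} You write weak dominance as $(I-B)\Delta=w\ge 0$. You invert via the Neumann series using $\rho(B)\le\|B\|_\infty<1$. The perfect agent and $\beta_{aj}>0$ then force $\Delta=0$.

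\textbf{What each buys.} The paper's argument is fully elementary. Yours is shorter given standard M-matrix facts, and it shows more.
\begin{itemize}
\item It uses only $\rho(B)<1$, which is strictly weaker than the row-sum LEC condition. In Example~\ref{neither-LEC-nor-HEC}, agent $i$'s row sum is $1.04$, yet $\rho(B)\approx 0.54$. So the sufficiency direction survives outside LECs.
\item Since every $\beta_{ij}>0$, you have $(I-B)^{-1}\ge I+B$, which is entrywise positive. This recovers the paper's strict claim in one line: $\Delta_i>0$ for all $i$ whenever $w\ge 0$ and $w\neq 0$.
\end{itemize}

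\textbf{One small correction.} Drop the option of restricting to $y\ge x$ coordinatewise. The paper's Pareto-efficiency quantifies over all allocations. Under that restriction the statement becomes trivial even without the LEC, because $\Delta_a=0$ and $\Delta\ge 0$ immediately make $a$ weakly worse off. Your count-level argument never uses the restriction, so it already covers arbitrary $y$. The closing remark about realizability of count vectors is likewise unnecessary for this direction.
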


\subsection{Related work}

To the best of our knowledge, no prior work directly addresses the problem of sharing
freely replicable goods in the presence of negative externalities without money.

\parahead{Sharing irreplicable goods}
Classical economic theory has extensively studied the exchange of irreplicable (regular) goods, particularly in the contexts of barter economies and market
design~\citep{varian1973equity,walras1900elements,fisher1891mathematical}. Similar to our
model, here agents initially possess goods, and engage in exchanges to
achieve mutually beneficial outcomes.
Mechanism design for the exchange of irreplicable goods has been explored in various
settings,
including stable matching~\citep{gale1962college,roth1990two},
one-sided matching (a.k.a. house
allocation, trading agents)~\citep{shapley1974cores,roth1982incentive}, kidney
exchange~\citep{roth2004kidney}, and resource
allocation~\citep{parkes2015beyond,freeman2018dynamic,ghodsi2011dominant,gutman2012fair}.
These works aim to incentivize agents to truthfully reveal their preferences.
In both lines of work above,
an agent's possession of a good does not impose externalities on
others. 
As discussed above, the presence of negative externalities introduces challenges absent in classical models.

\parahead{Sharing irreplicable goods with externalities}
Some study fairness in exchanging irreplicable goods with
externalities~\citep{velez2016fairness,branzei2013externalities,seddighin2019externalities,treibich2019welfare},
but they do not extend to freely replicable goods or address incentives for participation and
pro-social behavior.

\parahead{Data sharing}
Recent work has explored mechanisms for sharing data among strategic, noncompetitive agents, focusing on preventing free-riding~\citep{blum2021one,karimireddy2022mechanisms,chen2023mechanism,cai2015optimum,clintoncollaborative}.
However, these models do not consider externalities. 
Finally, some study data sharing for mean estimation in competitive setting; they assume  that all agents have similar competing levels with each other, and that the level of competition is low, hence avoiding many of the challenges we encounter in this work~\citep{dorner2023incentivizing}.

\parahead{Auctions and data marketplaces with externalities}
A separate line of work in mechanism design studies auctions that incorporate externalities among participants.
Some examine positive externalities in a data auction setting and propose competitive auctions where a buyer's valuation increases with the set of other winning buyers \citep{GravinLu2013}. 
Others look at negative externalities where a bidder incurs a loss if other bidders win, or bidders form a conflict graph \citep{Brocas2013, BelloniDengPekec2017, CheungHenzingerHoeferStarnberger2015}.
Others study data marketplaces where buyers impose externalities on other buyers~\citep{agarwal2020towards,chen2023equilibrium,branzei2013externalities}.
These works differ fundamentally from our work, in both formulation and reliance on monetary incentives.

\section{Problem Setting}
\label{sec:setup}

We will now formally describe the problem set up.

\parahead{Problem instance}
There is a finite set of agents $\allagents$ and $m$ freely replicable goods. Agent $i$ is initially in possession of a subset of these goods. Let
$\initdatai \in \{0, 1\}^m$ represent $i$'s
\emph{initial allocation}, 
with $\initdatair = 1$ meaning that she initially has good $r$.
Multiple agents may initially possess the same good.
Denote $\initdata = \{\initdatai\}_{i\in \allagents}$.
We model competition between agents via symmetric parameters
$\beta= \{\betaij\}_{i,j \in \allagents}$,
where $\betaij = \betaji \in (0, 1)$ is the \emph{competition level} between $i$ and $j$; higher $\betaij$ means $i$ and $j$ are more competitive.
An instance of this problem is given by $(\initdata, \beta)$.

\parahead{Exchange protocol}
A central planner (an altruistic third party) designs and publishes an \emph{exchange protocol} (a.k.a. mechanism) to facilitate the exchange of goods.
A subset of agents $\agsubset\subset\allagents$ choose to participate in the protocol.
Then, the initial allocations $\{\initdatai\}_{i\in\partagents}$
and competition levels $\{\betaij\}_{i,j\in\partagents}$ of participating agents
are \emph{revealed truthfully} to the planner.
As we will show via Theorem~\ref{thm:truthfulreporting},  eliciting the initial allocation $\initdata$ truthfully from the agents is impossible in this problem.

The protocol
operates over a sequence of rounds $t=1,2,\dots$.
On each round, the planner proposes a set of pairwise exchanges between participating agents,
where two agents give copies of
a good that they have to each other, in exchange for one that they do not.
Agents can choose which proposals to accept and reject, with the exchange occurring only if both agents accept.
At the end of each round, the agents exchange goods for accepted exchanges.
In the next round, the planner will  propose a new set of exchanges based on the history of exchanges so far.
She will continue in this fashion, up until she chooses to terminate.
We will now define this procedure formally.

\subparahead{Pairwise exchanges}
A pairwise \emph{exchange proposal} is 
a tuple $p = ((i, r), (j, s))$ where $i, j \in \agsubset$ are two agents and $r, s\in[m]$ are goods.
This proposal is a recommendation by the planner that agent $i$ gives a \emph{copy}
of  good $r$ to agent $j$ and in return, agent $j$ gives a \emph{copy} of good $s$ to agent $i$.
When the planner proposes $p$, each agent can decide whether to accept or reject the proposal.
An exchange occurs only if both agents accept the proposal.
As agents share copies of the goods,
after an exchange, each agent has the good they gave as well as the good they received.

The proposals by the planner and the accept/reject decisions by the agents result in a sequence
of allocations
$\initdata, \datatt{1}, \datatt{2}, \dots$, where
$\initdata$ is the initial set of goods, and
$\datat = \{\datati\}_{i\in \allagents}$,
where $\datati \in \{0,1\}^{m}$ specifies which goods agent $i$
has at the end of round $t$.
In particular, if $\datatmoir=1$ we set $\datatir=1$;
moreover, for every proposal $((i,r),(j,s))$ that was accepted by both agents on
round $t$, we also set
$\datat_{i,s} = \datat_{j,r}=1$.
For every other $i,r\in \agsubset\times[m]$, we have $\datatir = 0$.
If an exchange $((i,r),(j,s))$ is executed on round $t$ when $i$ already possesses $s$, \ie $\datatmo_{i,s}=1$, or if $i$ receives $s$ through multiple exchanges in round $t$, then $i$ retains a (single) copy of the good, \ie $\datat_{i,s}=1$.
Non-participating agents do not gain new goods, \ie for all $i\in \allagents \backslash \agsubset$,
we have $\datati =\initdatai$ for all $t$.

\subparahead{Feasible exchange proposals}
In this work, we only consider exchanges where the agent giving a good originally had the good in $\initdata$, 
\ie she cannot give a good she received from one agent to another to obtain more goods (see discussions in~\S\ref{sec:summarymodel} and Appendix~\ref{sec:discussion}).
Hence, we can write the set of feasible proposals
$\feasprops(\agsubset)$ 
as shown below
in~\eqref{eqn:feasibleexchanges}.
For what follows, we also define
$\feaspropsp(\data,\agsubset)$ to be those exchanges in a given state $\data\in\{\datat\}_{t\geq 0}$,
where neither agent
receiving a good has previously received the good from someone else:
\begingroup
\allowdisplaybreaks
\begin{align*}
    \feasprops(\agsubset) &= \Big\{((i, r), (j, s)); \;\;
      i, j\in\agsubset,\;\; r, s\in[m], \;\;\,
    \initdata_{i, r} = \initdata_{j, s} =  1
    \Big\},
    \\
    \feaspropsp(\data, \agsubset) &= \Big\{((i, r), (j, s)) \in \feasprops(\agsubset); \;\;
    \data_{i, s} = \data_{j, r} =  0
    \Big\}.
    \numberthis
    \label{eqn:feasibleexchanges}
\end{align*}
\endgroup

\subparaheadwnogap{Exchange protocol}
An exchange protocol $\mech=(\mecht)_{t\geq 1}$
is a multi-round policy which outputs a set of feasible exchange proposals on each round.
Let $\Htmo$ denote the history of exchanges proposed by the planner and the associated
decisions by agents up to round $t-1$.
On round $t$, the protocol maps this history to a set of feasible exchanges $\Pt$ which the planner proposes to the agents,
\ie $\Pt\defeq M_t(\Htmo, \agsubset) \subset \feasprops(\agsubset)$.
We do not allow
a protocol to propose exchanges which have been previously rejected by agents (see footnote~\ref{fn:cannotproposerejected}).
The protocol terminates when the planner proposes $\Pt=\emptyset$.
Let $\datafinal$ denote the final allocation at termination.

\parahead{Agent strategy}
An agent $i$'s strategy $\strat_i$ describes whether she chooses to accept or reject proposals.
At the outset, she knows $\partagents$, which agents are participating.
On round $t$, she knows
  the history of previous exchange proposals and decisions $\Htmo$,
and all exchange proposals  $\Pt$ in the current round,
including any not involving her.
Then $\strat_i(p, \agsubset, \Pt, \Htmo)\in\{0,1\}$
specifies, for every exchange proposal $p\in\Pt$ \emph{involving} $i$, whether she accepts
($\strat_i(p, \agsubset, \Pt, \Htmo) = 1$) or
rejects ($\strat_i(p, \agsubset, \Pt, \Htmo)=0$) it.
When $\agsubset, \Pt, \Htmo$ are clear from context, we will write
$\strat_i(p) = \strat_i(p, \agsubset, \Pt, \Htmo)$.
Let $\stratpart = \{\strati\}_{i\in\agsubset}$ denote the strategy profile of all participating agents,
and $\stratpartmi = \{\stratj\}_{j\in\agsubset\backslash\{i\}}$ denote everyone's strategies 
except $i$.

\subparahead{Accepting policy}
The \emph{accepting policy} $\stratacci$ always accepts all proposals, 
\ie $\stratacci(p, \agsubset, \Pt, \Htmo) = 1$
for every proposal $p\in\Pt$ involving $i$, and for any $\agsubset$, $\Pt$ and $\Htmo$.
Denote $\strataccpart = \{\stratacci\}_{i\in\partagents}$.

\parahead{Agent utility}
For an allocation $\data = \{\datai\}_{i \in \allagents}$, agent $i$'s utility
$\utilalloci(\data)$ (defined in~\eqref{eqn:utilsummary}) is the number of goods she holds minus the number of goods held by
other agents scaled by competition factors.
The final allocation $\datafinal = \datafinal(M, \agsubset, \stratpart)$ depends on the
exchange protocol $M$, which agents are participating $\agsubset$, and the agents' strategies $\stratpart$.
Hence, we can write the utility $\utili(M, \agsubset, \stratpart)$ of an agent $i$  in a protocol $M$ under strategy profile $\stratpart$ as
$\utili(M, \agsubset, \stratpart) \defeq \utilalloci(\datafinal(M, \agsubset, \stratpart))$.
Observe that the utility 
accounts for the goods in possession of both participating and non-participating agents.
Moreover, $\utili$ is also defined for both participating 
and non-participating agents. These are necessary to formalize the IR and NIC
requirements in competitive settings.

\subsection{Desiderata for an exchange protocol}
\label{sec:desiderata}

We aim to design a protocol that satisfies  three key desiderata:
\emph{(i) Incentive compatibility:} agents are incentivized to share as much as possible, \ie follow the accepting policy;
\emph{(ii) Individual rationality:} no agent is worse off by participating;
\emph{(iii) Stability:} after the protocol terminates, agents have no incentive to engage in further exchanges among themselves.
While classical models for exchange~\citep{gale1962college,shapley1974cores,varian1973equity} also pursue these three goals, they must be reformulated to address new challenges arising from agent competition and the free replicability of goods.

\parahead{Incentive-compatibility}
In order to incentivize agents to share as much as possible, we would like all agents to follow the accepting policy $\stratacci$.
One common way to formalize this is to require a protocol be
\emph{dominant-strategy incentive compatible (DSIC)}, \ie $\stratacci$ is the best strategy
for each agent regardless of who is participating and the strategies followed by those
participating.
Formally, for all $\agsubset\subset \allagents$, $i\in\agsubset$, $\strati$
and $\stratpartmi$, we require
$\utili(M, \agsubset, (\stratacci, \stratpartmi)) \geq \utili(M, \agsubset,
(\strati,\stratpartmi))$.
Unfortunately, as we will demonstrate via Theorem~\ref{thm:nodsic} below, there is no nontrivial DSIC protocol for this problem.
Hence, we resort to designing a protocol where following the accepting strategy is the best
strategy for an agent when other agents also follow the accepting strategy.

\begin{desideratum}
[(Nash) incentive-compatibility (NIC)]
\label{des:nic}
All agents following the accepting strategy is a Nash equilibrium of $M$,
\ie for any set of participating agents $\agsubset\subset \allagents$,
for all $i\in\agsubset$, 
and all alternative strategies $\strati$ for $i$, we
have
$\utili(M, \agsubset, (\stratacci, \strataccpartmi)) \geq
\utili(M, \agsubset, (\strati, \strataccpartmi))$.
\end{desideratum}

\parahead{Individual rationality (IR)}
The standard IR definition in mechanism design is to require that each agent be no worse off after participation
when compared to the initial state, \ie $\utili(M, \agsubset, \strataccpart) \geq \utilalloci(\initdata)$.
However, the example in Figure~\ref{fig:irillus} illustrates why such a requirement is problematic in our setting.
A more sensible IR requirement, defined below, is that
an agent's utility when participating with the accepting policy is higher than when
she does not participate, but others do.

\begin{desideratum}
[Individual rationality (IR)]
\label{des:ir}
Suppose $\agsubset\subset \allagents \backslash\{i\}$ participate with the accepting strategy.
Then, $i$'s utility by participating with the accepting strategy is no worse than
when she does not participate.
That is, for all $\agsubset\subset\allagents\backslash\{i\}$, we have
$\utili(M, \agsubset\cup\{i\}, (\stratacci, \strataccpart))
\geq \utili(M, \agsubset, \strataccpart)$.
\end{desideratum}

A few remarks are in order:
\emph{1) IR is not trivial:} While each exchange benefits the agents involved, IR does not follow automatically. In~\S\ref{sec:IR-sketch}, we show 
that by not participating, an agent can alter the exchanges proposed by a planner, and thereby block more competitive agents from receiving additional goods, which could increase her utility.
\emph{2) NIC $\centernot\implies$ IR:} Non-participation differs from participating and rejecting all proposals, as the  exchanges proposed by a planner depends on which agents are present. Thus, NIC does not imply IR.
\emph{3) IR regardless of others' strategies:} Requiring that an agent benefits regardless of others’ strategies is also too strong here. For instance, in Figure~\ref{fig:irillus}, suppose $i$ follows a pathological strategy: accepting all proposals when $k$ participates, and rejecting all when $k$ does not. In that case, $k$ is better off not participating. Our definition ensures that others behave sensibly regardless of an agent's participation.

\parahead{Stability}
In classical models for exchanging goods~\citep{shapley1974cores,gale1962college,varian1973equity}, stability means that upon termination, no two agents can gain from further exchanges.
We formalize a similar notion here.
Note that $\feaspropsp(\data, \partagents)$ contains
precisely the set of exchanges that are mutually beneficial to both agents from $\data$:
any exchange $((i,r),(j,s))$ in $\feaspropsp(\data, \agsubset)$ will increase both $i$ and $j$'s utilities by $1-\betaij>0$  (see~\eqref{eqn:feasibleexchanges}).

\begin{desideratum}[Stability]
\label{des:stability}
An allocation $\data = \{\datai\}_{i\in \allagents}$ is stable for a subset $\agsubset\subset \allagents$
if $\feaspropsp(\data, \agsubset) = \emptyset$, \ie no two agents in $\agsubset$ can
exchange goods to increase their utilities. A protocol is stable if, upon termination, any such exchange $p\in\feaspropsp(\datafinal(M, \agsubset, \stratpart), \agsubset)$ has
already been rejected. Thus, if all agents follow the accepting policy, the terminal allocation 
$\datafinal(M, \agsubset, \strataccpart)$ is stable for $\agsubset$.
\end{desideratum}

\subsection{Impossibility Results}
\label{sec:impossibilityresults}

We conclude this section with three impossibility results
to support the above model.

\parahead{NIC vs DSIC protocols}
The following result shows that no nontrivial DSIC protocols are possible in this problem. A proof sketch is given in~\S\ref{sec:nodsic-proof}. This motivates our focus on NIC.

\begin{restatable*}{theorem}{nodsic}
    \label{thm:nodsic}
    There exists a problem instance $(\initdata, \beta)$ on which no protocol can
    simultaneously satisfy DSIC and stability.
\end{restatable*}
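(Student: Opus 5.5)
The plan is to build a small instance on which every stable protocol has a finite, explicit game-tree description. On each such tree I would exhibit an agent and a profile of the others' strategies under which rejecting some proposal does strictly better than the accepting policy.

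\emph{The instance.} I would take four agents $i,j,k,l$ and two goods, with $\initdata_i=\initdata_k=[1,0]$ and $\initdata_j=\initdata_l=[0,1]$. The competition levels are ordered
$\beta_{ij}<\beta_{ik}<\beta_{il}<\beta_{jk}<\beta_{jl}<\beta_{kl}$, with the gaps chosen generically so that no utility ties arise. Then the only feasible exchanges are the four ``cross'' exchanges $ij$, $il$, $kj$ and $kl$, in each of which good~1 is traded for good~2. Each agent can gain at most one good, so an agent has no use for a second exchange.

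\emph{Reducing protocols to trees.} Since rejected exchanges may never be re-proposed and there are only four feasible exchanges, every protocol on this instance can be drawn as a finite tree. Each node is a set of proposals, each branch is one accept/reject outcome, and the depth is at most four. Stability rules out trees that terminate while an unrejected mutually beneficial exchange remains. Concretely, whenever both endpoints of an unrejected cross edge still lack the good they would receive, the tree must keep going. Without loss of generality I would also assume that no round proposes an exchange in which an agent receives a good she is already getting in that round, since such a proposal could be removed without changing anything. This leaves a finite family of trees, which I would sort by their first-round proposal set.

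\emph{Defining an inversion.} For each agent, the utility comparison is simple. Agent $i$ gains $1-\beta_{ip}$ from trading with partner $p$, and she loses $\beta_{iq}$ for every good another agent $q$ acquires. So agent $i$ strictly prefers to be matched with her least competitive available partner, and she also prefers that her competitors trade less. I would call a tree node an \emph{inversion pair} if some agent is offered a partner while a strictly less competitive partner is still available to her later in the tree. At such a node, the agent rejects. I would then fix the other agents' strategies, possibly non-accepting ones, which DSIC allows, so that the later branch stability forces actually gives her the better partner. For example, if round~1 proposes $\{kl\}$ or $\{il,kj\}$, then $k$ (respectively $i$) rejects in order to obtain $kj$ (respectively $ij$) later; the strategies of $i$ and $j$ are chosen so that stability forces this outcome. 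Case analysis of this kind should show that every tree except one contains an inversion pair. The surviving tree should be the sequential-ascending protocol, which proposes single exchanges in increasing order of $\beta$, so $ij$ first.

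\emph{The remaining case, which I expect to be the main obstacle.} For the sequential-ascending tree I would use the blocking effect rather than partner choice. Take the others' strategy profile to be one where, for instance, $j$ rejects $ij$. I would then show that some agent, say $i$, does better by rejecting $il$. Rejecting causes her competitors $l$ and $k$ to receive fewer goods, so the overall sharing drops, and with the chosen $\beta$ gaps this outweighs the $1-\beta_{il}$ she forgoes. The delicate part is choosing the $\beta$ values and the deviating profiles so that both the inversion-pair cases and this final case all go through simultaneously. The enumeration of trees must also be verified to be exhaustive. If two goods prove too few to make the final case work, I would add a third good to create the needed asymmetry.
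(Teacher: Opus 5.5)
Your overall plan (trees, inversion pairs, a single surviving protocol, then one last deviation) is the paper's plan, but two steps fail on the instance as you set it up. First, your ordering has $\beta_{il}<\beta_{kl}$, so $l$ prefers $i$ to $k$. Once $ij$ is rejected, $il$ and $jk$ share no agent, and each is the best remaining option for both of its endpoints. So proposing $il$ then $jk$, $jk$ then $il$, or $\{il,jk\}$ in one round (with $kl$ last) are all inversion-free, and your claim that only the sequential-ascending tree survives is false. The paper instead takes $\beta_{ij}<\beta_{jk}<\beta_{kl}<\beta_{il}$, increasing around the cycle $i$--$j$--$k$--$l$--$i$. Then $l$ prefers $k$, so $il$ may not be an ancestor of $kl$ and $kl$ may not be an ancestor of $jk$; this forces $\{jk\}$ alone after $ij$ is rejected and makes the tree unique. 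A smaller point: you cannot assume without loss of generality that duplicate-receipt proposals are deleted, since that changes the protocol. You must show such protocols are not DSIC. For example, if $ij$ and $kj$ are offered together and $l$ rejects everything, $j$ strictly gains by accepting only one of them.

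Second, the final deviation you sketch does not work. Suppose $j$ rejects $ij$ and the others otherwise accept. If $i$ then rejects $il$, the protocol next proposes $jk$, so $k$ still obtains good $2$. The only effects of $i$'s rejection are that she forgoes good $2$ and $l$ forgoes good $1$, a net loss of $1-\beta_{il}>0$ for every $\beta$. To salvage it you would need history-dependent strategies under which both $jk$ and $kl$ fail after $i$ rejects, together with $\beta_{ij}+\beta_{ik}+\beta_{il}>1$. The missing idea is to make $j$ the deviator and $i$ the spoiler. Let $i$ accept $ij$ but reject every other proposal, and let $k,l$ accept everything. If $j$ accepts, the outcome is $\{ij,kl\}$, worth $1-\beta_{ij}-\beta_{jk}-\beta_{jl}$ to her. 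If $j$ rejects $ij$ and accepts later proposals, only $jk$ occurs, because $i$ blocks $il$ and $kl$ is no longer beneficial to $k$. That is worth $1-\beta_{jk}$, strictly more for every $\beta$. This is exactly the paper's last step, and it also rescues your instance. There, every inversion-free tree has root $\{ij\}$, and to avoid the inversion $(k,j,l)$ it must propose $jk$ before $kl$ after $ij$ is rejected. So this single deviation defeats all surviving trees at once, and no third good is needed.
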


\parahead{Initial allocations are revealed truthfully}
Our model assumes that the initial allocations of participating agents are truthfully revealed.
As noted in the real-world examples in~\S\ref{sec:intro}, this reflects practical cases where such information is publicly known.
This assumption is further motivated by Theorem~\ref{thm:truthfulreporting} below, which establishes that truthfully eliciting $\initdata$ is impossible in our setting.

\subparahead{A modified setting}
To state this result formally,
let $\inityi \in \{0,1\}^m$ denote agent~$i$'s \emph{true} initial allocation, and denote $\inity = \{\inityi\}_{i\in\allagents}$.
When the planner must rely on agents to report their initial allocations, the interaction proceeds as described below.

First, the planner designs and publishes an exchange protocol.
Each agent~$i$ then reports $\initdatai$ representing $\inityi$ (not necessarily truthfully).
When reporting, we assume that an agent may choose to hide a good she has, but cannot claim to have a good she does not, since she would be unable to give it to another agent during an exchange.
Formally, if $\inityir=0$, then she must report $\initdatair=0$; however, if $\inityir=1$, she may choose to untruthfully report $\initdatair=0$.
Based on the reported $\initdata$, the planner then proposes exchanges over multiple rounds, as described in~\S\ref{sec:setup}.
If an agent chooses to hide a good she has and receives the good from another agent, she simply retains one copy.

In this setting, we say a protocol is stable if it
satisfies Desideratum~\ref{des:stability} with the reported allocations $\initdata$.
We have the following theorem,
whose proof sketch is given in~\S\ref{sec:truthfulreporting-proof}.

\begin{restatable*}{theorem}{truthfulreporting}
\label{thm:truthfulreporting}
There exists a problem instance $(\inity, \beta)$ such that for any stable exchange protocol, all agents truthfully reporting the initial allocation and accepting all exchange proposals does not constitute a Nash equilibrium.
\end{restatable*}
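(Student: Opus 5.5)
The plan is to exhibit a small explicit instance $(\inity,\beta)$ and show that any stable protocol leaves some agent a profitable unilateral deviation, namely hiding a good and then accepting everything. The base gadget has three agents and two goods, with true allocations $\inity_i=[1,1]$, $\inity_j=[1,0]$, $\inity_k=[0,1]$. Under truthful reporting, $i$ can neither give nor receive anything beneficially, since she already owns every good. The only exchange in $\feaspropsp$ is $((j,1),(k,2))$, so stability forces it to occur when everyone accepts. Agent $i$ then ends with utility $2-2\betaij-2\betaik$.

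Next, consider the deviation in which $i$ hides good 2, so the reported profile is $R=(i:[1,0],\ j:[1,0],\ k:[0,1])$. In $R$, the mutually beneficial exchanges are $((i,1),(k,2))$ and $((j,1),(k,2))$, and they compete: after either is executed, $k$ holds good 1 and the other leaves $\feaspropsp$. Since the protocol is stable, when all agents accept it must end with exactly one of them executed. If $k$'s partner is $i$, then $i$ receives a good she already owns (she keeps one copy), $k$ receives good 1, and $j$ gets nothing. Her utility becomes $2-\betaij-2\betaik>2-2\betaij-2\betaik$, so the deviation is strictly profitable.

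The hard case is a protocol that, on $R$, always pairs $k$ with the truthful-looking $j$. It cannot be settled inside a single reported profile, because the protocol sees the reports and may break ties adversarially. My first attempt is to enlarge the instance so that the protocol faces several reported profiles generated by different unilateral hidings, none of which it can resolve safely. One option is for $i$ to hide good 1 instead, which places $j$ in competition between $i$ and $k$ for good 2. Another is to add a fourth agent $l$ with $\inity_l=[1,1]$, who can play the same hiding game against $k$ and $j$. Choosing $\beta$ with distinct values (e.g.\ $\betaij<\betaik<\betajk<\dots$) should let me argue via a short case analysis on which partner each competing exchange is resolved to.

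The main obstacle is ensuring the cases are exhaustive, i.e.\ that the protocol cannot route every misreported profile to the harmless exchange. If the pure symmetry argument does not close it, I will use the protocol's restriction against re-proposing rejected exchanges. The deviator can hide a good and also reject an early proposal. The protocol is then left only with exchanges that benefit the deviator's competitors less, and stability alone forces the protocol's hand.
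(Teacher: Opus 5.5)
\paragraph{Verdict: genuine gap.} Your first case is correct and matches the paper. On the reported profile $R=([1,0],[1,0],[0,1])$, if the protocol pairs $k$ with $i$, even with some probability $\alpha>0$, then hiding good 2 in the true instance $([1,1],[1,0],[0,1])$ gains $i$ an expected $\alpha\beta_{ij}$.

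The hard case, however, is left open, and the remedies you list cannot close it inside that instance. From a fixed true allocation, every unilateral misreport produces a distinct reported profile, so the protocol may resolve each one against the deviator. Concretely, take the following protocol.
\begin{itemize}
\item On $R$ and on $([0,1],[1,0],[0,1])$ (where $i$ hides good 1), it first proposes $((j,1),(k,2))$, and proposes the exchange involving $i$ only if that one is rejected.
\item On every other profile it runs CLEAR.
\end{itemize}
This protocol is stable. Whatever $i$ reports, the only exchange executed when $j$ and $k$ accept is $((j,1),(k,2))$, so neither hiding nor rejecting changes her utility. Agents $j$ and $k$ lose $1-\beta_{jk}>0$ by hiding or rejecting. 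So truthful reporting plus accepting is an equilibrium on your instance.

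Your assertion that stability forces exactly one of the two exchanges at $R$ is also false. A protocol may propose both, and when everyone accepts both execute, which again leaves $i$ no better off.

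\paragraph{The missing idea.} The protocol sees only reports, so it must treat the single profile $R$ identically across three true instances.
\begin{itemize}
\item When $R$ is itself the truth, Nash incentive compatibility forces the protocol to execute only one of the two exchanges. Otherwise $k$ would profitably reject one of them.
\item When the truth is $([1,1],[1,0],[0,1])$, $R$ is $i$ hiding good 2.
\item When the truth is the mirror instance $([1,0],[1,1],[0,1])$, $R$ is $j$ hiding good 2. If $\alpha=0$, then $j$ is paired with $k$, blocks $i$, and gains $\beta_{ij}$.
\end{itemize}
Hence whichever way the protocol resolves $R$, some agent profits from hiding.

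Note that this argument picks the witness instance only after the protocol is fixed. The paper's proof refutes the hypothesis that truth-telling and accepting is an equilibrium for \emph{every} true allocation. It does not produce a single protocol-independent instance, which is what your plan insists on, and the protocol above shows that your gadget cannot supply one.
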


\parahead{Multi-round vs single-round protocols}
We focus on multi-round protocols since a protocol that only proposes exchanges once cannot be simultaneously stable and NIC.
To state this formally, we say that $\mech$ is a single round protocol if
it always chooses some $\Pone\subset\feasprops(\partagents)$
and then terminates on round 2 with $\Ptwo = \emptyset$.
We have the following observation,
whose proof is given in Appendix~\ref{sec:singleround}.

\begin{restatable*}{observation}{singleround}
\label{obs:singleround}
    Let there be $m=2$ goods and three agents.
    Then, there exists $\initdata$ such that, for all competition levels $\beta$, no single-round protocol can simultaneously satisfy stability and NIC.
\end{restatable*}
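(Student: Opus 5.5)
We prove Observation~\ref{obs:singleround} with one explicit instance on which every stable single-round protocol is forced to propose the same set $\Pone$. We then exhibit a profitable unilateral deviation from the accepting policy.

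\emph{The instance.} Take three agents $i,j,k$, all participating, and $m=2$ goods, with initial allocations $\initdata_i=[1,0]$ and $\initdata_j=\initdata_k=[0,1]$. Let $\beta$ be arbitrary. From~\eqref{eqn:feasibleexchanges}, the only feasible exchanges that can change the allocation are $p_j=((i,1),(j,2))$ and $p_k=((i,1),(k,2))$. Both lie in $\feaspropsp(\initdata,\partagents)$. No exchange between $j$ and $k$ is feasible, since they hold exactly the same goods. Any other element of $\feasprops(\partagents)$ gives the receiving agent a good she already holds, so it never affects the final allocation.

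\emph{Stability forces $\{p_j,p_k\}\subseteq\Pone$.} I read Desideratum~\ref{des:stability} as required for every strategy profile: at termination, every exchange in $\feaspropsp(\datafinal,\partagents)$ must already have been rejected. A single-round protocol terminates after round 1, so only exchanges proposed in $\Pone$ can ever have been rejected.
\begin{itemize}
\item Suppose $p_j\notin\Pone$. Consider the profile in which every agent rejects every proposal. Then $\datafinal=\initdata$, and $p_j$ is still in $\feaspropsp(\initdata,\partagents)$ but was never rejected. So the protocol is unstable.
\item Symmetrically, stability requires $p_k\in\Pone$.
\end{itemize}
Hence every stable single-round protocol proposes both $p_j$ and $p_k$ in round 1.

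\emph{Profitable deviation for $i$.} Assume all three agents accept.
\begin{itemize}
\item Agent $i$ receives good 2 (she keeps a single copy even though it arrives twice), and both $j$ and $k$ receive good 1.
\item By~\eqref{eqn:utilsummary}, her utility is $2-2\beta_{ij}-2\beta_{ik}$.
\end{itemize}
Now suppose $i$ instead rejects $p_j$ and accepts $p_k$, while $j$ and $k$ keep accepting.
\begin{itemize}
\item Agent $i$ still receives good 2, from $k$, and $j$ stays at $[0,1]$.
\item Her utility is $2-\beta_{ij}-2\beta_{ik}$, which is strictly larger because $\beta_{ij}>0$.
\end{itemize}
So the accepting profile is not a Nash equilibrium, and NIC fails. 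This holds for all $\beta$, since the argument uses only $\beta_{ij}>0$.

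The only delicate step is justifying that stability pins down $\Pone$. That step relies on stability being required under arbitrary strategy profiles, not just under the accepting one. Without this, a protocol proposing only $\{p_j\}$ would be stable under acceptance, and the argument would break. So the proof should state this reading of Desideratum~\ref{des:stability} explicitly. The phrase ``upon termination, any such exchange has already been rejected'' is quantified over the strategy profile $\stratpart$, which supports this reading.
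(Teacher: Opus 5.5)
Your proposal is correct and matches the paper's proof. The paper uses the same instance up to relabeling: one agent holds the only copy of one good, and the other two hold the other good. It rules out proposing one or neither of the two exchanges because a rejection then leaves an unrejected mutually beneficial exchange at termination, and it shows that when both are proposed, the agent with the scarce good strictly gains by accepting only one. Your explicit point that stability must be read over all strategy profiles is the same one the paper makes in Remark~\ref{rem:singleroundoutside}.
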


That said, it is worth pointing out here that our multi-round protocol will terminate in a single round when all agents follow the accepting policy.

\subsection{Proof sketch of Theorem~\ref{thm:nodsic}}
\label{sec:nodsic-proof}

Consider 
the following instance with 4 agents $i, j, k, l$ and 2 goods.
The initial allocations are $\initdatai = \initdatak = [1, 0]$, and $\initdataj = \initdatal = [0, 1]$.
Let $\beta$ be such that $\beta_{ij} < \beta_{jk} < \beta_{kl} < \beta_{il}$.
In this example, we can refer to exchanges by only agent names without any ambiguity. For example, the exchange 
where $i$ gives good $1$ to $j$ and $j$ gives good $2$ to $i$ can be written $ij$ instead of $((i, 1), (j, 2))$.

\parahead{Representing protocols as trees}
We can represent each proposal via the following tree:
the root node lists the exchanges proposed in the first round, and each child branch corresponds to the agents’ accept/reject decisions.
Subsequent nodes record the protocol’s next proposals given those decisions.
The leaves represent the final allocations, based on the proposals and decisions.
The trees have finite height since
any protocol must terminate in a finite number of rounds (as it cannot re-propose previously rejected exchanges).
These trees are illustrated in Figure~\ref{decision-tree} in Appendix~\ref{sec:impossibility}.

\parahead{Protocols with poorly sequenced exchanges}
The first key insight in this proof is that if a protocol proposes exchanges in an order that \emph{inverts} an agent's preferences---offering an exchange with a more competitive agent before a less competitive one---then this agent can benefit by strategically rejecting the first proposal if other agents follow certain strategy profiles.
Formally, we define an \emph{inversion pair} $(i',j',k')$ if $\beta_{i'j'} < \beta_{i'k'}$ (so $i'$ prefers to trade with $j'$ over $k'$) but $i'k'$ is an ancestor of $i'j'$ in the corresponding tree.
Note that $i'j'$ is only feasible if $i'k'$ is rejected.

We show that the existence of an inversion pair implies a beneficial deviation from the accepting strategy for $i'$.
In particular, suppose $j'$ and $k'$ accept all their proposals, while the remaining agent $l'$ who is not a part of the inversion pair rejects all proposals. 
Since $l'$ rejects all, we know that any exchanges involving $l'$ will not happen. If $i'$ knows that, she should reject the earlier $i'k'$ and wait to accept the later $i'j'$ so that the only accepted exchange is $i'j'$. If she follows the accepting policy instead, the only accepted exchange will be $i'k'$, which is less favorable for her.
This violates DSIC as accepting all proposals is not optimal for $i'$ \emph{regardless} of others' strategies.

\parahead{The remaining protocol}
We then prove that there is only one remaining stable protocol which does not admit an inversion pair on this instance, which proposes only one exchange on each round that has the lowest competition level among all feasible exchanges.
However, even in this protocol, we show that some agent can benefit by rejecting under a carefully chosen strategy profile.

\subsection{Proof sketch of Theorem~\ref{thm:truthfulreporting}}
\label{sec:truthfulreporting-proof}

The key intuition is that by under-reporting her goods, an agent can induce the protocol to schedule more exchanges involving her, thereby blocking others from obtaining goods.  
Our proof specifically considers 3 agents and 2 goods, and how a protocol would schedule exchanges under 3 possible \emph{reported} initial allocations:
\emph{Case 1: } $\initdatai = \initdataj = [1, 0]$ and $\initdatak = [0, 1]$;
\emph{Case 2: } $\initdatai = [1, 1]$ instead;
and \emph{Case 3: } $\initdataj = [1, 1]$ instead.
In case~1, any stable and NIC protocol should propose \emph{exactly one} of the two possible exchanges $((i,1),(k,2))$ or $((j,1),(k,2))$.
In cases~2 or~3, it should propose the unique feasible exchange $((j,1),(k,2))$ or $((i,1),(k,2))$, respectively.

Suppose the protocol proposes $((i,1),(k,2))$ in case~1.
In this case, if $i$ truly had both goods, truthfully reporting them would result in both $j$ and $k$ receiving goods (case 2); however, if she pretends to not have good 2, then the protocol is in case 1, and proposes an exchange between $i$ and $k$.
While $i$ does not actually gain a good (since she already has it), she prevents her competitor $j$ from gaining an additional good.
A symmetric argument holds if the protocol instead proposes $((j,1),(k,2))$ in case 1, as agent $j$ could benefit by pretending to not have good 2.
Thus, regardless of the protocol’s choice in case~1, some agent can under-report her initial goods and increase her utility by preventing a competitor from gaining additional goods.

\begin{remark}
One could consider the following alternative setting,
which mirrors traditional mechanism design 
settings: instead of proposing pairwise exchanges, we have the agents report their initial goods to the planner, who then reallocates access.
The goal of the planner is to design a mechanism to incentivize agents to truthfully report their initial goods, while satisfying IR and stability.
The intuition developed above suggests that incentivizing truthful reporting is impossible, even in this alternative formulation.
\end{remark}

\section{Method}
\label{sec:method}

We now describe our method. We start with an intuitive idea, identify its
shortcomings, and describe how it can be improved to obtain the desired properties.

\parahead{An initial attempt}
A natural approach to designing pairwise exchanges is as follows:
In each round, evaluate all possible pairwise exchanges and propose a set of exchanges
that satisfy two conditions:  
\emph{(a)} no agent receives duplicate items, and
\emph{(b)} if all proposals are accepted, the allocation is stable, \ie no further
pairwise exchanges are possible.
Formally,
let $\Pt$, initialized to $\emptyset$, be the set of proposed exchanges
for the current
round. Let $\interdatat$, initialized to $\datatmo$, represent the allocation if all
proposals in $\Pt$ were accepted.
Consider a protocol which iteratively updates $\Pt$ and
$\interdatat$ as follows:
\emph{(1)} Select an arbitrary exchange proposal $((i, r), (j, s))$ from $\feaspropsp(\interdatat, \partagents)$ (see~\eqref{eqn:feasibleexchanges}) that has not been rejected before, and add it to $\Pt$.  
\emph{(2)} Update the allocation: $\interdatat_{i,s} = \interdatat_{j,r} = 1$, pretending that
both agents will accept this proposal.  
\emph{(3)} If $\feaspropsp(\interdatat, \partagents)$ with the updated $\interdatat$ is non-empty,
go back to step 1.
\emph{(4)} Otherwise, \ie if
$\feaspropsp(\interdatat, \partagents) = \emptyset$,
stop and propose the exchanges in $\Pt$ to the agents.
By construction, if all proposals are accepted, we reach a stable allocation, and the
protocol terminates.
This idea is illustrated in Figure~\ref{fig:naivesub}.

\insertFigNaiveNotNic

\subparahead{Shortcomings}
This approach is not \emph{NIC}. Strategic agents may decline proposals for two reasons:

\begin{enumerate}[leftmargin=0.2in]
    \item \emph{Level of competition:}
From~\eqref{eqn:utilsummary}, we see that an exchange with agent $j$ will increase $i$'s utility by $1-\betaij$ as both agents gain one additional good.
Suppose agent $k$ is less competitive with agent $i$ than agent $j$ is, \ie $\betaik < \betaij$.
Then $i$ might reject an exchange with $j$,
anticipating that the planner might propose a similar exchange with $k$ in a
future round.
While $i$'s allocation is the same in either scenario, she benefits since a less competitive agent gains a good.

    \item \emph{Rarity of goods:}
    An agent $i$ may reject an exchange involving a commonly available good with agent $j$, hoping to receive the common good from some other agent $k$ and instead receive a rarer good from $j$ in a later round. This is illustrated in Figure~\ref{fig:notnicsub}. 
\end{enumerate}

\insertAlgoMain

\paraheadwnogap{Our method}
Our protocol, \algname{} (\textbf{C}ompetitive-order \textbf{L}azy
\textbf{E}xch\textbf{A}nges with
\textbf{R}etrospection), outlined in Algorithm~\ref{alg:main},
builds on the above idea but resolves the two shortcomings.

The first shortcoming is straightforward to fix: we iterate over all pairs of agents $(i,j)$ in
increasing order of competition factor $\betaij$,
 scheduling as many exchanges as possible between them.
 This ensures that each
agent has the opportunity to exchange with less competitive agents first.
We construct \emph{demand sets} $\dmdifromj$ and $\dmdjfromi$,
listing goods
that each agent can offer to the other based on $\interdatat$
(lines~\ref{lin:dmdjfromi} and~\ref{lin:dmdifromj}).
If both demand sets are non-empty (line~\ref{demand-nonempty}),
we select goods
$r \in\dmdjfromi$ and $s\in \dmdifromj$ respectively
such that the exchange $((i, r), (j, s))$ has not been rejected before. If multiple such goods are available, we pick the lexicographically smallest pair $(r,s)$.
We add this exchange to $\Pt$ and update $\interdatat$.
If all such exchanges have been rejected, we move to the next agent pair.

Addressing the second issue is much more challenging.
The lazy selection of exchanges in previous agent pairs might affect exchanges we can schedule in the current pair.
Hence,
if no more exchanges are possible between two agents $(i,j)$, 
we retrospectively check if more exchanges between $i$ and $j$ could have been possible,
if previous exchanges involving either $i$ or $j$ had been scheduled differently.
For this,
we develop the \retrospect{} subroutine which recursively checks for such adjustments. We will describe this subroutine shortly.
If either demand set $\dmdifromj, \dmdjfromi$ is empty, we invoke
\retrospect{} (lines~\ref{IC1},~\ref{IC2})
to test if any exchanges already scheduled in $\Pt$
 could be adjusted to accommodate more exchanges between $i$ and $j$. 
If all calls to \retrospect{}
succeed, then $\interdatat$ and $\Pt$ would have been modified to make room for one more
exchange between $i$ and $j$.
If either call returns \retrofail, it means further adjustments are not
possible and we stop and proceed to the next agent pair
(line~\ref{no-exchange2}).
In this case, we revert any changes made to $\interdatat$ and $\Pt$.

\subparahead{Termination}
Once the algorithm constructs the proposals $\Pt$, they are presented to the agents,
who decide whether to accept or reject them. If all proposals are accepted,
the protocol terminates (line~\ref{lin:termination}). Otherwise, rejected proposals
are added to \rejprops, the allocation $\interdatat$ is updated for the next round,
and the process repeats (lines~\ref{lin:agentphasestart}--\ref{lin:agentphaseend}).

\parahead{The \retrospect{} subroutine}
\retrospect{} \emph{retro}spectively \emph{trace}s 
scheduled exchanges to create room for additional ones.  
When $\dmdifromj = \emptyset$, we invoke $\retrospectmath(i, j, \{i, j\})$ to check whether  
any of $i$'s scheduled exchanges in $\Pt$ can be adjusted to allow $i$ to receive a good from $j$.  
On line~\ref{all-possible}, \retrospect{} examines each good $s$ that $j$ could offer $i$.  
We check $\datatmo_{i,s} = 0$ instead of $\interdatat_{i,s} = 0$
because $\interdatat_{i,s}$ may have been updated by exchanges in $\Pt$.
Line~\ref{base-case} handles the base case where a good $s$ is available for $i$
to receive from $j$ (this is never triggered in the initial recursive call).

In line \ref{query}, we know $\interdatat_{i, s} = 1$, \ie $i$ is scheduled to receive good $s$ in the current round via some exchange $((i, r), (j', s)) \in \Pt$.
If $j'\notin S$, we invoke  $\retrospectmath(i, j', S\cup\{j'\})$
to recursively check whether
$j'$ can offer $i$ an alternative good $s'$ so that $s$ is cleared for $i$ to exchange
with $j$ (line \ref{recursive-call}).
We then apply the adjusted exchange $((i, r), (j', s'))$, thus freeing up the slot $\interdatat_{i, s}$ on line \ref{alter-x}
and return $s$ on line \ref{recursive-case}. If the recursive call
returns \retrofail{} or $((i, r), (j', s'))$ has been rejected, we proceed to the next good.
If we
have examined all such goods without success,
we return \retrofail{} (line \ref{all-examined}).

\subparahead{An example of \emph{\retrospect}}
Let us revisit Figure~\ref{fig:notnicsub}.
Say the algorithm lazily scheduled $((i, 1), (j, 2))$ when processing $(i,j)$ and we are in case 1. 
When the algorithm finds that $X_{j \leftarrow k} = \emptyset$, it invokes \retrospect$(j, k, \{j, k\})$ . The only possible good $s$ on line \ref{all-possible} is good $1$,
which $k$ possesses but $j$ has received from $i$. Now $j'$ is agent $i$ and the algorithm recursively
invokes \retrospect$(j, i, \{i, j, k\})$ on line \ref{recursive-call}.
Here, we find $s =
3$ as an alternative good $i$ can give $j$. The initial \retrospect{} call takes this information and adjust $\interdata_j$ on line \ref{alter-x} to be like case 2, opening a slot for
$j$ and $k$ to exchange. Eventually we end up with the final allocation in case 2.
When there are many agents, we may need to change multiple previously schedules exchanges, which is handled by recursion.

\parahead{Additional remarks}
This concludes the description of our method. A few remarks are in order.

\subparahead{1) Measuring rarity requires retrospection}
We note that there is no simple way to measure the \emph{rarity} of a good without performing a full retrospection. For instance, counting the number of agents initially possessing a good (with fewer agents implying rarity) is a poor heuristic as it fails to capture the effect of exchange order. Suppose only agents $i$ and $j$ initially hold a good, which appears to be rare, but $i$ has lower competition factors with others than $j$ does. \algname{} will have scheduled exchanges widely with $i$'s good before $j$ has a chance to trade. As a result, the good may no longer be rare when we schedule exchanges for $j$. Due to such complex interactions, we believe that retrospection is the most reliable way to address incentive issues related to rarity.

\subparahead{2) Termination in one round}
\algname{} terminates in a single round when all agents follow the accepting strategy. This does not contradict Observation~\ref{obs:singleround}, since the role of potential exchanges in later rounds is to both incentivize acceptance of current proposals and ensure stability.

\subparahead{3) Runtime}
For clarity,
 Algorithm~\ref{alg:main}
presents an un-optimized version of \retrospect{}.
In Appendix~\ref{sec:runtime}, we outline a computationally efficient version of
\retrospect{} and analyze its runtime.
With this version,
each iteration of \algname{} runs  in $\bigO(|\partagents|^2 m^2 \max\{|\partagents|, m\})$ time.

\section{Analysis}
\label{sec:analysis_sketch}

The following theorem, our main theoretical result, summarizes the key properties of Algorithm~\ref{alg:main}.

\begin{restatable*}{theorem}{main}
\label{thm:main}
\emph{\algname} satisfies the NIC, IR, and stability desiderata outlined in~\S\ref{sec:desiderata}.
Moreover, when all agents follow the accepting policy, \emph{\algname} terminates in one iteration with a stable allocation.
\end{restatable*}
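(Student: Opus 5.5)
I would prove the four claims separately: stability, one-round termination, NIC, and IR, in increasing order of difficulty.

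\emph{Stability and termination.} Stability follows from the termination rule. \algname{} stops only when every proposal in $\Pt$ is accepted. I would show that at that moment every $p\in\feaspropsp(\interdatat,\partagents)$ not in $\rejprops$ would have been added during the pass over agent pairs. For the pair $(i,j)$ containing $p$, the while loop exits only in two cases: both demand sets are nonempty but every candidate pair lies in $\rejprops$, or one demand set is empty. Later pairs only set more entries of $\interdatat$ to $1$, and \retrospect{} moves a received good without changing how many goods $i$ receives from $j'$. Using these facts, I would check that no unrejected mutually beneficial exchange survives, so $\feaspropsp(\datafinal,\partagents)\subseteq\rejprops$. Under $\strataccpart$, every proposal of round $1$ is accepted, so the protocol terminates after one iteration with $\feaspropsp(\datafinal,\partagents)=\emptyset$.

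\emph{NIC.} Fix $i$, let everyone else accept, and let $\strati$ be any deviation. The key invariant concerns the pass over pairs $(i,j)$ in increasing $\betaij$. I would show that, after pairs $(i,j_1),\dots,(i,j_k)$ are processed, the number of exchanges scheduled between $i$ and $\{j_1,\dots,j_k\}$ is the maximum achievable over all feasible exchange sets among $\partagents$ that give $i$ no duplicates. I would prove this by an augmenting-path argument: \retrospect{} is exactly a search for an alternating path in the bipartite structure of goods versus partners, so failure of \retrospect{} certifies a Hall-type bound.

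Under any deviation, rejected exchanges only shrink future feasible sets, and the final exchanges of $i$ form such a feasible set. Hence, for every $k$, the number of $i$'s exchanges with her $k$ least competitive partners is at most what acceptance yields. Summing $1-\betaij$ over these exchanges then bounds $i$'s gain by Abel summation. Two further points need care. First, I must show that rejections by $i$ cannot reduce other agents' holdings in a way that helps $i$, that is, that competitors' acquisitions do not fall. Second, I would argue via the order of processing that other agents' totals under the deviation are at least those under acceptance minus exchanges involving $i$. I expect this second point to be delicate, and I would try a monotonicity lemma saying that removing $i$'s slots never decreases others' counts.

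\emph{IR.} This is the main obstacle. Compare the set $E$ of exchanges \algname{} schedules with $i$ present against the set $E'$ with $i$ absent, and form the graph on agent–good vertices whose edges are $E\triangle E'$. I would prove three things. First, the components meeting $i$ are alternating paths, the \emph{tracked paths} described in the introduction, because each vertex has degree at most one in $E$ and at most one in $E'$. Second, components avoiding $i$ do not change the count of any agent other than through endpoints. Third, each tracked path changes $i$'s utility by a nonnegative amount. To show the third, I would classify paths by their endpoints. Along a path, $i$ gains one good, and the displaced exchanges shift a good from some agent $l$ to some agent $k$. By the processing order, the agents $i$ traded with satisfy $\beta_{ik}\le\beta_{il}$, or else the path ends in a net gain, giving contributions such as $1-\beta_{il}\ge 0$, $1+\beta_{ik}$, or $2$. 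The hardest part is justifying the ordering inequality along each path from the greedy order together with \retrospect{}. I would argue it by induction on the processing step, maintaining the symmetric difference as the algorithm runs in parallel on both instances.
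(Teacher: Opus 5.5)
Your stability and one-round-termination arguments match the paper's, but the IR and NIC parts have genuine gaps. For IR, your second item does not control the components of $E\triangle E'$ that avoid $i$. Such a component can be an alternating path whose two end receipts belong to different agents $a\neq b$, one occurring only in $E$ and the other only in $E'$; this contributes $\beta_{ib}-\beta_{ia}$ to $U_i-U_i'$. It can also be an odd path with both end receipts in $E$, contributing $-\beta_{ia}-\beta_{ib}<0$. Ruling these out is the technical core of the paper's proof, Lemma~\ref{lem:IR-tracked-path}: every pair has the same number of untracked exchanges in $E$ and $E'$. The paper proves it by induction over the pair-stages, combining two facts. First, \retrospect{} is optimal: given the counts of all earlier pairs, the current pair gets the maximum number of exchanges over all feasible arrangements. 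Second, untracked exchanges are vertex-disjoint from tracked paths, so they can be transplanted from one run into the other without breaking feasibility. An earliest discrepancy would then contradict optimality in the run with fewer exchanges. Conversely, the step you call hardest is unnecessary. Interior receipts of a tracked path occur in both runs, so with agents $i,k_1,\dots,k_{\ell-1},j$ along the path, the contributions $(1-\beta_{ik_1})+(\beta_{ik_1}+\beta_{ik_2})-(\beta_{ik_2}+\beta_{ik_3})+\cdots$ telescope to $1+\beta_{ij}$, $1-\beta_{ij}$, or $2$. These are nonnegative simply because $\beta_{ij}\in(0,1)$; no ordering inequality is involved.

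For NIC, your invariant is false as stated. Let $i$ and $l$ hold only good $1$, let $j$ hold only good $2$, and let $\beta_{jl}<\beta_{ij}<\beta_{il}$. \algname{} first schedules $((j,2),(l,1))$. At the pair $(i,j)$ we then have $X_{j\leftarrow i}=\emptyset$, and the \retrospect{} call fails. So $i$ gets no exchange with her least competitive partner $j$, even though $\{((i,1),(j,2))\}$ is feasible and gives her no duplicates. The comparison class must therefore hold fixed the round-1 exchanges among the other agents. These are executed under every strategy of $i$, because round-1 proposals do not depend on her strategy. The same observation immediately shows that the exchanges not involving $i$ under acceptance form a subset of those under any deviation, so the loss term needs no monotonicity lemma. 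Your augmenting path over goods versus partners then covers only the exit $X_{i\leftarrow j}=\emptyset$. There, the partners occupying $i$'s slots were processed before $j$ and are therefore at least as good. For the exit $X_{j\leftarrow i}=\emptyset$ you need the paper's partner-side argument. By Claim~\ref{expanding-possession}, $j$'s demand stays empty through round 1, and $j$'s receipts from others are locked in because $j$ and the other agents accept. Hence $i$ can reopen $j$'s demand only by rejecting her own exchanges with $j$, one for one, which gains her nothing.
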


Next, in~\S\ref{sec:NIC-sketch} and~\S\ref{sec:IR-sketch}, we sketch the proofs for NIC and IR.
The proof of stability follows quite straightforwardly by the design of the algorithm. The full proof is given in Appendix~\ref{sec:analysis}.

\subsection{Proof sketch for NIC}
\label{sec:NIC-sketch}

We need to show that, when others are following the
accepting strategy, agent $i$'s utility is maximized by also following the accepting strategy.
Let $\stratacci$ denote the accepting strategy profile for $i$.
Let $\exchangesinvolvi(\strati), \exchangesnotinvolvi(\strati)$ respectively
        denote the \emph{accepted} exchanges involving $i$ and those not involving $i$, when $i$ follows strategy $\strati$ and the others follow the accepting strategy.
Every exchange $((i,r),(j,s)) \in\exchangesinvolvi(\strati)$ increases her utility by $1-\betaij$ and every exchange $((j,r),(k,s)) \in\exchangesnotinvolvi(\strati)$ decreases her utility by $\betaij+\betaik$ (see~\eqref{eqn:utilsummary}), 
we can decompose $i$'s utility as follows:
        \begin{align*}
       \underbrace{\onev^\top \initdatai - \sum_{j\neq i} \betaij \onev^\top \initdataj}_{\utiliI}
                + \underbrace{\sum_{((i,r),(j,s))\in\exchangesinvolvi(\strati)} (1-\betaij)}_{\utiliE(\strati)}
                \,-\, \underbrace{\sum_{((j,r),(k,s))\in\exchangesnotinvolvi(\strati)} (\betaij+\betaik)}_{\utiliL(\strati)}.
        \end{align*}
        Here, $\utiliI$ is $i$'s initial utility, which does not depend on $i$'s strategy.
        Next,
        $\utiliE(\strati)$ is the increase in her utility due to \emph{accepted} exchanges involving her.
        Finally, $\utiliL(\strati)$ is the reduction in her utility due to exchanges not involving her.
        To control the third term, note that since other agents are following the accepting policy,
        when agent $i$ also follows the accepting policy, $\exchangesnotinvolvi(\stratacci)$ will be precisely the set of exchanges \emph{proposed} by \algname{} in the first round not involving $i$; recall that \algname{} terminates in one round if all agents follow the accepting strategy.
        If $i$ follows any other policy, then $\exchangesnotinvolvi(\strati)$ will contain $\exchangesnotinvolvi(\stratacci)$ and any other exchanges between other agents scheduled in subsequent rounds.
        Hence, $\exchangesnotinvolvi(\stratacci) \subset \exchangesnotinvolvi(\strati)$ and consequently $\utiliL(\stratacci) \leq \utiliL(\strati)$.
        Therefore, it is sufficient to show that $\utiliE(\stratacci) \geq \utiliE(\strati)$ for every other strategy $\strati$.

\parahead{Maximizing the number of less competitive exchanges}
Suppose $i$ gets $\numexchacc$ exchanges when following $\stratacci$ and $\numexch$ exchanges when following $\strati$.
Let $\beta_{ij^\star_1},\beta_{ij^\star_2},\dots,\beta_{ij^\star_{\numexchacc}}$ be the competition factors between $i$ and who she has exchanged with, sorted in ascending order, when $i$ follows $\stratacci$.
Let $\beta_{ij_1},\beta_{ij_2},\dots,\beta_{ij_{\numexch}}$ be the same when $i$ follows $\strati$.
To show that $\utiliE(\stratacci) \geq \utiliE(\strati)$, we need to show 
        $\sum_{p = 1}^{\numexchacc} (1 - \beta_{ij^\star_p})
        \geq \sum_{p = 1}^{\numexch} (1 - \beta_{ij_p})$.
To that end, we establish a key technical result: 

    \begin{restatable}[Informal]{lemma}{betagood}
        \label{lem:betagood}
      For any $\beta' \in (0, 1)$, to maximize the total number of exchanges with all agents $j$ such that $\beta_{ij} \leq \beta'$ in $\exchangesinvolvi(\strati)$,  agent~$i$ must follow the accepting policy $\stratacci$.
    \end{restatable}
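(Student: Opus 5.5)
The plan is to compare the round-1 proposal set $\Pt$ (with $t=1$) of \algname{}, which by Theorem~\ref{thm:main} is the whole outcome under $\stratacci$, against the exchanges $i$ can collect over all rounds under an arbitrary $\strati$. Fix $\beta'$ and let $J(\beta')=\{j:\betaij\le\beta'\}$. Since pairs are processed in increasing order of $\beta$, the part of round 1 relevant to $J(\beta')$ is completed before any pair with larger competition level is touched. The only later effect on it is through \retrospect{}, which reshuffles goods but never removes an exchange. So it suffices to study the state $\interdatat$ right after the last pair $(i,j)$ with $j\in J(\beta')$ has been processed. Let $N^\star$ be the number of $i$'s exchanges with $J(\beta')$ at that point.

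\emph{Step 1 (a deviation-proof upper bound).} I would model $i$'s exchanges with $J(\beta')$ as a bipartite assignment problem. On one side are the goods $i$ lacks initially. Each exchange with $j$ uses one good $s$ that $i$ lacks and $\initdata_{j,s}=1$, and also requires a good $i$ owns that $j$ still lacks. Every exchange gives $i$ a new good, since proposals lie in $\feaspropsp$ and duplicates are never scheduled. Two facts then hold for any strategy $\strati$. First, exchanges in round 1 not involving $i$ are accepted, because all other agents accept. Hence every $j$ ends round 1 holding at least its round-1 scheduled goods, and holdings only grow afterwards. Second, rejected proposals are never re-proposed. Consequently the demand sets $\dmdjfromi$ available to $i$ in any later round are subsets of those in round 1, with $i$'s own received goods removed. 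This gives an upper bound on the number of $i$'s exchanges with $J(\beta')$ over all rounds: it is at most the maximum size of a ``matching'' in the round-1 constraint structure, where the partners' consumption by exchanges with agents $k\notin\{i\}$ of lower competition is taken as fixed.

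\emph{Step 2 (\retrospect{} finds augmenting paths).} I would show that when \algname{} stops adding exchanges for a pair $(i,j)$ with $j\in J(\beta')$, $N^\star$ already equals the maximum in Step 1. The argument interprets a successful call $\retrospectmath(i,j,S)$ as an augmenting path $j\to s\to j'\to s'\to\cdots$. Along it, each previously scheduled exchange of $i$ is rerouted to a different good from the same partner, which frees a good for the new partner. The number of exchanges $i$ has with each earlier partner is unchanged, and the total goes up by one. The visited set $S$ only prevents revisiting partners, and a shortest augmenting path never needs to revisit one. A failed call therefore certifies that no augmenting path exists. This is the Berge/Hall argument: the set of goods reachable in the recursion forms a deficient set, whose size bounds the number of exchanges with the partners reached. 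The symmetric call on the $j$ side handles the case where $j$ cannot receive from $i$.

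\emph{Step 3 (combining).} By Step 1 the Hall-deficiency certificate from Step 2 remains valid in all later rounds, so no $\strati$ yields more than $N^\star$ exchanges with $J(\beta')$. The main obstacle is the coupling in Step 1. Agent $i$'s rejections can change which goods the partners $j$ receive from third parties in later rounds, and the goods $i$ gives away also constrain the partners' side. I would first try to show monotonicity: every partner's set of missing goods in later rounds is contained in its round-1 set, and the round-1 certificate involves only $i$'s missing goods and the partners' initial goods. If this fails, I would strengthen it by an induction over the pairs in $\beta$ order. The induction claim is that the certificate, restricted to partners processed so far, is preserved by every later modification.
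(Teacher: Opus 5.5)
Your route is genuinely different from the paper's, though it uses the same two facts: Claim~\ref{expanding-possession} and the failure of \retrospect{}.

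The paper argues locally, partner by partner, according to why the round-1 loop for $(i,j)$ stopped:
\begin{itemize}
\item If $\dmdjfromi=\emptyset$: the Claim, plus $j$ accepting every third-party exchange, means only $i$'s own rejections of exchanges with $j$ can reopen slots for $j$, one slot per rejection.
\item If $\dmdifromj=\emptyset$: each good $j$ could offer is blocked by $i$'s endowment, by $j$ herself, or by an earlier partner $j'$ with $\beta_{ij'}\le\betaij\le\beta'$. The failed \retrospect{} call, read through Lemma~\ref{RT-correctness}(iv), says the block cannot be removed. So every extra good exchange obtained later costs a good exchange given up in round~1.
\end{itemize}
You instead prove one strategy-independent bound: a maximum $b$-matching between $i$'s exchange slots (partner $j$ with capacity $c_j$) and the goods $i$ lacks. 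A failed \retrospect{} call is then an exhaustive augmenting-path search, which yields a Hall/K\"onig deficiency certificate. The path-local set $S$ makes the search a DFS over simple paths, and skipping repeated partners loses nothing because all slots of one partner have the same neighbourhood. This gives rigour exactly where the paper is informal. It justifies the completeness of \retrospect{} that the paper simply takes from Lemma~\ref{RT-correctness}(iv). It also absorbs chains of blocks $j\to j'\to j''$ and multi-round cascades of rejections into one inequality. The cost is more machinery.

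Three points must be pinned down for your argument to close.

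\emph{First, the Step~3 obstacle disappears once capacities are defined properly.} Take $c_j=|\{r:\initdata_{i,r}=1,\ \initdata_{j,r}=0\}\setminus R_j|$, where $R_j$ is the set of goods $j$ receives from agents other than $i$ in the \emph{final} round-1 proposal set. All of these exchanges are accepted whatever $\sigma_i$ is, and holdings only grow. Hence any accepted history of $i$ with $J(\beta')$ is a feasible $b$-matching for these capacities. Two of your intermediate claims fail as written: that $j$ keeps ``its round-1 scheduled goods'', and that later demand sets are subsets of round-1 ones. Both break for goods scheduled from $i$ that $i$ rejects. The capacity formulation absorbs these reopened slots automatically; this is the paper's ``one slot per rejection''.

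\emph{Second, partners whose loop ended with $\dmdjfromi=\emptyset$ need the Claim, not the symmetric call.} By Claim~\ref{expanding-possession}, those goods are still covered at the end of round~1. This forces $c_j=n^\star_{ij}$, the number of round-1 exchanges between $i$ and $j$. Every other partner in $J(\beta')$ ended with $\dmdifromj=\emptyset$ and a failed \retrospect$(i,j)$.

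\emph{Third, you must show each failure certificate survives the rest of round~1.} A failure certifies deficiency only for the matching at that moment. Your sentence ``when \algname{} stops adding exchanges for a pair $(i,j)$, $N^\star$ already equals the maximum'' skips this step. The certificate does survive. The failed search yields a set $X_j$ of partners whose neighbourhood $N(X_j)$ is entirely matched into $X_j$'s slots. An augmenting path that entered $N(X_j)$ could never leave it, so later augmentations (for partners inside or outside $J(\beta')$) never touch this set. Let $X$ be the union of these sets; then $X\subseteq J(\beta')$. Your bound becomes $\sum_{j\in J(\beta')\setminus X}c_j+|N(X)|=\sum_{j\in J(\beta')}n^\star_{ij}=N^\star$, which is the lemma.
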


    By choosing $\beta' = 1$ in lemma \ref{lem:betagood},
        we have $\numexchacc \geq \numexch$.
        Therefore, it is sufficient to show
        $\beta_{ij^\star_p} \leq \beta_{ij_p}$
        for any $p \leq \numexch$.
        By way of contradiction, assume that there exists $p$ such that $\beta_{ij^\star_p} > \beta_{ij_p}$.
        Then, under the accepting strategy $\stratacci$, there are at most $p-1$ exchanges 
        with agents $j$ such that $\betaij \leq \beta_{ij_p}$; and
        under $\strati$, there are
        at least $p$ such exchanges.
        This contradicts Lemma~\ref{lem:betagood}.

\parahead{Proof of Lemma~\ref{lem:betagood}}
This result is the most technically challenging part of the proof. As it relies on the detailed mechanics of our method, we provide only a high-level overview here.
We first show that although \retrospect{} may revise previously scheduled exchanges to accommodate additional exchanges between the current agent pair, it does not alter the \emph{number} of exchanges between earlier agent pairs in a round.
Building on this, we argue that if agent $i$ wishes for the planner to schedule an additional exchange with some agent $j$ in a future round, the only way to do so is by rejecting an exchange with some agent~$k$ such that $\betaik \le \betaij$.
Intuitively, this is undesirable, as it requires giving up an exchange with a less competitive agent in favor of a more competitive one.

\insertFigIRProofIllus
\subsection{Proof sketch for IR}
\label{sec:IR-sketch}

We first describe the main challenge in proving IR.
While each exchange involving an agent~\( i \) directly increases her utility, 
by not participating, \( i \) could change \algname{}’s scheduling of exchanges, 
potentially blocking more competitive agents from receiving goods and thereby improving her utility. 
To illustrate,
consider an instance with $\initdata_i = \initdata_k = [1, 0]$, $\initdata_j = \initdata_l = [0, 1]$, and $\beta_{ij} < \beta_{il} < \beta_{jk} < \beta_{kl}$, as in Figure~\ref{fig:irproofillus}. Suppose all participating agents follow the accepting strategy.
When \( i \) participates, \algname{} schedules two exchanges 
$\{((i, 1), (j, 2)), ((k, 1), (l, 2))\}$, giving all four agents both goods.
If \( i \) does not participate, \algname{} instead proposes 
$\{((k, 1), (j, 2))\}$, and agent \( l \), the one more competitive with \( i \), receives nothing.
Thus, by not participating, $i$ could potentially block highly competitive agents from receiving goods.
We need to show that, despite such effects, participation remains beneficial, as $i$ herself receives new goods by participating. Indeed, in this example, if~\( i \) participates, her utility is \( 2 - 2(\beta_{ij} + \beta_{ik} + \beta_{il}) \), whereas if she does not, it is \( 1 - 2(\beta_{ij} + \beta_{ik}) - \beta_{il} \) (see~\eqref{eqn:utilsummary}). As \( \beta_{il} < 1 \), participation is better.

The key technical hurdle in proving this for a general instance is that any potential exchange involving agent~\( i \) can---when she does not participate---trigger a cascade of knock-on effects that alter many subsequent exchanges, particularly when the number of agents and goods is large.
We now describe how our proof tracks these knock-on effects to establish IR.

\parahead{Utility decomposition}
Let $\Pcal \subset \allagents \setminus \{i\}$ be a set of agents excluding $i$ who are already participating.
Let us denote  $i$'s utility when joining $\Pcal$ by $U_i$ and when not participating by  $U'_i$.
Let $\pairnum_{jk}, \pairnumalt_{jk}$ denote the number of exchanges between agent pair $(j, k)$ when $i$ participates and does not participate respectively.
Recall that every exchange $((i, r), (j, s))$ involving $i$ improves her utility by $(1-\betaij)$ as both $i$ and $j$ gain a good, and that every exchange $((j, r), (k,s))$ not involving $i$ decreases her utility by $-\betaij-\betaik$ (see~\eqref{eqn:utilsummary}).
Hence, we can write $U_i$ and $U'_i$ as follows:
\begingroup
\allowdisplaybreaks
\begin{align*}
    U_i &= \Big(\onev^\top \initdatai -
        \sum_{j \in \allagents \setminus \{i\}} \betaij \onev^\top \initdataj \Big)
        \;+\; \sum_{j, k \in \partagents} \pairnum_{jk} (-\betaij - \betaik)
        \;+\; \sum_{j \in \partagents } \pairnum_{ij} (1 - \betaij), \\
    U'_i &= \Big(\onev^\top \initdatai -
        \sum_{j \in \allagents \setminus \{i\}} \betaij \onev^\top \initdataj \Big)
        \;+\;
        \sum_{j, k \in \partagents } \pairnumalt_{jk} (- \betaij - \betaik).
\end{align*}
\endgroup
Above, the first term in $U_i$ is $i$'s initial utility, the second term is her utility loss due exchanges involving others when she participates, and the third term is her utility gain due to exchanges involving herself.
Similarly, the second term in $U'_i$ is her loss due to others' exchanges when she does not participate.
We therefore have the following expression for $U_i - U'_i$, the increase in utility when $i$ participates, which depends only on the number of exchanges between each pair of agents.
We will show that the RHS of the expression below is nonnegative, which implies IR. We have:
\begingroup
\allowdisplaybreaks
    \begin{align*}
    U_i - U'_i
    &= 
    \sum_{j \in \partagents} \pairnum_{ij} (1 - \betaij) \;+\; 
    \sum_{j, k \in \partagents} (\pairnum_{jk} - \pairnumalt_{jk}) (- \betaij - \betaik)
    \numberthis
    \label{eqn:irutilsummary}
\end{align*}
\endgroup

\paraheadwnogap{A graph construction}
To track the knock-on effects due to $i$'s (non-)participation, we construct a lattice graph $G = (V, E \cup E')$ where each agent-good pair is a node in $V$. The edges $E$ and $E'$ correspond to the set of exchanges in \algname{} when agent $i$ participates and does not participate, respectively. For each exchange $((j, r), (k, s))$, we add an edge connecting $(j, s)$ and $(k, r)$. Note that edges connect goods \emph{acquired}.
This construction is illustrated in Figure~\ref{IR-proof-graph-intro}.

\insertFigIRIntro

\emph{Tracked paths.}
We will use this graph to track the changes in the scheduled exchanges when $i$ participates versus when she does not.
The key tool is a construct called a \emph{tracked path}: formally, it is defined as a connected component in the graph which has at least one vertex representing $i$'s goods.
We prove that such connected components are in fact, paths, whose edges alternate between $E$ and $E'$;
intuitively, since an agent never receives the same good via multiple exchanges in \algname, no two edges in $E$ share the same vertex, and likewise for $E'$.
As we will see, this construction allows us to compute $i$'s utility changes due to her (non-)participation
by summing utility differences along all tracked paths.
We illustrate 2 tracked paths in Figure \ref{fig:IRsub1-intro} and 1 tracked path in Figure~\ref{fig:IRsub2-intro}. 

Let us call exchanges in a tracked path \emph{tracked exchanges} and call other exchanges \emph{untracked}.
Let $\pairnumtr_{jk}, \pairnumtralt_{jk}$ be the number of tracked exchanges between $j$ and $k$ in $E$ and $E'$ respectively.
Note that $\pairnumtr_{ij} = \pairnum_{ij}$ and $\pairnumtralt_{ij} = \pairnumalt_{ij} = 0$ for exchanges involving $i$ as every exchange involving $i$ is tracked.
The following lemma establishes an important relationship between $\pairnumtr_{jk}, \pairnumtralt_{jk}$ and $\pairnum_{jk}, \pairnumalt_{jk}$
from~\eqref{eqn:irutilsummary}.

\begin{restatable}[Informal]{lemma}{irtrackedpath}
\label{lem:IR-tracked-path}
    The number of untracked exchanges between each agent pair in $E$ and $E'$ is the same.
    That is, for all $j, k \in \partagents \cup \{i\}$,
    $\pairnum_{jk} - \pairnumtr_{jk} = \pairnumalt_{jk} - \pairnumtralt_{jk}$.
\end{restatable}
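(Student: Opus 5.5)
Writing out what an untracked exchange is gives a cleaner form of the claim. An untracked exchange in $E$ is an edge of $E$ lying in a connected component of $G$ that contains no vertex of agent $i$; the same holds for $E'$. So the statement says: in the subgraph $G_0$ consisting of connected components of $G=(V,E\cup E')$ that avoid all of $i$'s vertices, for every agent pair $(j,k)$ the number of $E$-edges between $j$ and $k$ equals the number of $E'$-edges between $j$ and $k$. The strongest form, which I would try to establish, is that $E$ and $E'$ coincide on $G_0$: every edge of $G_0$ lies in $E\cap E'$ (the green edges). The counting identity then follows immediately. Here $E$ and $E'$ are the first-round proposals of \algname{} with participants $\partagents\cup\{i\}$ and $\partagents$, both under the accepting policy, so each is a single run of the scheduling loop.

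The first step is structural. Within one run, no agent receives the same good twice (lazy addition only picks $r$ with $\interdatat_{j,r}=0$, and \retrospect{} swaps a received good for another unreceived one). Hence every vertex has degree at most one in $E$ and at most one in $E'$. Every component of $G$ is therefore a path or an even cycle whose edges alternate between $E\setminus E'$ and $E'\setminus E$, or else a single edge of $E\cap E'$ (a doubled edge). I would prove the lemma by showing that no alternating component (path or cycle with at least one red or blue edge) avoids $i$. Equivalently, every red or blue edge is connected to a vertex of $i$.

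The main step, and the main obstacle, is a coupling argument between the two runs. I would compare them pair by pair in the fixed order of increasing $\beta$. Pairs involving $i$ are skipped in the run without $i$, and otherwise the order is identical. The invariant I would maintain is: after processing each prefix of pairs, the symmetric difference of the partial schedules $P$ and $P'$ consists only of alternating components that touch a vertex of $i$. Equivalently, the discrepancy between $\interdatat$ and $\interdatat'$ is "rooted" at $i$.

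For the induction step, consider a pair $(j,k)$ with $i\notin\{j,k\}$. The demand sets $\dmdjfromi$ and $\dmdifromj$ in the two runs differ only on goods whose status differs, and these lie on $i$-rooted components. If the lazy choice or a \retrospect{} chain uses such a good, the new or modified edges attach to an existing $i$-rooted path and extend it. Otherwise the two runs perform the same action and add a green edge. \retrospect{} needs the most care, because its recursion rewrites earlier edges along a chain $i\leftarrow j'\leftarrow\cdots$. I would argue that the swap $(j',s)\to(j',s')$ performed in one run but not the other produces edges adjacent to the already-differing vertex $(j,s)$ or $(j,s')$. It therefore merges into an $i$-rooted component rather than creating a new one. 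I would also use the fact (cited in the NIC sketch) that \retrospect{} never changes the \emph{number} of exchanges between earlier pairs, to rule out a spontaneous discrepancy far from $i$. Finally, the lexicographic tie-breaking is deterministic, so identical states lead to identical choices, which handles the "otherwise the same" case.

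The risky point is showing that a discrepancy produced inside \retrospect{} cannot detach from $i$'s component, for example by closing a cycle that avoids $i$. If the strong invariant fails there, I would fall back to the weaker counting statement. For each component avoiding $i$, I would pair consecutive red and blue edges along it and show that paired edges join the same agent pair. That requires that the alternating swaps in \retrospect{} keep the agents $(i,j')$ fixed and change only the goods, which is exactly what line~\ref{alter-x} does.
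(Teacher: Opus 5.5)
Your structural first step is fine (each vertex has degree at most one in $E$ and at most one in $E'$). The gap is in the next step: the proof rests on the invariant that the symmetric difference of the two runs stays attached to $i$, and the inductive step you sketch for it does not hold up.

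\emph{The lazy step.} Your case split is wrong here. Suppose that when $(j,k)$ is processed in the run with $i$, the lexicographically smallest $(r,s)$ uses two vertices $(k,r)$ and $(j,s)$ that are occupied only in the run without $i$. Then the new $E$-edge just joins two tracked ends. The run without $i$ may instead pick a larger $(r',s')$ on vertices whose state agrees in both runs. Its new $E'$-edge is then an isolated component away from $i$. Whether that edge is later absorbed depends on how many further $(j,k)$ exchanges each run makes. That requires comparing the two runs' counts for the pair, which your argument never does.

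\emph{The \retrospect{} step.} This problem is more serious. A \retrospect{} chain in one run moves an edge off a vertex $(a,s)$ onto $(a,s')$. Suppose $(a,s)$ is an interior vertex of a tracked path, occupied in both runs but via different partners; this is exactly where the two runs' recursions diverge. Then the moved edge carries the far part of that path with it. The resulting piece, now ending at $(a,s')$, need not be connected to $i$. You treat only the case where changes merge into an $i$-rooted component, not the case where they split one.

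\emph{The strong form and the fallback.} The lemma only forces components avoiding $i$ to be balanced pair by pair; it does not say they consist of edges in $E\cap E'$. So your strong form is an additional claim, and you give no argument for it. The fallback is also unsupported. Your justification, that the rewiring step of \retrospect{} keeps partners fixed, covers only single swaps within one run. Discrepancies between the runs also come from divergent lazy choices and from different exchange counts. A balanced component also need not decompose into consecutive same-pair red/blue couples.

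The missing idea is to replace tracking the dynamics with an extremal argument.
\begin{enumerate}
\item \textbf{Maximality.} When \algname{} finishes pair $(j,k)$, $\pairnum_{jk}$ is the largest number of $(j,k)$ exchanges over all feasible arrangements from the round-start allocation that keep the counts of all earlier pairs fixed. This holds because \retrospect{} is an exhaustive augmenting-path search over the receiving agent's assignment of goods to partners. By Lemma~\ref{RT-correctness}(iii), counts of finished pairs never change afterwards.
\item \textbf{Flexibility.} Untracked edges of $E$ and of $E'$ avoid $i$ and lie in components disjoint from every tracked path. So swapping the untracked edge set of one run into the other preserves feasibility and leaves every tracked count unchanged.
\end{enumerate}
Now take the earliest pair in $\beta$-order whose untracked counts differ between $E$ and $E'$. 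Swap the untracked edges into the run with fewer, and discard later pairs' exchanges. All earlier counts are unchanged, and that pair's count strictly increases, contradicting maximality. This is the paper's induction over stages. It proves exactly the counting identity, which is all Step~3 of the IR proof uses, and it never needs to know how tie-breaking or individual \retrospect{} chains behave.
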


Combining~\eqref{eqn:irutilsummary} and Lemma~\ref{lem:IR-tracked-path}, we have
$U_i - U'_i
    = 
    \sum_{j \in \partagents} \pairnumtr_{ij} (1 - \betaij) \;+\; 
    \sum_{j, k \in \partagents} (\pairnumtr_{jk} - \pairnumtralt_{jk}) (- \betaij - \betaik)$.
As this expression only depends on the number of tracked exchanges $\{\pairnumtr_{jk}, \pairnumtralt_{jk}\}_{j,k}$,
we can express it as a sum of utility changes for $i$ on each tracked exchange.
As tracked exchanges arise from tracked paths, 
it can therefore be expressed as a sum of utility changes along each tracked path.

\parahead{Computing utility changes along tracked paths}
An intuitive way to think about tracked paths is as follows: while the exchanges scheduled by \algname{} may differ based on $i$'s (non-)participation, the goods acquired by agents only differ at the start and end of the tracked paths.
For instance, in the tracked path of length 3 from $i$ to $l$ in Figure \ref{fig:IRsub1-intro}, 
 agents $j$ and $k$ receive goods $1$ and $2$ respectively, regardless of $i$'s participation;
 they just receive them through different exchanges.
This intuition allows us to argue that many utility differences along tracked edges ``cancel out''.

Formally, we show that there are three types of tracked paths.
While each affect $i$'s utility differently,
they all  contribute non-negative utility when $i$ participates.
Note that we can view each tracked path as ``starting'' with an exchange in $E$ involving $i$ (see Figure~\ref{IR-proof-graph-intro}).
Then, a tracked path might: (1) have even length and end on some other agent $j$ in $E'$, (2) have odd length and end on some other agent $k$ in $E$, (3) have odd length and end on $i$ herself in $E$. Type (1) and (2) are illustrated in Figure \ref{fig:IRsub1-intro} and type (3) is illustrated in Figure \ref{fig:IRsub2-intro}. We show that tracked paths of type (1) results in a utility increase of $1 + \beta_{ij} \geq 0$ when $i$ participates, type (2) results in a utility increase of $1-\beta_{ik} \geq 0$ (recall $\beta$ values are bounded in $(0,1)$), and type (3) results in a utility increase of 2.
IR follows since the utility change along a tracked path is always non-negative.

\parahead{Proof sketch of Lemma~\ref{lem:IR-tracked-path}}
This result is quite technical, so we only provide a high-level overview. We must use induction because the number of untracked exchanges for the current pair depends on the number of untracked exchanges for previous pairs. The key intuition involves two observations. First, \retrospect{} always maximizes the number of total exchanges between the current pair of exchanging agents given that it does not alter the number of exchanges between any previous pairs. Second, we may replace the set of untracked exchanges in $E$ by those in $E'$, and vice versa, because untracked exchanges are disjoint from any tracked paths. Thus, if a pair of agent has different numbers of untracked exchanges in $E$ and $E'$, we may replace one set of untracked exchanges by the other to increase their exchanges, contradicting the optimality of \retrospect{}.

\section{Conclusion}
\label{sec:conclusion}

To the best of our knowledge, we are the first to study the exchange of freely replicable goods among competitive agents. 
We formalized this problem through three key desiderata---Nash incentive-compatibility, individual rationality, and stability---and proposed a protocol that satisfies all three. Our model and analysis reveals sharp departure from classical, non-competitive models:
we showed that weaker notions of incentive compatibility and IR are necessary here, truthful revelation of initial goods cannot be enforced, and that stability and Pareto-efficiency  can diverge when the externalities are high.
Some open questions resulting from our work include proving Conjecture~\ref{conj:clearperfagent}, and better understanding when Pareto-efficient outcomes are possible.

Our work also opens several promising directions for future work, including settings where agents can share goods they receive from others, divisible goods, different valuations for different goods, and nonlinear utilities over goods possessed by oneself or others.

\bibliographystyle{ACM-Reference-Format}
\bibliography{bib_coopshare}

\appendix

\section{Proof of Theorem \ref{thm:main}}
\label{sec:analysis}

We now present the proof of Theorem~\ref{thm:main}.
We will begin with the following lemma, which states some
properties about the \retrospect{} subroutine.

\begin{lemma}[Correctness of \retrospect] \label{RT-correctness}
    A call to \emph{\retrospect{}}
    \emph{(i)} always terminates,
    \emph{(ii)} does not alter the number of goods any agent possesses (\ie $\sum_{s=1}^{\numgoods}\interdatat_{i,s}$ does not change for any $i$ after any \emph{\retrospect} call),
    \emph{(iii)} does not alter the number of exchanges scheduled between any pair of agents (\ie suppose $\pairnum_{ij}$ exchanges are scheduled between agent $i, j$ before a \retrospect{} call, then we still have $\pairnum_{ij}$ exchanges scheduled between them after the call), and
    \emph{(iv)} returns a good which the giving agent $j$ could give to the receiving agent $i$, or returns \emph{\retrofail{}} if no such good exists.
\end{lemma}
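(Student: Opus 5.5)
The plan is to argue all four claims together by strong induction on the recursion depth, measured by $n - |S|$. Here $n = |\partagents|$ and $S$ is the set of examined agents passed to the call. The parameter is well founded because every recursive call on line~\ref{recursive-call} is issued only when $j' \notin S$, and it passes $S \cup \{j'\}$. So $|S|$ grows strictly along any chain of nested calls, and it can never exceed $n$.

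\emph{Termination (i).} At fixed $S$, the for-loop on line~\ref{lin:retroforstart} ranges over at most $m$ goods. Each iteration either returns, skips, or makes one recursive call. By the induction hypothesis every recursive call terminates, so the outer call terminates as well; the base case $|S| = n$ makes no recursive calls. Undoing changes is a finite operation, so it does not affect this.

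\emph{Invariants (ii) and (iii).} The only modification made inside a \retrospect{} call is on line~\ref{alter-x}, together with the modifications made by nested calls. Line~\ref{alter-x} replaces the exchange $((i,r),(j',s))$ with $((i,r),(j',s'))$. This keeps the same agent pair and the same given good $r$, so the count for the pair $(i,j')$ is unchanged, which gives (iii) for this step. It sets $\interdatat_{i,s'} \leftarrow 1$ and $\interdatat_{i,s} \leftarrow 0$, and $j'$'s row is untouched. To get (ii) I need $\interdatat_{i,s'} = 0$ before the swap. This follows from (iv) applied inductively to the nested call: that call returns $s'$ either at the base case, where $\interdatat_{i,s'} = 0$ by definition, or at line~\ref{recursive-case}, where the slot $s'$ was just cleared by the deeper swap. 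Hence the row sum for $i$ is preserved. By induction, every nested call preserves both counts, and failed branches are fully undone, so they contribute nothing. Composing these steps gives (ii) and (iii).

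\emph{Returned good (iv).} I would read ``a good $j$ could give $i$'' as: $\initdata_{j,s} = 1$, $\datatmo_{i,s} = 0$, and $\interdatat_{i,s} = 0$ after the call returns. Every candidate $s$ satisfies the first two conditions by line~\ref{lin:retroforstart}. In the base case the third holds directly. In the recursive case, line~\ref{alter-x} has just set $\interdatat_{i,s} = 0$.

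The converse direction of (iv) is the main obstacle: if \retrofail{} is returned, then no reallocation exists. I would make this precise by stating it relative to the search space \retrospect{} explores. That space is chains of reassignments $i \leftarrow j_1 \leftarrow j_2 \leftarrow \cdots$ of $i$'s incoming exchanges through distinct agents, avoiding the set $\rejprops$. The proof would then be an exhaustiveness argument, again by induction on $n - |S|$. The loop tries every candidate good, and for each occupied slot it recurses on the unique supplying agent $j'$, which is unique because $i$ never receives the same good twice in $\Pt$. It then checks rejection. So a failure at every branch means, by the induction hypothesis, that no chain exists.

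One subtlety is that blocking revisits to agents already in $S$ loses nothing. I would show that any chain revisiting an agent can be shortcut into a shorter chain without the revisit, since the revisited agent already has its slot examined higher in the recursion.
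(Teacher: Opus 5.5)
Your arguments for (i)--(iii) and the forward half of (iv) follow the paper's reasoning and are more careful than it. The paper's points are that the examined set $S$ strictly grows, and that line~\ref{alter-x} is the only mutation and keeps both the partner $j'$ and the given good $r$ fixed. You add that preserving $i$'s row sum needs $\interdatat_{i,s'}=0$ before the swap, which is the forward half of (iv) for the nested call.

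\textbf{The gap.} It lies in the converse of (iv) as you formalize it, over chains of reassignments avoiding $\rejprops$. The inference ``a failure at every branch means no chain exists'' is invalid. The nested call on line~\ref{recursive-call} commits to the first good $s'$ it can free. The test $((i,r),(j',s'))\notin\rejprops$ is applied only afterwards, in the caller. If that test fails, the caller drops the good $s$ entirely and never asks the nested call for a different $s'$.

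\textbf{Counterexample.} Take agents $i,j,j'$ with $\beta_{ij'}<\beta_{ij}$ and five goods. Initially $i$ holds goods $1,5$, $j$ holds $2$, and $j'$ holds $2,3,4,5$.
In round~1 the protocol proposes $((i,1),(j',3))$ and $((i,1),(j,2))$; the first arises after \retrospect{} re-points $j'$ from good 2 to good 3. Suppose $i$ rejects both.
In round~2, pair $(i,j')$ is assigned $((i,1),(j',2))$. Then \retrospect$(i,j,\{i,j\})$ examines good 2, recurses into $j'$, and gets back the free good 3. It finds $((i,1),(j',3))\in\rejprops$ and returns \retrofail{}.
Yet re-pointing $j'$ to good 4 via $((i,1),(j',4))\notin\rejprops$ is a valid chain that frees good 2, and $((i,5),(j,2))$ was never rejected.
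Your shortcut for revisited agents has the same defect. Removing the detour through $j_a=j_b$ creates the new exchange $((i,r_a),(j_a,s_b))$, which may lie in $\rejprops$.

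\textbf{What you can prove.} You can prove the converse when $\rejprops=\emptyset$, and that is all the paper uses. The NIC argument applies (iv) only in round~1, and the IR argument only in the single round reached when everyone accepts.
In that case your induction on $n-|S|$ does show the search is exhaustive over agent-simple chains, since $S$ is passed down the recursion path rather than shared globally. The shortcut is also sound, for a precise reason: a supplier's candidate goods depend only on its own initial goods and on $\datatmo_i$. So if $j_a=j_b$ with $a<b$, then $s_b$ is already a candidate at step $a$, and the chain can jump directly to the supplier of $s_b$.

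The paper's proof of the converse is a bare assertion and does not address rejections either. You should add the hypothesis $\rejprops=\emptyset$, or state the converse relative to the search the procedure actually performs. Do not try to prove the version with $\rejprops$; it is false.
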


\begin{proof}
First, for statement (i), \retrospect{} always terminates because each recursive call is made on an expanding set of agents $S$. On lines~\ref{condition} and~\ref{recursive-call}, the agent $j'$ selected for the recursive call is guaranteed to be not in $S$, and is added to $S$ in the subsequent call. Since $S$ strictly grows with each call, the recursion must eventually terminate when all agents have been examined.

Next, for statements (ii) and (iii), note that line~\ref{alter-x} is the only point where \retrospect{} modifies the allocation. On this line, agent~$i$ receives a new good from agent~$j'$ and simultaneously gives up the original good. As a result, the total number of goods held by any agent and the number of exchanges scheduled between any pair of agents both remain unchanged.

Finally, statement (iv) follows from the fact that the original good $s$ is returned on either line~\ref{base-case} or line~\ref{recursive-case}. Line~\ref{base-case} corresponds to the base case, where we know from the condition $\interdatat_{i,s} = 0$ that agent~$j$ can directly give $s$ to agent~$i$. Line~\ref{recursive-case} handles the recursive case: if the recursive call on agent~$j'$ successfully finds an alternative good $s'$ that can be given to $i$, then line~\ref{alter-x} uses that exchange to free up $s$, making it available to return. Conversely, if no such empty slot $s$ exists and the allocation cannot be adjusted to create one, the procedure returns \retrofail{} on line~\ref{all-examined}.
\end{proof}

\subsection{Nash Incentive Compatibility}
\label{sec:NIC}

First, we will show that the proposed exchange protocol \algname{} is Nash incentive-compatible. Suppose all agents except agent $i$ are following the accepting policy. We need to demonstrate that adhering to the accepting strategy $\stratacci$ is at least as good as any other strategy for agent $i$. 
To establish this, we will proceed through the following three steps.
    \begin{enumerate}
    \item We decompose agent $i$'s utility into components from exchanges involving $i$ and those that do not. Exchanges not involving $i$ will only increase when $i$ deviates from the accepting policy. Hence, in the remainder of the proof we focus on exchanges involving $i$.
    
    \item Using the structure of \retrospect{} and the increasing competition order in \algname{}, we prove a key lemma: for any $\beta' > 0$, to maximize the number of exchanges with agents $j$ such that $\beta_{ij} \leq \beta'$,  agent~$i$ must follow the accepting policy.

    \item We show that the accepting policy maximizes the utility gain from exchanges involving $i$---both in terms of the number of exchanges and the utility gain from each exchange.
\end{enumerate}

        \begin{proof}
        \textbf{Step 1.\;}
        We will begin by decomposing $i$'s utility.
        For brevity, let $\datafinal(\strati) = \datafinal(M, \agsubset, $ $(\strati,\strataccpartmi))$ denote the final allocation when agent $i$ follows $\strati$ and the others follow the accepting policy.
        Let $\exchangesinvolvi(\strati), \exchangesnotinvolvi(\strati)$ respectively
        denote the exchanges involving $i$ and those not involving $i$ when $i$ follows $\strati$ and the others follow the accepting policy.
        Observing that every exchange $((i,r),(j,s)) \in\exchangesinvolvi(\strati)$ increases her utility by $1-\betaij$ and every exchange $((j,r),(k,s)) \in\exchangesnotinvolvi(\strati)$ decreases her utility by $\betaij+\betaik$, 
        we can decompose $i$'s utility as follows:
        \begin{align*}
        \utili(\datafinal(\strati))
             &= \onev^\top \datafinali(\strati) - \sum_{j\neq i} \betaij \onev^\top \datafinalj(\strati),                 \numberthis
                \label{eqn:utildecomp}
                \\
            &= \underbrace{\onev^\top \initdatai - \sum_{j\neq i} \betaij \onev^\top \initdataj}_{\utiliI}
                + \underbrace{\sum_{((i,r),(j,s))\in\exchangesinvolvi(\strati)} (1-\betaij)}_{\utiliE(\strati)}
                \,-\, \underbrace{\sum_{((j,r),(k,s))\in\exchangesnotinvolvi(\strati)} (\betaij+\betaik)}_{\utiliL(\strati)}.
        \end{align*}
        We need to show $\utili(\datafinal(\stratacci)) \geq \utili(\datafinal(\strati))$ for any other strategy $\strati$.
        In~\eqref{eqn:utildecomp}, $\utiliI$ is $i$'s initial utilities, which does not depend on $i$'s strategy.
        Next, $\utiliL(\strati)$ is the reduction in her utility due to exchanges not involving her. 
        Since other agents are following the accepting policy,
        when agent $i$ also follows the accepting policy, $\exchangesnotinvolvi(\stratacci)$ will be precisely the set of exchanges \emph{proposed} by \algname{} in the first round (recall that \algname{} terminates in the first round under $\strataccpart = \{\stratacci\}_{i\in\partagents}$).
        If $i$ follows any other policy, then $\exchangesnotinvolvi(\strati)$ will contain $\exchangesnotinvolvi(\stratacci)$ and any other exchanges added in subsequent rounds.
        Therefore, $\exchangesnotinvolvi(\stratacci) \subset \exchangesnotinvolvi(\strati)$ and consequently $\utiliL(\stratacci) \leq \utiliL(\strati)$.
        Finally,
        $\utiliE(\strati)$, is the increase in her utility due to (accepted) exchanges involving her. It is sufficient to show that $\utiliE(\stratacci) \geq \utiliE(\strati)$ for every other strategy $\strati$.

        \parahead{Step 2}
        To do so, we will use the following lemma, which is the
        key technical ingredient of this proof.
        Its proof is deferred 
        to~\S\ref{sec:proofbetagood}.

        \begin{restatable*}{lemma}{betagood}
        \label{lem:betagood}
            An exchange $((i,r), (j,s))$ is said to be $(1-\beta')$--good if $\betaij \leq \beta'$, \ie agents $i, j$ gain at least utility
            $1-\beta'$ from this exchange.
            Let $\beta' > 0$ be given and
            suppose all other agents follow the accepting strategy. Then, agent $i$ gets the most number of $(1-\beta')$--good exchanges in $\exchangesinvolvi(\strati)$ if she also uses the accepting policy.
        \end{restatable*}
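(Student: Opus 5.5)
Fix $\beta'$ and let $N^\star$ be the number of $(1-\beta')$--good exchanges agent $i$ obtains under $\stratacci$. Since \algname{} terminates in round one when everyone accepts, $N^\star$ is the number of exchanges between $i$ and agents $j$ with $\betaij\le\beta'$ in $\Pone$. All such exchanges are scheduled while the outer loop of line~\ref{increasing-beta} processes pairs with $\beta$ at most $\beta'$. I would first formalize an invariant about this prefix of the loop. Let $\Tcal_{\beta'}$ be the prefix of pairs processed up to threshold $\beta'$, and let $c^{(t)}_{jk}$ be the number of exchanges scheduled for pair $(j,k)$ in round $t$. By Lemma~\ref{RT-correctness}(iii), these counts never change after the pair has been processed. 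Moreover, when a pair is processed, the while loop together with \retrospect{} adds exchanges until no augmentation exists. I would prove this augmentation is maximal: it is the largest number of extra exchanges between $j$ and $k$, given fixed counts for earlier pairs and excluding proposals in $\rejprops$. Because each agent receives each good at most once, the goods $i$ receives form a bipartite matching-like assignment of (partner, good) slots. \retrospect{} is then an augmenting-path search in this structure, so failure certifies that no alternative assignment of received goods supports one more exchange. This is the Berge-style step, and it is the main obstacle. I would check it carefully, since \retrospect{} only reassigns on the receiving side and the visited set $S$ is agent-based rather than good-based.

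Next I would compare an arbitrary deviation $\strati$, with others accepting, against the accepting run, round by round. Over the whole game, only $i$ ever rejects, so every proposal in $\rejprops$ involves $i$. Let $R$ be the set of $(1-\beta')$--good proposals $i$ rejects. The key claim is the following. In any round $t$, consider the number of $(1-\beta')$--good exchanges $i$ has accumulated, whether already accepted or scheduled in $\Pt$. I claim this quantity is monotone in a controlled way: it can exceed the previous total only by re-filling slots vacated by rejected exchanges with partners $k$ with $\betaik\le\beta'$. Consequently, the final count of accepted $(1-\beta')$--good exchanges is at most $N^\star$. To prove this, I would treat agent $i$'s capacity as the number of goods she could still receive from low-$\beta$ partners. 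Rejecting an exchange $((i,r),(j,s))$ frees $i$'s slot for good $s$ and $j$'s slot for good $r$. Freeing these slots can help $i$ later only through other low-competition partners, since the prefix $\Tcal_{\beta'}$ is processed first in every round.

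Concretely, the comparison would go as follows. The round-1 proposal computed on the $\Tcal_{\beta'}$ prefix is the maximal one, by the augmentation argument above. Suppose the set $A$ of accepted $(1-\beta')$--good exchanges at the end were larger than $N^\star$. I would then reconstruct, from $A$ together with the untouched exchanges between other pairs in $\Tcal_{\beta'}$, a feasible round-1 schedule. For this I use that other agents' final holdings from low pairs dominate a feasible configuration. That schedule would give pair $(i,j)$ more exchanges than \algname{} did for some $j$ in the prefix, contradicting maximality at the first pair of $i$ where the counts differ. Processing pairs in increasing $\beta$ lets me induct over pairs: at the first pair $(i,j)$ where the deviation yields more, all earlier pairs' counts are at least as large under acceptance, and maximality at $(i,j)$ finishes the argument.

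I expect the hardest part to be making the exchange/augmentation argument rigorous across rounds. In later rounds, $\datatmo$ already contains accepted goods and $\rejprops$ forbids specific exchanges. I would handle this by showing that rejected proposals only remove options involving $i$, so they can never increase $i$'s feasible low-$\beta$ count beyond the unconstrained round-1 maximum.
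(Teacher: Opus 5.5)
Your proposal has a genuine gap: the comparison in your third paragraph does not go through. The maximality you would prove for \retrospect{} at a pair $(i,j)$ (cf.\ Observation~\ref{optimality-of-RT}) only ranges over arrangements in which every earlier pair has \emph{exactly} \algname{}'s count. A deviation can do the opposite: $i$ lowers an earlier count to raise a later one.

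This really happens. Let $i$ hold only good~1, let $j'$ and $j$ hold only good~2, and let $\beta_{ij'}<\beta_{ij}\le\beta'$. Round~1 proposes only $((i,1),(j',2))$, because \retrospect{}$(i,j,\{i,j\})$ fails. If $i$ rejects it, round~2 proposes $((i,1),(j,2))$. The first pair where the deviation does better is $(i,j)$, yet $(i,j')$ has a strictly larger count under acceptance. Maximality at $(i,j)$ presupposes one exchange with $j'$, so it yields nothing.

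The lemma is about totals, and per-pair counts can genuinely go up. So no argument of the form ``the first pair where the deviation does better contradicts maximality'' can work. What is missing is a one-for-one accounting: lowering earlier $(1-\beta')$--good counts by $d$ can raise later ones by at most $d$. Your second paragraph asserts something like this, but the reason you give is not what makes it true. That reason is that the low-$\beta$ prefix is processed first in every round, and this ordering is exactly what would let a low-$\beta$ partner use a slot vacated by rejecting a high-$\beta$ exchange.

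The paper gets the accounting from why \algname{} closed $(i,j)$ in round~1. There are two cases.
\begin{itemize}
\item $j$'s demand from $i$ was empty. By Claim~\ref{expanding-possession}, and since $j$ accepts everything else, $j$ later lacks an $i$-good only if $i$ rejected an $(i,j)$ exchange, one slot per rejection.
\item $i$'s demand from $j$ was empty and \retrospect{}$(i,j,\cdot)$ failed. Then every $j$-good that $i$ holds came from $j$ or from some $j'$ with $\beta_{ij'}\le\beta_{ij}\le\beta'$, so freeing it costs a good exchange.
\end{itemize}

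Your fallback benchmark is also too weak, because $N^\star$ is not the unconstrained maximum over round-1 arrangements. Suppose $i$ and $k$ hold only good~1, $j$ holds only good~2, and $\beta_{jk}<\beta_{ij}\le\beta'$. Then $((j,2),(k,1))$ is scheduled first and $N^\star=0$, yet $\{((i,1),(j,2))\}$ is feasible. Rejections also do not merely remove options; they vacate slots.

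Your reconstruction idea can be rescued, but it needs the right benchmark and a real exchange property instead of per-pair maximality.
\begin{itemize}
\item Hold fixed \emph{all} round-1 exchanges not involving $i$. They execute whatever $i$ does, so $i$'s final accepted exchanges together with them still form a feasible arrangement from the initial allocation.
\item Feasible count vectors of $i$'s exchanges then form a polymatroid. $i$'s receiving side is a coverage polymatroid over partners, and each partner $k$ adds a box constraint $u_k$ given its fixed non-$i$ receipts.
\item \algname{}'s round-1 counts are the greedy solution in $\beta$ order. Each $(i,k)$ closes either because $k$'s box is exhausted or because \retrospect{} certifies that no augmenting path exists on $i$'s side.
\item Polymatroid greedy maximizes every prefix sum, in particular the number of $(1-\beta')$--good exchanges.
\end{itemize}
In round~1 your worry about the agent-based set $S$ is harmless. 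All slots of one partner have the same admissible goods, and $S$ only records the current recursion path.
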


        \parahead{Step 3}
        Suppose $i$ gets $\numexchacc$ exchanges when following $\stratacci$ and $\numexch$ exchanges when following $\strati$.
        Let $\beta_{ij^\star_1},\beta_{ij^\star_2},\dots,\beta_{ij^\star_{\numexchacc}}$ be the competition factors between $i$ and who she has exchanged with, sorted in ascending order, when $i$ follows $\stratacci$.
        Let $\beta_{ij_1},\beta_{ij_2},\dots,\beta_{ij_{\numexch}}$ be the same when $i$ follows $\strati$.
        To show that $\utiliE(\stratacci) \geq \utiliE(\strati)$, we need to show 
        $\sum_{p = 1}^{\numexchacc} (1 - \beta_{ij^\star_p})
        \geq \sum_{p = 1}^{\numexch} (1 - \beta_{ij_p})$.
        
        By choosing $\beta' = 1$ in Lemma~\ref{lem:betagood},
        we have $\numexchacc \geq \numexch$.
        Therefore, it is sufficient to show
        $\beta_{ij^\star_p} \leq \beta_{ij_p}$
        for any $p \leq \numexch$.
        By way of contradiction, assume that there exists $p$ such that $\beta_{ij^\star_p} > \beta_{ij_p}$.
        Then, under $\stratacci$, there are at most $p-1$ exchanges that are
        $(1-\beta_{ij_p})$--good; and
        under $\strati$, there are
        at least $p$ exchanges which are $(1-\beta_{ij_p})$--good.
        This contradicts Lemma~\ref{lem:betagood}. 
        \end{proof}

    \subsubsection{Proof of Lemma~\ref{lem:betagood}}
    \label{sec:proofbetagood}

        Say that \algname{} proposes  $\numexchacc$ exchanges that are $(1-\beta')$--good involving $i$ in round $1$.
        If following the accepting policy $\stratacci$, then $i$ gets $\numexchacc$ exchanges that are
$(1-\beta')$--good.
        If $i$ were to get more than $\numexchacc$ exchanges that are $(1-\beta')$--good (via exchanges in subsequent rounds) by following some strategy $\strati$, it must be the case that $i$ gets strictly more exchanges from at least one other agent $j$ where $\beta_{ij} \leq \beta'$.

        Since $\rejprops = \emptyset$ in round $1$, \algname{} stops scheduling exchanges between $i, j$ either because $\dmdjfromi = \emptyset$ ($j$ needs no more goods from $i$) or $\dmdifromj = \emptyset$ ($i$ needs no more goods from $j$) after respective \retrospect{} calls (see line $\ref{no-exchange2}$ in Algorithm~\ref{alg:main}).
    First, consider the case $\dmdjfromi = \emptyset$.
    To handle this, we will use Claim~\ref{expanding-possession} below, which asserts that although \retrospect{} may revise proposed exchanges to accommodate for more goods, any agent $k \in \partagents$ who is scheduled to receive a good $s$ at some point in round $t$ (\ie $\interdatat_{k,s} = 1$) will still be scheduled to receive that good by the end of the round---possibly from a different agent.

        \begin{claim} \label{expanding-possession}
            Let $k\in\partagents$. Once agent $k$ is scheduled to receive good $s$ in round $t$, \ie $\interdatat_{k, s} = 1$, 
            then $\interdatat_{k, s} = 1$ is guaranteed in the proposed allocation of this round.
        \end{claim}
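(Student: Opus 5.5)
The plan is to identify every place in Algorithm~\ref{alg:main} where an entry $\interdatat_{k,s}$ can be reset from $1$ to $0$ during round $t$, and to check that each such reset is either immediately compensated or is an undo that only reverts a tentative change. The claim is an invariant on $\interdatat$ during the construction of $\Pt$. Entries are set to $1$ on line~\ref{lin:Ptadd} and on line~\ref{alter-x}. Entries are set to $0$ in exactly three places: line~\ref{alter-x}, where $\interdatat_{i,s}\leftarrow 0$; the undo step inside \retrospect{}; and the undo on line~\ref{no-exchange2}.

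First I handle line~\ref{alter-x}. There, $\interdatat_{i,s}$ is cleared only so that the calling \retrospect{} can return $s$ to its own caller, and that caller reassigns $s$ to $i$. If the caller is the top-level call on line~\ref{IC1} or~\ref{IC2}, the returned good is scheduled when the loop continues. The call then reaches line~\ref{demand-nonempty} with a nonempty demand set, and the exchange is added on line~\ref{lin:Ptadd}, setting $\interdatat_{i,s}=1$ again. If the caller is an outer \retrospect{} frame, that frame executes line~\ref{alter-x} with $s'=s$ and sets $\interdatat_{i,s}\leftarrow1$.

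To make this precise, I will state and prove by induction on recursion depth the following statement. After a successful call $\retrospectmath(i,j,S)$ returning $s$, the set $\{(k,u):\interdatat_{k,u}=1\}$ equals its value before the call, except that $(i,s)$ is now $0$. After a failed call, $\interdatat$ is unchanged. The base case, line~\ref{base-case}, modifies nothing. In the recursive case, the inner call returns $s'$ with only $(i,s')$ cleared, by the induction hypothesis. Line~\ref{alter-x} then sets $(i,s')$ back to $1$ and clears $(i,s)$, which gives the invariant. A failed inner call, or a rejected $((i,r),(j',s'))$, triggers the undo, which restores $\interdatat$ exactly. This also covers statement (ii) of Lemma~\ref{RT-correctness}.

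Next I treat the top level. In the while loop both \retrospect{} calls may succeed. Each then frees exactly the returned slot $(j,r)$ or $(i,s)$, and in the next iteration the demand sets are nonempty. I must argue that an exchange is actually added there, and this is the main obstacle: line~\ref{find-exchange} could find every available pair in $\rejprops$ and break, leaving the freed slot at $0$. My first attempt is to observe that in round $1$ we have $\rejprops=\emptyset$, which is the only case used in the proof of Lemma~\ref{lem:betagood}. Otherwise I would strengthen the induction to say that the returned good forms, together with the partner's demand, a non-rejected pair. Failing that, I would note that line~\ref{no-exchange2} reverts all changes made by both calls on failure, and treat a break on line~\ref{no-exchange1} after successful retrospection the same way as the undo.

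If either call returns \retrofail{}, line~\ref{no-exchange2} undoes both calls. By the invariant, $\interdatat$ then returns to its state before the iteration, so no previously set entry is lost. In every case, once an entry $(k,s)$ becomes $1$, it ends each top-level step still equal to $1$. It therefore remains $1$ in the proposed allocation at the end of the round.
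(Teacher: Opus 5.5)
\textbf{Verdict: genuine gap, and it cannot be closed for the algorithm as written.}

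Your route is the paper's: the only $1\to 0$ transitions of $\interdatat$ are on line~\ref{alter-x} and in undos, and a slot cleared on line~\ref{alter-x} is refilled by the caller. The gap is the case you flag and leave open: a successful top-level \retrospect{} followed by a break on line~\ref{no-exchange1}. For rounds $t\ge 2$ the claim as stated is false.

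Your options (b) and (c) do not rescue it.
\begin{itemize}
\item Option (b) is not a property of the algorithm. \retrospect{} checks $\rejprops$ only for the rerouted exchange $((i,r),(j',s'))$, never for the $(i,j)$ exchange that the freed good is meant to enable. Line~\ref{no-exchange1} performs no undo.
\item Option (c) is a modification of the algorithm, not a proof about it.
\end{itemize}

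Here is a concrete counterexample. Take agents $i,j,h,k$ and goods $1,\dots,7$. Initially $j$ holds $\{5,6,7\}$, $h$ holds $\{1,2,5,7\}$, $k$ holds $\{2,3,4,6,7\}$ and $i$ holds $\{3,5,6\}$. Let $\beta_{jh}<\beta_{jk}<\beta_{ij}$ be the three smallest competition levels.
\begin{itemize}
\item Round~1 proposes $((j,6),(h,1))$, $((j,5),(k,2))$ and $((i,3),(j,7))$, plus exchanges among $i,h,k$. Suppose every proposal is rejected.
\item In round~2, pair $(j,h)$ schedules $((j,6),(h,2))$ and pair $(j,k)$ schedules $((j,5),(k,3))$. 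So $\interdatat_{j,3}=1$.
\item At pair $(i,j)$, $j$'s demand set from $i$ is empty, so \retrospect$(j,i,\{i,j\})$ is called, and it succeeds. The nested call on $k$ passes over good~2: the call on $h$ returns good~1, but $((j,6),(h,1))\in\rejprops$. It skips good~3 and returns good~4.
\item Since $((j,5),(k,4))$ was never proposed, the exchange is rerouted and $\interdatat_{j,3}\leftarrow 0$.
\item The next iteration sees demand sets $\{3\}$ and $\{7\}$. But $((i,3),(j,7))\in\rejprops$, so the loop breaks on line~\ref{no-exchange1}.
\item No later pair involves $j$, and \retrospect{} only edits the receiving agent's row. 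So $j$ is not scheduled to receive good~3 in the round-2 proposal.
\end{itemize}
This reads the failure test before line~\ref{no-exchange2} as concerning only the calls made in the current iteration.

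What your argument does establish is the round-1 statement, your option (a). With $\rejprops=\emptyset$, each demand set that was empty becomes, after a successful top-level call, exactly the singleton containing the returned good. Line~\ref{find-exchange} must then use it, so the slot is refilled.

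That suffices for the proof of Lemma~\ref{lem:betagood}, which only uses round~1. It does not suffice for the stability proof in~\S\ref{sec:nief}, which invokes the claim in the final round, where $\rejprops$ is typically nonempty. The paper's one-line proof (``the subsequent exchange with $j$ immediately sets $\interdatat_{k,s}=1$ again'') has the same hole. The fix is either to restrict the claim to $t=1$, or to add an undo at line~\ref{no-exchange1} and then re-check stability.

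A smaller point: your recursion invariant is misstated. When the inner call bottoms out on line~\ref{base-case}, line~\ref{alter-x} turns a previously zero entry $(i,s')$ into a one. So if $A$ is the set of ones before a successful call, the set afterwards is $(A\setminus\{(i,s)\})\cup\{(i,s_0)\}$ for some $s_0$, not $A\setminus\{(i,s)\}$. As written, your invariant would contradict statement~(ii) of Lemma~\ref{RT-correctness}. What you actually need is the weaker form, that no entry other than $(i,s)$ goes from $1$ to $0$, and your induction does prove that.
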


       \begin{proof}[Proof of Claim~\ref{expanding-possession}]
When $\interdatat_{k, s} = 1$, the only point where it is (temporarily) set to $0$ is on line~\ref{alter-x} of \retrospect{}, where $s$ is cleared for an exchange with some agent $j$. However, the subsequent exchange with $j$ immediately sets $\interdatat_{k, s} = 1$ again. Thus, once $k$ gains good $s$, she never loses it.
        \end{proof}
        
        Applying Claim~\ref{expanding-possession} on agent $j$, we find that if $\dmdjfromi=\emptyset$, then, at the end of this round, it must still be the case that $j$ needs no more goods from $i$ as per $\interdata_j$.
        Since $j$ accepts all exchanges from other agents, $i$ cannot make $\dmdjfromi \neq \emptyset$ in a subsequent round unless she rejects her own exchanges with $j$ to make openings in $j$'s goods. However, each such rejected proposal by $i$ opens exactly one slot for a possible future exchange with $j$.
        That is, if
        $i$ rejects $\numexchj$ exchanges with $j$, the best she could hope for is that \algname{} schedules $\numexchj$ different exchanges between her and $j$ in subsequent rounds. Hence, rejecting proposals with $j$ in the first round has no benefit to $i$.
        
        Next consider the case $\dmdifromj = \emptyset$, \ie $i$ needs no more goods from $j$ as per $\interdata_i$. At the moment \algname{} finishes scheduling exchanges between $i$ and $j$, each good $s$ where $\initdatajs = 1$ (what $j$ can offer from her initial set of goods) has status $\interdata_{i,s} = 1$ due to one of the following three reasons:
        \emph{(i)} it could be that $i$ already has this as an initial good ($\initdatais = 1$),
        \emph{(ii)} it could be offered by $j$ in their exchange, or \emph{(iii)} it could be offered by another agent $j'$ who exchanged with $i$ before $j$.
        
        There is nothing $i$ can do in case \emph{(i)}. By using a similar reasoning as above (when $\dmdjfromi=\emptyset$), $i$ cannot increase the number of exchanges with $j$ in case \emph{(ii)}.
        In case \emph{(iii)}, recall that \algname{} must have invoked \retrospect$(i, j)$ on behalf of $i$, which recursively invokes \retrospect$(i, j')$ with the other agent $j'$ who were scheduled to exchange with $i$ before $j$, and gets \retrofail{}, before concluding that no further exchanges are possible between $i$ and $j$. By Lemma~\ref{RT-correctness}, the failure of \retrospect{} indicates that $i$ cannot find an alternative set of previously scheduled exchanges to make room for additional exchanges with $j$.
        Note that the exchange between $i$ and $j'$ is also $(1-\beta')$--good because $j'$ exchanges with $i$ before $j$, implying that $\beta_{ij'} \leq \betaij \leq \beta'$. Therefore, for every $(1-\beta')$--good exchange $i$ wants to get after round $1$, she must give up a $(1-\beta')$--good exchange in round 1.
        \qed

\subsection{Individual rationality}
\label{sec:irproof}

We now show that \algname{} is IR. Each exchange involving an agent \( i \) can only increase her utility. However, by not participating, an agent may alter the exchanges proposed by \algname{} and block more competitive agents from receiving goods, potentially improving her utility. The following example illustrates this effect.

\begin{example}
Say we have four agents \( \{i, j, k, l\} \) such that \( \beta_{ij} < \beta_{ik} < \beta_{il} < \beta_{jk} < \beta_{jl} < \beta_{kl} \) and two goods with \( \initdata_i = \initdata_k = [1, 0] \) and \( \initdata_j = \initdata_l = [0, 1] \). When $i$ participates, \emph{\algname{}} proposes the exchanges \( P_1 = \{((i, 1), (j, 2)), ((k, 1), (l, 2))\} \), which—if accepted—results in all four agents receiving both goods. In contrast, if \( i \) does not participate,
\emph{\algname{}} proposes \( P_1 = \{((k, 1), (j, 2))\} \), and agent~\( l \), who is most competitive  with \( i \), receives nothing. This illustrates that an agent’s (non-)participation can affect the goods others receive, potentially blocking competitors from receiving goods.
\end{example}

The key challenge is to show that, despite such knock-on effects, participation remains beneficial. Indeed, in the example above, if~\( i \) participates, her utility is \( 2 - 2(\beta_{ij} + \beta_{ik} + \beta_{il}) \), whereas if she does not, it is \( 1 - 2(\beta_{ij} + \beta_{ik}) - \beta_{il} \). Since \( \beta_{il} < 1 \), participation yields higher utility.

Formally proving this is more involved, as each missed exchange may trigger a chain of knock-on effects, particularly when the number of agents and goods is large.
We proceed in three steps:
\begin{enumerate}[leftmargin=0.2in]
    \item Construct a graph \( G = (V, E \cup E') \), where the vertex set is \( V = (\partagents \cup \{i\}) \times [\numgoods] \). Edges in \( E \) represent exchanges when agent~\( i \) participates, and edges in \( E' \) represent exchanges when she does not. We also show that any agent's utility can be computed given the number of exchanges between each pair of agents.

    \item Define a structure in $E \cup E'$ called \emph{tracked path}, which is a connected component including any of agent $i$'s goods, and give several properties of such structures. Then we present a technical result stating that the number of \emph{untracked} exchanges (exchanges that are not a part of any tracked path) between each pair of agents is always the same in $E$ and $E'$.

    \item Track the change in \( i \)'s utility by summing utility changes along tracked paths. We will show that each path yields nonnegative utility gain when \( i \) participates, from which IR follows.
\end{enumerate}

\begin{proof} 
Let $i$ be a given agent and let $\partagents \subset \allagents\backslash\{i\}$ be the agents participating with the accepting policy.
We will show that by joining $\partagents$ with the accepting policy, $i$ is no worse off.
Recall that as all agents are following the accepting policy, \algname{} will terminate in one iteration.

\parahead{Step 1}
We begin by describing the construction of the graph \( G = (V, E \cup E') \). The vertex set is \( V = (\partagents \cup \{i\}) \times [m] \), where each vertex represents an agent-good pair, including agent~\( i \). Each edge belongs to either \( E \) or \( E' \), with some possibly appearing in both.

The edge set \( E \) is obtained by running \algname{} with agents \( \partagents \cup \{i\} \): for every exchange \( ((j, r), (k, s)) \) in this run, we add an edge between \( (j, s) \) and \( (k, r) \) in \( E \). Note that edges connect the goods \emph{acquired}, not the goods contributed. The set \( E' \) is constructed analogously by running \algname{} without agent~\( i \).

We have illustrated the construction in Figure~\ref{IR-proof-graph}. Importantly, no edge in \( E \) corresponding to an exchange involving \( i \) appears in \( E' \).
We will overload notations by referring to an edge and its corresponding exchange interchangeably, and referring to the instance where $i$ participates with the accepting policy (or does not participate) by $E$ (or $E'$), respectively.

\insertFigIR

We give an alternative computation of any agent's utility using the number of exchanges between each pair of agents. Specifically, let $U_i, U'_i$ respectively denote $i$'s utility in $E$ and in $E'$, and let $\pairnum_{jk}, \pairnumalt_{jk}$ respectively denote the number of exchanges between agent pair $(j, k)$ in $E$ and $E'$.

We will first decompose $U_i$ as follows.
Recall that every exchange $((i, r), (j, s))$ involving $i$ improves her utility by $(1-\betaij)$, and that every exchange $((j, r), (k,s))$ not involving $i$ decreases her utility by $-\betaij-\betaik$.
Therefore:
\begingroup
\allowdisplaybreaks
\begin{align*}
    U_i &= \onev^\top \datafinali(M, \agsubset \cup \{i\}, (\strataccpart, \stratacci)) -
        \sum_{j \in \allagents \setminus \{i\}} \betaij \onev^\top \datafinalj(M, \agsubset \cup \{i\}, (\strataccpart, \stratacci)) \\
        &= \onev^\top \initdatai -
        \sum_{j \in \allagents \setminus \{i\}} \betaij \onev^\top \initdataj
        \;+\; \sum_{j, k \in \partagents} \pairnum_{jk} (-\betaij - \betaik)
        \;+\; \sum_{j \in \partagents} \pairnum_{ij} (1 - \betaij)
\end{align*}
\endgroup

Similarly, we can write $U'_i$ as follows:
\begingroup
\allowdisplaybreaks
\begin{align*}
    U'_i &= \onev^\top \datafinali(M, \agsubset, \strataccpart) -
        \sum_{j \in \allagents \setminus \{i\}} \betaij \onev^\top \datafinalj(M, \agsubset, \strataccpart) \\
        &= \onev^\top \initdatai -
        \sum_{j \in \allagents \setminus \{i\}} \betaij \onev^\top \initdataj
        \;+\;
        \sum_{j, k \in \partagents} \pairnumalt_{jk} (- \betaij - \betaik)
\end{align*}
\endgroup

We therefore have the following expression for $U_i - U'_i$, the increase in utility when $i$ participates, which depends only on the number of exchanges between each pair of agents:
\begingroup
\allowdisplaybreaks
    \begin{align*}
    \numberthis\label{eqn:UiminusUip}
    U_i - U'_i
    &= 
    \sum_{j \in \partagents} \pairnum_{ij} (1 - \betaij) \;+\; 
    \sum_{j, k \in \partagents} (\pairnum_{jk} - \pairnumalt_{jk}) (- \betaij - \betaik)
\end{align*}
\endgroup
In the remainder of this proof we will show that the RHS of~\eqref{eqn:UiminusUip} is nonnegative, which implies IR.

\parahead{Step 2}
We use the graph constructed above to track differences in number of exchanges. To this end, we extract a special structure called \emph{tracked path} from the edge set $E \cup E'$. A tracked path is a connected component in the graph $G = (V, E\cup E')$ including a node corresponding to one of $i$'s goods (shortly, in Lemma~\ref{lem:tracked-path-property}, we will show that these connected components are, in fact, paths). We will use these paths to track the differences in number of exchanges arising from $i$'s (non-)participation. Figure~\ref{IR-proof-graph} illustrates several tracked paths. We refer to exchanges on a tracked path as \emph{tracked exchanges} and correspondingly, exchanges not on any tracked path as \emph{untracked}.

\begin{remark}[Interpreting a tracked path]
Each tracked path represents a sequence of knock-on effects in the exchanges proposed by \emph{\algname{}} due to agent~\( i \)'s (non-)participation. For example, in Figure~\ref{fig:IRsub1}, the tracked path \(\{(i, 2), (j, 1)\},\, \{(j, 1), (k, 2)\},\, \{(k, 2), (l, 1)\}\) can be interpreted as follows. The edges \(\{(i, 2), (j, 1)\}\) and \(\{(k, 2), (l, 1)\}\) belong to \( E \) and represent the exchanges \(((i, 1), (j, 2))\) and \(((k, 1), (l, 2))\) that would have occurred if \( i \) had participated. Without \( i \), the first exchange would not be scheduled, and instead \emph{\algname{}} would propose \(((j, 2), (k, 1))\), represented by the edge \(\{(j, 1), (k, 2)\} \in E'\). As a result, the final exchange \(((k, 1), (l, 2))\) would also not occur, since agent~\( k \) would have already received good \( 2 \) from \( j \).

Also note that, while the exchanges differ, the actual goods acquired by the agents only differ at the start and end of the tracked paths.
For instance, above, agents $j$ and $k$ receive goods $1$ and $2$ respectively, regardless of $i$'s participation.
However, they receive them through different exchanges depending on $i$'s (non-)participation.
\end{remark}

The following lemma characterizes the structure of tracked paths. This result is quite intuitive from the observation that each edge in $E$ (or $E'$) are disjoint from other edges in $E$ (or $E'$) because \algname{} never schedules any agent to receive the same good more than once. The proof is in~\S\ref{sec:proof-tracked-path-property}.

\begin{lemma}\label{lem:tracked-path-property}
Tracked paths satisfy the following properties:
\begin{enumerate}
    \item Each tracked path is indeed a path (even though it is defined as a connected component).
    \item Each tracked path starts with an exchange in $E$ involving agent $i$ and alternates between exchanges in $E'$ and $E$ afterwards.
\end{enumerate}
\end{lemma}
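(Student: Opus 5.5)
The plan is to treat $G=(V,E\cup E')$ as a multigraph in which every edge carries a label $E$ or $E'$; an exchange proposed in both runs carries both labels. I will show that every vertex meets at most one edge of $E$ and at most one edge of $E'$. Both parts of the lemma then follow from elementary facts about graphs of maximum degree two.

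\textbf{Step 1 (degree bound).} Fix the run with agents $\partagents\cup\{i\}$, which gives $E$, and take a vertex $(j,s)$. An $E$-edge meets $(j,s)$ only when $j$ receives good $s$ through a proposed exchange. I will argue that \algname{} never schedules an agent to receive the same good twice in one round, and that an agent is never scheduled to receive a good she already has.
\begin{itemize}
\item When an exchange is added at line~\ref{lin:Ptadd}, the goods come from $\dmdjfromi,\dmdifromj$. These sets only contain goods with $\interdatat=0$, and $\interdatat$ starts at $\datatmo=\initdata$ in round one, so the receiver does not already hold or expect the good.
\item In \retrospect{}, line~\ref{alter-x} swaps a received good $s$ for a good $s'$ that has just been certified free, since it was returned by the recursive call. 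By Lemma~\ref{RT-correctness}(iv), $s'$ can be given, and $s$ is simultaneously released. The invariant ``$\interdatat_{k,s}=1$ iff $k$ holds $s$ initially or receives it through exactly one exchange in $\Pt$'' is therefore preserved.
\item The undo steps restore an earlier state satisfying the invariant.
\end{itemize}
Hence the $E$-degree of every vertex is at most $1$, and the same argument for the run without $i$ bounds the $E'$-degree by $1$. Moreover, no $E'$-edge touches a vertex $(i,\cdot)$, because $i$ is absent from that run. Checking this invariant carefully through the recursive swaps and undos in \retrospect{} is the only real work, and I expect it to be the main obstacle. I would phrase it as an induction over the modifications of $\Pt$, using Claim~\ref{expanding-possession} and Lemma~\ref{RT-correctness}(ii)--(iv).

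\textbf{Step 2 (structure).}
\begin{itemize}
\item First consider an exchange lying in $E\cap E'$. Both of its endpoints already use their unique $E$-edge and their unique $E'$-edge on it, so its component is that single edge. Since it lies in $E'$, it does not involve $i$. Such components are therefore never tracked, and every tracked edge carries exactly one label.
\item On the tracked components, every vertex has degree at most $2$, so each component is a path or a cycle.
\item A tracked component contains a vertex $(i,r)$, and this vertex has degree at most $1$ because it has no $E'$-edge. So the component cannot be a cycle, which proves part (1). (A tracked component consisting of a single isolated vertex $(i,r)$ is a trivial path, and I would exclude it by convention.)
\item For part (2), orient the path from an endpoint of the form $(i,r)$. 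The first edge is the unique edge at $(i,r)$, which must lie in $E$ and involve $i$.
\item Every internal vertex has degree exactly $2$. It cannot carry two $E$-edges or two $E'$-edges, so its two edges have different labels. Hence the labels alternate $E,E',E,\dots$ along the path.
\end{itemize}
Finally, I will note that a vertex of $i$ can only occur at an endpoint of the path. Such a vertex has degree at most one, so the path either ends at another agent or returns to $i$ with a final $E$-edge. This is consistent with the three path types used later in the proof.
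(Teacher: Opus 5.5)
Your proposal is correct and follows essentially the paper's argument. Both rest on the same two facts: every vertex meets at most one $E$-edge and at most one $E'$-edge, because no agent is scheduled to receive the same good twice in a run, and no $E'$-edge touches a vertex of $i$. From there the paper walks outward from $i$'s edge, while you use the path-or-cycle classification of graphs of maximum degree two; this is a presentational difference only. Your version is more careful in three places the paper only asserts or glosses over: it justifies the no-duplicate-receipt invariant through the swaps and undos in \retrospect{}, it handles edges lying in $E\cap E'$, and it handles isolated vertices $(i,r)$.
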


Next, we formalize the intuition that tracked paths, which capture the knock-on effects due to $i$'s participation, should account for all differences in the number of exchanges in $E$ and $E'$. The following lemma is the most technical part of this proof, which requires a deep understanding of the functionality of \retrospect{}. Its proof is in \S\ref{sec:proof-IR-tracked-path}.

\begin{restatable*}{lemma}{irtrackedpath}
\label{lem:IR-tracked-path}
    The number of untracked exchanges between each agent pair is always the same in $E$ and $E'$.
    That is, for all $j, k \in \partagents \cup \{i\}$, $\pairnum_{jk} - \pairnumtr_{jk} = \pairnumalt_{jk} - \pairnumtralt_{jk}$.
\end{restatable*}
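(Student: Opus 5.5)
We plan to prove Lemma~\ref{lem:IR-tracked-path} by induction over the agent pairs in the order in which \algname{} processes them, that is, in increasing order of $\beta$ with lexicographic tie-breaking. This order is the same in the runs with and without $i$, except that pairs involving $i$ appear only in the first run. Pairs involving $i$ are trivial: every exchange involving $i$ is tracked, since its $E$-edge touches a vertex $(i,\cdot)$, and $E'$ has no such exchanges. So it suffices to treat pairs $(j,k)$ with $j,k\in\partagents$. The inductive hypothesis is that, for every pair processed before $(j,k)$, the untracked counts agree: $\pairnum_{j'k'}-\pairnumtr_{j'k'} = \pairnumalt_{j'k'}-\pairnumtralt_{j'k'}$. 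We want to use Lemma~\ref{RT-correctness}(iii) to pass from this to the pair $(j,k)$ itself.

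The key structural step concerns what \algname{} computes for a pair. When it finishes processing $(j,k)$, it has scheduled the maximum number of $(j,k)$ exchanges that is compatible with the exchange counts already fixed for earlier pairs. Those counts are never changed afterwards, by Lemma~\ref{RT-correctness}(iii). We would state this as a claim: if some assignment of goods realizes the earlier pairs' counts without duplicate receipts and has $c$ exchanges between $j$ and $k$, then \algname{} schedules at least $c$ exchanges between $j$ and $k$.

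The proof of the claim would be an augmenting-path argument in the spirit of bipartite matching. A \retrospect{} failure means there is no alternating chain of reassignments through previously scheduled exchanges, with the set $S$ preventing cycles, that frees a slot for $j$ (resp.\ $k$). By a Berge-type argument, any larger feasible configuration would yield such an alternating chain. Lemma~\ref{RT-correctness}(iv) supplies the base case and the recursive case. I expect this claim to be the main obstacle. The difficulty is that \retrospect{} only explores chains through receivers of one fixed agent at a time, and it skips agents already in $S$. We must check that this restricted search is complete for the relevant alternating paths.

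The exchange argument comes next. Suppose, for contradiction, that the untracked counts for $(j,k)$ differ, say $\pairnum_{jk}-\pairnumtr_{jk} < \pairnumalt_{jk}-\pairnumtralt_{jk}$. We work with the configuration in the run with $i$, restricted to pairs processed up to $(j,k)$. In it, we replace the untracked $E$-edges by the untracked $E'$-edges, keeping all tracked $E$-edges. This swap is feasible because untracked components are disjoint from tracked paths, so they share no vertices with them, and within each of $E$ and $E'$ no two edges share a vertex (the argument behind Lemma~\ref{lem:tracked-path-property}). Hence the swapped edge set still has no duplicate receipts. For earlier pairs the number of untracked edges is unchanged by the inductive hypothesis, so earlier pair counts are preserved, while the count for $(j,k)$ strictly increases. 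This contradicts the maximality claim in the run with $i$.

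The reverse inequality is symmetric, using the run without $i$ and swapping in the untracked $E$-edges. A subtlety remains: untracked edges belonging to pairs processed later than $(j,k)$ must be handled, either by restricting everything to the prefix of pairs or by noting that they lie in separate components. We must also ensure the swap never uses rejected proposals, which is automatic here because $\rejprops=\emptyset$ in round one.
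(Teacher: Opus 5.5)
Your proposal is correct and follows the paper's argument: induction over agent pairs in processing order, the optimality of \retrospect{} given the earlier pairs' exchange counts (Observation~\ref{optimality-of-RT}), and swapping untracked exchanges between $E$ and $E'$ (Observation~\ref{untracked-flexibility}) to contradict that optimality. Two remarks on where you differ. First, because you classify exchanges as tracked or untracked once, on the final graphs, and discard later-pair edges only after the swap, you avoid the paper's separate step showing that \retrospect{} does not disturb earlier pairs' untracked counts. Second, the worry you flag about single-receiver chains goes away once you note that, with pair counts fixed, feasibility splits into independent per-receiver matching problems: the goods $j$ receives from $k$ can be paired arbitrarily with the goods $k$ receives from $j$. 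So \retrospect{} is exactly a complete augmenting-path search in the receiver's problem, which supplies the justification that the paper only gives intuitively.
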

    
\parahead{Step 3}
We now examine the effect of agent~\( i \)'s (non-)participation on her utility by tracking changes along tracked paths. By definition, each tracked path is a connected component, which implies that they are not connected to other tracked paths or other exchanges in $E \cup E'$. This ensures that we can compute the utility changes along each tracked path separately. From Lemma~\ref{lem:tracked-path-property}, each tracked path originates from an edge in $E$ involving \( i \) and alternates between edges in $E$ and edges in $E'$. There are three distinct cases based on where the path terminates, each affecting \( i \)'s utility differently. We classify them as follows:
(1) even-length paths that end in \( E' \), denoted by \( \Tcal_1 \);  
(2) odd-length paths that end in some agent \( j \neq i \) in $E$, denoted by \( \Tcal_2 \);  
(3) odd-length paths that end in \( i \) herself, denoted by \( \Tcal_3 \).  
All three types of tracked paths are illustrated in Figure~\ref{IR-proof-graph}.

From Lemma~\ref{lem:IR-tracked-path}, we can derive $\pairnum_{jk} - \pairnumalt_{jk} = \pairnumtr_{jk} - \pairnumtralt_{jk}$ for all pairs $(j, k)$. Also note that the lemma is trivially true for any pair involving $i$, because $\pairnumtr_{ij} = \pairnum_{ij}$ and $\pairnumtralt_{ij} = \pairnumalt_{ij} = 0$ for all $j$. Combining these with~\eqref{eqn:UiminusUip}, we have $U_i - U'_i = \sum_{j \in \partagents} \pairnumtr_{ij} (1 - \betaij) \;+\; \sum_{j, k \in \partagents} (\pairnumtr_{jk} - \pairnumtralt_{jk}) (- \betaij - \betaik)$.
As this expression only depends on the number of tracked exchanges $\{\pairnumtr_{jk}, \pairnumtralt_{jk}\}_{j,k \in \partagents \cup \{i\}}$,
we can compute $U_i - U'_i$ by summing only tracked exchanges:
\begin{align*}
    \numberthis\label{eqn:UiminusUiptrackedpaths}
    U_i - U'_i =
        \sum_{p\in\Tcal_1}\sum_{e\in p} \Delta(e) +
        \sum_{p\in\Tcal_2}\sum_{e\in p} \Delta(e) +
        \sum_{p\in\Tcal_3}\sum_{e\in p} \Delta(e).
\end{align*}
where $\Delta(e) = 1-\betaij$ for $e \in E$ involving $i$,
$\Delta(e) = -\betaij-\betaik$ for $e \in E$ does not involve $i$,
and $\Delta(e) = \betaij+\betaik$ for $e \in E'$ denotes the effect of a single tracked exchange $e$ on $U_i - U'_i$.

We will separately handle $\Tcal_1, \Tcal_2, \Tcal_3$ to show that for every tracked
path $\sum_{e\in p}\Delta(e)$ is non-negative.

First consider paths in $\Tcal_1$.
The path has even length and ends with an edge in \( E' \). If \( i \) participates, all exchanges in \( E \) occur, and the final agent on the path, say agent~\( j \), receives one fewer good. If \( i \) does not participate, all exchanges in \( E' \) occur, and \( i \) receives one fewer good. Comparing the gain from participating and the loss from not participating, agent~\( i \) gains \(\sum_{e\in p}\Delta(e) = 1 + \beta_{ij} \geq 0 \) by participating.
This can be seen explicitly via the following calculations. Recall the definition of $\Delta(e)$, and note that as the path has even length $\ell$, the last exchange is in $E'$.
Letting the agents involved in this path be $i, k_1, k_2, k_3 \dots, k_{\ell-1}, j$, we have,
\begin{align*}
    \sum_{e\in p} \Delta(e)
        = (1-\beta_{ik_1}) + (\beta_{ik_1} + \beta_{ik_2}) + (-\beta_{ik_2} - \beta_{ik_3}) + \dots +
        (\beta_{i k_{\ell-1}} + \betaij)
        = 1+\betaij \geq 0.
\end{align*}
Next consider paths in $\Tcal_2$.
The path has odd length and ends in some agent \( j \neq i \), with the final edge in \( E \). If \( i \) participates, all exchanges in \( E \) occur and all agents along the path receive their corresponding goods. If \( i \) does not participate, all exchanges in \( E' \) occur, and both \( i \) and \( j \) receive one fewer good. Since \( \beta_{ij} < 1 \), the utility change for \( i \) is \( 1 - \beta_{ij} \geq 0 \).
This can be seen explicitly via the following calculations.
Let the agents involved in this path be $i, k_1, k_2, k_3 \dots, k_{\ell-1}, j$ and recall that the last exchange is in $E$ as the path has odd length $\ell$. We have,
\begin{align*}
    \sum_{e\in p} \Delta(e)
        = (1-\beta_{ik_1}) + (\beta_{ik_1} + \beta_{ik_2}) + (-\beta_{ik_2} - \beta_{ik_3}) + \dots +
        (-\beta_{i k_{\ell-1}} - \betaij)
        = 1-\betaij \geq 0.
\end{align*}
Finally, consider paths in $\Tcal_3$.
The path has odd length and ends in \( i \).
In this case, the goods received by the others do not change.
If $i$ does not participate, she misses out on two goods, while if she does participate, she receives both. Hence, her utility gain from participation is \( 2 \).
This can be seen explicitly via the following calculations.
Let the agents involved in this path be $i, k_1, k_2, k_3 \dots, k_{\ell-1}, i$, and recall that the last exchange, involving $i$, is in $E$. We have,
\begin{align*}
    \sum_{e\in p} \Delta(e)
        = (1-\beta_{ik_1}) + (\beta_{ik_1} + \beta_{ik_2}) + (-\beta_{ik_2} - \beta_{ik_3}) + \dots +
        (1 -\beta_{i k_{\ell-1}})
        = 2.
\end{align*}
Combining these results with~\eqref{eqn:UiminusUiptrackedpaths}, we obtain $U_i-U'_i\geq 0$, which completes the proof.
\end{proof}

\subsubsection{Proof of Lemma \ref{lem:tracked-path-property}}
\label{sec:proof-tracked-path-property}

Recall the observation that each edge in $E$ (or $E'$) are disjoint from other edges in $E$ (or $E'$). This is simply because \algname{} never schedules any agent to receive the same good more than once, so no two exchanges will share an endpoint under one allocation. Intuitively, connectedness in $E \cup E'$ only happens when we alternate between edges in $E$ and edges in $E'$.

Given that a tracked path must contain a node in agent $i$'s goods, it must also contain an exchange involving $i$. Suppose this exchange is given by edge $\{(i, s), (j, r)\}$ in $E$. Note that no other edge in $E$ will use node $(i, s)$ by the observation above, and no other edge in $E'$ will use node $(i, s)$ because $i$ does not participate. Thus, the connect component can only extend itself on the other node $(j, r)$.

However, $(j, r)$ cannot be used by other exchanges in $E$ for the same reason, so the only possibility is that some exchange $\{(j, r), (k, p)\}$ in $E'$ extends the connected component. Note that $(k, p)$ must be different from $(i, s)$ because $i$ does not participate in $E'$. It is also easy to see that no other edges in $E'$ can use $(j, r)$, so the only opening is at $(k, p)$.
Consider one more step $\{(k, p), (l, q)\}$ extending the connected component. This exchange must be in $E$ if it exists, and must be the only exchange in $E$ that uses $(k, p)$ by similar logic as above. Also, $(l, q)$ must be a different node from $(i, s)$ or $(j, r)$, so $(l, q)$ is the only opening for the entire connected component.

We can follow similar logic and keep extending the connected component. However, if the previous step was in $E$ (or $E'$), we can find one and exactly one exchange in $E'$ (or $E$) that connects to it. This new exchange, whether in $E$ or $E'$, will not revisit any previous nodes because the starting node at $i$ is used in $E$ and unavailable in $E'$, and any intermediate node is used in both $E$ and $E'$. Thus, we may conclude that this connected component is a path that starts with an exchange involving $i$ and alternates between exchanges in $E$ and $E'$ afterwards. \qed

We make a note that a tracked path might revisit the same agent but not the same good. It is also possible for a tracked path to end itself back at some good of $i$, as discussed in Step 3 of the main IR proof. These do not affect the utility computation along any tracked path.

\subsubsection{Proof of Lemma \ref{lem:IR-tracked-path}}
\label{sec:proof-IR-tracked-path}

We first give an intuitive interpretation about the functionality of \retrospect{} based on Lemma~\ref{RT-correctness}. By statement (iii) of Lemma~\ref{RT-correctness}, \retrospect{} does not alter the number of exchanges scheduled between any pair of agents who exchanged previously. Statement (iv) states that \retrospect{} \emph{always} finds an alternative set of previous exchanges that enables one more exchange between the current pair if any such alternative exists, and only fails if no such alternative exists. In other words, \retrospect{} may adjust \emph{any} relevant previous exchanges, as long as it does not change the number of exchanges between previous pairs of agents, in order to maximize the number of exchanges between the current pair (note that Step 2 of the NIC proof in \S\ref{sec:NIC} uses a similar idea).

In this proof, we will align the execution of \algname{} in $E$ (\ie when $i$ participates) and $E'$ (\ie when $i$ is absent) by stages, where each \emph{stage} corresponds to exchanges between a pair of agents. As \algname{} proceeds in order of competition, all stages in $E$ except those involving $i$ will have a counterpart in $E'$.

We will compare the allocations $\interdatat$ in $E$ and $E'$ at the end of each stage. For this, it is convenient to naturally extend the definition of tracked paths to allocations at the end of each stage, where we look at edge sets $E \cup E'$ induced by $\interdatat$ in the two instances and find connected components involving agent $i$'s goods. It is straightforward to verify that the properties of tracked paths stated in Lemma \ref{lem:tracked-path-property} still hold.

The number of exchanges between a pair of agents becomes fixed once \algname{} finishes scheduling exchanges between them and moves on to the next pair, because \retrospect{} does not change the number of exchanges scheduled between previous agent pairs. Thus, comparing the number of exchanges between a pair of agents in the final allocation in $E$ and $E'$ is the same as comparing the number of exchanges when they finish exchanging.

Next, given an allocation $x$, we say that a set of exchanges $S$ is a \emph{feasible arrangement} from $x$, if we can execute a set of exchanges from $\feaspropsp(x)$ in~\eqref{eqn:feasibleexchanges} so that no agent receives the same good more than once.
Precisely, $S\subset \feaspropsp(x)$ and for all $((i, r), (j, s)), ((i', r'), (j', s')) \in S$
we have $(i, s) \neq (i', s') \neq (j, r) \neq (j', r')$.
Furthermore, we say $\mathcal{S}(x)$ is the set of all possible feasible arrangements under allocation $x$. Note that the number of total, tracked, and untracked exchanges between a pair of agents $(i, j)$ ($\pairnum_{ij}, \pairnumtr_{ij}, \pairnum_{ij} - \pairnumtr_{ij}$ respectively) can be defined accordingly over any $S \in \mathcal{S}$.
Based on the above, we observe the following:
\begin{observation}[Optimality of \retrospect{}] \label{optimality-of-RT}
    Consider the feasible arrangement $S^\star$ given by \algname{} at the stage when $(i, j)$ finishes exchanging, and any other $S \in \mathcal{S}$ satisfying that $\pairnum_{i'j'} = \pairnum^\star_{i'j'}$ for all previous pairs $(i', j')$. Then we have $\pairnum^\star_{ij} \geq \pairnum_{ij}$ for all such $S \in \mathcal{S}$.
\end{observation}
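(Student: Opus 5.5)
The plan is to recast the stage-$(i,j)$ scheduling problem as a capacitated bipartite assignment problem and to show that \retrospect{} is an augmenting-path search for it. Then the standard augmenting-path optimality theorem (Berge/Kuhn) gives $\pairnum^\star_{ij} \geq \pairnum_{ij}$.

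\emph{Step 1: reduce to why the $(i,j)$ stage stopped.} The while loop on line~\ref{exchange-loop} adds an $(i,j)$ exchange whenever both demand sets are nonempty and some pair $(r,s)$ is unrejected. So the stage ends only in one of two ways:
\begin{enumerate}
\item a call $\retrospectmath(i,j,\{i,j\})$ or $\retrospectmath(j,i,\{i,j\})$ returns \retrofail{};
\item every candidate pair $(r,s)$ has been rejected.
\end{enumerate}
The second case is handled directly, because $S$ may only use exchanges in $\feaspropsp$ that have not been rejected. By symmetry it then suffices to treat the case where the receiving side of $i$ is blocked, i.e.\ $\retrospectmath(i,j,\{i,j\})$ fails.

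\emph{Step 2: decouple the two sides.} In any feasible arrangement, consider the goods $i$ receives and the partner supplying each one. The constraints on these assignments involve only $i$'s row of the allocation:
\begin{itemize}
\item $i$ receives each good at most once;
\item the good was not held by $i$ at time $t-1$;
\item the partner initially owns it;
\item the exchange is not in $\rejprops$.
\end{itemize}
These constraints are independent of what $i$ gives. Fix the counts $\pairnum_{ij''}$ for all earlier partners $j''$ of $i$. Then the goods $i$ receives form a $b$-matching in a bipartite graph between $i$'s partners, with capacities $\pairnum_{ij''}$, and the goods she lacks. $S$ induces such a matching in which $j$ supplies $\pairnum_{ij}$ goods. $S^\star$ induces one in which $j$ supplies $\pairnum^\star_{ij}<\pairnum_{ij}$ goods and every earlier partner is at full capacity.

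\emph{Step 3: extract an augmenting path.} Take the symmetric difference of the two matchings. Since all earlier partners have equal counts in both, it contains an alternating path of the following form:
\[
j \to s_1 \to j'_1 \to s_2 \to j'_2 \to \cdots \to s_q,
\]
where $j$ owns $s_1$ and $j'_\ell$ currently supplies $s_\ell$ in $S^\star$. Each $j'_\ell$ owns $s_{\ell+1}$, and the re-matched exchange $((i,r),(j'_\ell,s_{\ell+1}))$ is unrejected, since it appears in $S$. Finally, $s_q$ is not currently received by $i$. Shortcutting gives a path that visits each agent at most once.

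This is exactly the search \retrospect{} performs. Line~\ref{all-possible} ranges over goods the giver owns that $i$ lacked at $t-1$. Line~\ref{base-case} is the free endpoint $s_q$. Lines~\ref{query}--\ref{recursive-call} follow the matched edge to $j'$ and recurse, and line~\ref{alter-x} checks the rejection constraint. The visited set $S$ in \retrospect{} is passed down the recursion rather than kept global, so the search explores every agent-simple alternating path and is therefore complete. Hence it would have found this path rather than returning \retrofail{}, which is a contradiction. Lemma~\ref{RT-correctness}(iii) guarantees that the resulting swap keeps all earlier pair counts unchanged, so it respects the constraint in the statement.

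\emph{Main obstacle.} The hardest step is Step 3's completeness claim. It needs three things:
\begin{enumerate}
\item a shortest augmenting path never repeats an agent, so restricting the search to $j'\notin S$ loses nothing;
\item the undo operations after failed branches leave $\interdatat$ exactly as before;
\item the modification at $j$'s side (the other \retrospect{} call, or its undo on line~\ref{no-exchange2}) does not interfere with $i$'s side.
\end{enumerate}
I expect to handle the third point by noting that the two calls alter disjoint rows of $\interdatat$, namely those of $i$ and of $j$.
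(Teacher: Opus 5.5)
Your argument is correct in the setting where the Observation is actually used, and it takes a more rigorous route than the paper's. The paper justifies the Observation in two sentences. It reads Lemma~\ref{RT-correctness}(iv) as saying that \retrospect{} fails only when no rearrangement of earlier exchanges (with their counts fixed) can free a slot, and it notes that later stages only consume slots. The proof of that lemma never argues this completeness direction; it is asserted rather than derived.

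You actually prove it, in four steps:
\begin{enumerate}
\item You project any feasible $S$ onto $i$'s receiving side. This is a relaxation, so only this direction is needed.
\item You extract an alternating trail from $j$ to a free good, using the equal degrees at earlier partners and the excess at $j$.
\item You shortcut the trail to an agent-simple path.
\item You observe that the path-local visited set makes \retrospect{} an exhaustive DFS over such paths, and that the calls on lines~\ref{IC1} and~\ref{IC2} touch disjoint rows of $\interdatat$.
\end{enumerate}
This supplies the missing justification and explains why restricting the recursion to $j'\notin S$ loses nothing. The paper's version is shorter but rests on the unproved completeness claim.

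One caveat: you should restrict explicitly to round~1 with $\rejprops=\emptyset$, rather than trying to carry rejections. The Observation is invoked only in the IR proof, where everyone accepts and \algname{} stops after round~1. Moreover, $\mathcal{S}$ does not exclude rejected exchanges, so the statement is false as written in later rounds. Take two agents $i,j$ with a single feasible exchange that was rejected in round~1: in round~2 you get $\pairnum^\star_{ij}=0$, yet $S$ may contain that exchange.

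Your handling of rejections also fails on its own terms, in three places.
\begin{enumerate}
\item Rejection is a property of the whole tuple $((i,r),(j',s))$. So the constraints on $i$'s receiving side are not independent of what $i$ gives, and the decoupling in Step~2 breaks.
\item In Step~3, $S$ supplies $s_{\ell+1}$ via some $((i,r''),(j'_\ell,s_{\ell+1}))$, whereas \retrospect{} keeps $i$'s old good $r$. So the re-matched exchange need not be unrejected just because something involving $s_{\ell+1}$ appears in $S$.
\item Case~(2) of Step~1 is not handled directly. \algname{} breaks at line~\ref{no-exchange1} without attempting any rearrangement, and nothing in your argument rules out a rearrangement of earlier exchanges exposing an unrejected $(i,j)$ pair.
\end{enumerate}
With $\rejprops=\emptyset$, case~(2) never arises and all three issues disappear, leaving your proof intact.
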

Intuitively, given that we do not change the number of exchanges between any previous pair, the maximum possible number of exchanges between $i$ and $j$ (whether in $E$ or $E'$) are scheduled when they just finish exchanging with each other in \algname{}.
This is because \retrospect{} has checked all previous exchanges, and future exchanges only take away available good slots from $i$ or $j$, which cannot make new exchanges possible between them.

Now we look at the untracked exchanges. By definition, they are disjoint from any tracked path. Thus, we may define an \emph{untracked region} for any $E$ and $E'$, which is the set of nodes that are not involved in any tracked exchanges. Clearly all untracked exchanges in $E \cup E'$ only use nodes in the untracked region. Given a feasible arrangement $S$, let $\suntrack$ be the set of untracked exchanges in $S$. We observe the following about $\suntrack$:
\begin{observation}[Flexibility of Untracked Exchanges] \label{untracked-flexibility}
     Consider two feasible arrangements $S$ and $S'$. Then $S \setminus \suntrack \cup \suntrack'$ (respectively $S' \setminus \suntrack' \cup \suntrack$) is still a feasible arrangement.
\end{observation}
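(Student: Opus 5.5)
The plan is to reduce feasibility of the spliced set to two checks: each exchange is individually admissible, and no two exchanges share an endpoint in the graph $G=(V,E\cup E')$, where an exchange $((j,r),(k,s))$ is identified with the edge $\{(j,s),(k,r)\}$ joining the goods acquired. With this encoding, the disjointness condition in the definition of a feasible arrangement says exactly that the edges of $S$ form a matching on $V$. So I need to show that $T \defeq (S\setminus\suntrack)\cup\suntrack'$ is a subset of $\feaspropsp(x)$ and that its edges form a matching.

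\emph{Step 1 (membership).} Every exchange of $S\setminus\suntrack$ lies in $S\subset\feaspropsp(x)$, and every exchange of $\suntrack'$ lies in $S'\subset\feaspropsp(x)$. Here I use that $S$ and $S'$ are feasible arrangements from the same allocation $x$, which is the setting of Observation~\ref{untracked-flexibility}. Membership in $\feaspropsp(x)$ is a property of each exchange separately, since it only asks that $\initdata_{j,r}=\initdata_{k,s}=1$ and $x_{j,s}=x_{k,r}=0$. Hence $T\subset\feaspropsp(x)$.

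\emph{Step 2 (disjointness).} Two exchanges that both come from $S\setminus\suntrack$, or both from $\suntrack'$, have disjoint endpoints because $S$ and $S'$ are each matchings. For the mixed case, take a tracked edge $e\in S$ and an untracked edge $e'\in S'$. Tracked paths are the connected components of the graph built from $S\cup S'$ that contain a node of $i$. If $e$ and $e'$ shared a node, they would lie in the same connected component of $S\cup S'$, so $e'$ would lie on the tracked path containing $e$ and would itself be tracked, a contradiction. Equivalently, tracked exchanges of $S$ only use tracked nodes, while $\suntrack'$ only uses nodes of the untracked region, and these two node sets are disjoint by definition. Hence $T$ is a matching, and therefore a feasible arrangement. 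The symmetric statement for $(S'\setminus\suntrack')\cup\suntrack$ follows by swapping the roles of $S$ and $S'$.

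\emph{Main obstacle.} The delicate point is making sure "tracked" and "untracked" are defined with respect to the \emph{combined} graph $S\cup S'$, not separately for each arrangement. Otherwise an untracked edge of $S'$ could touch a node lying on a tracked path of $S$ alone. I will state explicitly that the untracked region is the complement of the node set of the tracked components of $S\cup S'$, as in the text preceding Observation~\ref{untracked-flexibility}.

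A second subtlety arises in the intended application, comparing $E$ and $E'$, where agent $i$ is absent in the second run. Here I will note that $i$'s nodes are all tracked or unused, so untracked exchanges never involve $i$. Moreover, at aligned stages the untracked parts of the two allocations coincide, since allocations can differ only at the endpoints of tracked paths. This justifies treating both arrangements as arrangements from a common $x$ on the untracked region.
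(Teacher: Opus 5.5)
Your argument is correct and is essentially the paper's. The paper justifies the observation in one line: untracked exchanges are node-disjoint from every tracked path, because tracked paths are connected components of the combined graph. That is exactly your mixed case in Step 2; your Step 1 (membership) and the within-arrangement matching property are left implicit in the paper.

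One small caveat concerns your final paragraph. The claim that, at aligned stages, the two allocations coincide on the untracked region is neither needed nor established by anything before the observation. Both arrangements are taken from the same base allocation $x$ (here $\initdata$, since under the accepting policy \algname{} runs a single round), so the only application-specific fact you need is that untracked exchanges never involve $i$.
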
 
Intuitively, replacing the set of untracked exchanges in one arrangement by those in another does not break feasibility.
This is because all untracked exchanges are disjoint from tracked paths.

With Observation \ref{optimality-of-RT} and \ref{untracked-flexibility}, we proceed to establish Lemma \ref{lem:IR-tracked-path} by induction over the stages. Induction is needed here because the number of untracked exchanges for the current pair depends on the number of untracked exchanges for previous pairs. If any previous pair has different numbers of untracked exchanges, that difference will cascade to the current pair and leave no guarantees about the number of untracked exchanges for the current pair.

Lemma \ref{lem:IR-tracked-path} is trivially true when no pairs of agents have ever exchanged. Then, we consider a currently exchanging pair given that all previous pairs have the same number of untracked exchanges scheduled between them. We complete the induction with two contradictions.

First, we will argue that any \retrospect{} for the current pair does not break the pattern that all previous pairs have the same number of untracked exchanges. By way of contradiction, suppose it does. Then there exists an earliest pair $(i', j')$ with $k$ untracked exchanges in $E$ and $k'$ untracked exchanges in $E'$ (WLOG $k > k'$) such that all pairs who exchanged before $(i', j')$ still have the same number of untracked exchanges. Say the arrangements in $E$ and $E'$ are $S$ and $S'$ respectively. Then by Observation \ref{untracked-flexibility}, we may replace $\suntrack'$ by $\suntrack$ in $E'$ to enable $k > k'$ untracked exchanges between $(i', j')$. This will increase the total number of exchanges between $(i', j')$ by $k - k' > 0$ as well because the number of tracked exchanges between them remains the same. Given that the replacement above does not change the number of untracked or total exchanges between any pairs before $(i', j')$, the sub-optimality of $S'$ for the pair $(i', j')$ contradicts Observation \ref{optimality-of-RT}.

Second, the current pair $(i, j)$ must also have the same number of untracked exchanges. By way of contradiction suppose they have different numbers of untracked exchanges. Then say there are $k$ untracked exchanges between them in $E$ and $k'$ untracked exchanges between them in $E'$ (WLOG $k > k'$). Say the arrangements in $E$ and $E'$ are $S$ and $S'$ respectively. Then by Observation \ref{untracked-flexibility}, we may replace $\suntrack'$ by $\suntrack$ in $E'$ to enable $k > k'$ untracked exchanges between $(i, j)$. Since we have already established that all previous pairs have the same number of untracked exchanges by the first contradiction, the sub-optimality of $S'$ for the pair $(i, j)$ contradicts Observation \ref{optimality-of-RT}. Thus, we proved Lemma \ref{lem:IR-tracked-path} by induction. \qed

\subsection{Stability}
\label{sec:nief}

Stability follows straightforwardly from the design of \algname{}.
First recall that although \retrospect{} may alter the agents involved in specific exchanges, Claim~\ref{expanding-possession} guarantees that once an agent \( i \) is scheduled to receive a good from some agent \( j \) in a given round, she is guaranteed to receive that good, possibly from a different agent \( k \) in that round.

\begin{proof}
Consider any pair of agents \( (i, j) \). We show that when \algname{} terminates, either there are no remaining mutually beneficial exchanges between \( i \) and \( j \), or any such exchanges have already been rejected. Applying this argument to all agent pairs yields stability.

Consider the final round of \algname{}. It exits $(i,j)$ and proceeds to the next agent pair at line~\ref{no-exchange1} or line~\ref{no-exchange2}.
If it exits $(i,j)$ at line~\ref{no-exchange1}, then both \( \dmdifromj \neq \emptyset \) and \( \dmdjfromi \neq \emptyset \), but every potential exchange \( ((i, r), (j, s)) \in \rejprops \) for all \( (r, s) \in \dmdjfromi \times \dmdifromj \) (see line~\ref{find-exchange}). That is, all remaining mutually beneficial exchanges between \( i \) and \( j \) have already been rejected by one of the agents.

If \algname{} exits $(i,j)$ at line~\ref{no-exchange2} in the final round, then \( \dmdifromj = \emptyset \) or \( \dmdjfromi = \emptyset \); that is, at least one agent has nothing to gain from the other at the time \algname{} exits to the next pair. By Claim~\ref{expanding-possession}, we know that by the end of the round, these goods will still be received by the agents. Since \algname{} only terminates once all proposals have been accepted, there are no remaining mutually beneficial exchanges between \( i \) and \( j \).
\end{proof}

\subsection{Proof of the second statement of Theorem~\ref{thm:main}}

\begin{proof}
If all agents follow the accepting policy, then no proposals are rejected, and the protocol terminates after the first round.  
As in the argument in~\S\ref{sec:nief}, there are no remaining mutually beneficial exchanges between any pair of agents, and hence the resulting allocation is stable.
\end{proof}

\section{Proofs of Impossibility Results}
\label{sec:impossibility}

In this section, we provide proofs of
the impossibility results in~\S\ref{sec:impossibilityresults}.

\subsection{Proof of Theorem~\ref{thm:nodsic}}
\label{sec:nodsic}

First, we will prove Theorem~\ref{thm:nodsic}, which states that no protocol can be simultaneously stable and DSIC on all problem instances.
We proceed in the following three steps:
    \begin{enumerate}[leftmargin=0.2in]
        \item 
        We will first construct a problem instance $(\initdata, \beta)$.
        Any stable protocol on this instance can be represented by a tree, where the internal nodes represent exchanges proposed by this protocol on each round, the edges represent if each proposed exchange was accepted/rejected, and the leaf nodes represent final allocations obtained.

        \item 
        We then rule out protocols which have the following structure: for some agent, the protocol schedules a higher-competition exchange before a conflicting, lower-competition one. Crucially, the lower-competition exchange is only feasible if the higher-competition exchange is rejected. We refer to this pattern as an \emph{inversion pair}, and we show that for any protocol exhibiting such a structure, one can systematically construct a strategy profile that serves as a counterexample to DSIC. This rules out almost all stable protocols from being DSIC.

        \item
        We argue that the only protocol that avoids creating inversion pairs is the Sequential Ascending Competition (SAC) protocol, which schedules exchanges one at a time in increasing order of competition. However, even SAC is not DSIC: we can construct a counterexample strategy profile under which an agent benefits by deviating from the accepting policy. 

    \end{enumerate}

Before we proceed, let us recall the definition of DSIC.
A protocol $M$ is DSIC if, for all participating agents $i\in\partagents$, all strategies $\strati$ of agent $i$, and all strategy profiles $\stratpartmi$ of the other participating agents, following the accepting strategy $\stratacci$ is no worse for agent $i$, \ie
$\utili(M, \partagents, (\stratpartmi, \stratacci)) \geq \utili(M, \partagents, (\stratpartmi, \strati))$.
Hence, to show that no protocol is DSIC,
we need to show that for any protocol $M$, there exists some agent $i$, some strategy $\strati\neq\stratacci$ for agent $i$,
and some strategy profile $\stratpartmi$ for the others such that $\utili(M, \partagents, (\stratpartmi, \stratacci)) < \utili(M, \partagents, (\stratpartmi, \strati))$.

    \begin{proof} 
    \parahead{Step 1}
    Let us consider the following instance with 4 agents  $i, j, k, l$ and 2 goods. Agents $i, k$ have only good $1$. Agents $j, l$ have only good $2$.
    Assume that $\beta_{ij} < \beta_{jk} < \beta_{kl} < \beta_{il}$, and let $\beta_{ik}, \beta_{jl}$ be arbitrary (since exchanges are not possible between $i,k$ and $j,l$, their values will not affect this proof).
    In this example, we can refer to exchanges by only agent names without any ambiguity. For example, the exchange 
    where $i$ gives good $1$ to $j$ and $j$ gives good $2$ to $i$ can be concisely referred to as $ij$,
    instead of the more verbose $((i, 1), (j, 2))$.

    We will represent a protocol’s execution on a given instance via the following tree. The root node encodes the exchanges proposed in round 1. Each edge from the root represents which exchanges were accepted or rejected by the agents; so if $N$ exchanges are proposed, the root has $2^N$ child nodes. Each child node then represents the exchanges proposed by the protocol in the second round, conditioned on those agent decisions. This process continues until the protocol terminates, with each leaf node corresponding to a final allocation.
    Since the protocol cannot propose a previously proposed exchange (whether it is accepted or rejected), this tree has finite height.
    Figure~\ref{fig:tree-sub1} illustrates this tree structure for \algname{} on the instance described above.

    Before proceeding, let us rule out some poorly designed protocols.
    Our formalism in~\S\ref{sec:setup} allows a protocol to schedule two exchanges with the same agent in the same round (\eg $ij$ and $jk$ together), or schedule an exchange where an agent is receiving a good she has previously received (\eg $jk$ is scheduled in a future round after $ij$ was accepted),
    However, such protocols are trivially not DSIC.
    In the first case, agent $j$ will only accept $ij$ if both $i$ and $k$ are following the accepting policy,
    and in the second case, $j$ will always reject the second exchange.
    Therefore, going forward, we will not consider such protocols.

    \insertFigTree
  
    \parahead{Step 2}
    To build some intuition for this step,
    we will use the tree for \algname{} in Figure~\ref{fig:tree-sub1}.
    Observe that agent $k$ prefers to exchange with $j$ than $l$, but \algname{} proposes $kl$ first. If $k$ knows that any exchanges involving $i$ will not happen (for instance, say $i$ always rejects all proposals), then she should reject $kl$ and then accept $jk$ in the subsequent round.
    This leads to a counter example strategy profile: $i$ rejects all, $j$ and $l$ accept all. Under this profile, if $k$ uses the accepting policy, the only accepted exchange is $kl$. Instead, if $k$ rejects $kl$ and accepts $jk$, the only accepted exchange is $jk$. Since $\beta_{jk} < \beta_{kl}$, $k$ gets strictly higher utility by deviating from the accepting policy.

    Here, the protocol proposes $kl$ before $jk$, but $k$ prefers $jk$ due to lower competition. This \emph{inversion} of $k$'s preferences leads to the above strategy profile where $k$ can benefit by deviating from the accepting policy.
    We can generalize this intuition to any protocol with a similar structure in its corresponding tree. We call this structure an \emph{inversion pair} (referring to the pair of exchanges), defined formally as follows: for any agents $i', j', k' \in \{i, j, k, l\}$, we say $(i', j', k')$ is an \emph{inversion pair} for protocol $M$ if $\beta_{i'j'} < \beta_{i'k'}$ but $i'k'$ is an ancestor of $i'j'$ in the corresponding tree.
    In other words,
    $M$ proposes exchange $i'k'$ in a round before  $i'j'$, so  $i'j'$ is only possible if $i'k'$ is rejected.
    The following lemma states that if there exists an inversion pair in the tree representing a protocol, then the protocol is not DSIC.

    \begin{lemma}\label{inversion-pair}
        If protocol $M$ creates an inversion pair $(i', j', k')$, then there exists a strategy profile
        for the other agents such that deviating from the accepting policy is better for agent $i'$.
    \end{lemma}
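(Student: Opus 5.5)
We construct the deviation exactly as in the sketch in \S\ref{sec:nodsic-proof}. Let $(i',j',k')$ be an inversion pair, and let $l'$ be the fourth agent of the instance, i.e.\ $\{l'\}=\{i,j,k,l\}\setminus\{i',j',k'\}$. Because exchanges are possible only between a holder of good~$1$ and a holder of good~$2$, both $i'j'$ and $i'k'$ being feasible forces $j'$ and $k'$ to hold the same good, and $i'$ and $l'$ to hold the other. We fix the profile of the other agents as follows: $l'$ rejects every proposal, while $j'$ and $k'$ accept every proposal. Every exchange involving $l'$ is therefore never executed. The only exchanges that can ever happen are $i'j'$ and $i'k'$, since $j'k'$ and $i'l'$ are infeasible or blocked. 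Moreover, at most one of them can happen, because $i'$ cannot usefully receive the same good twice; we have already excluded protocols that propose such duplicate exchanges.

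Next, we trace the tree along the path generated by this profile. Since $i'k'$ is an ancestor of $i'j'$, there is a node $v$ at which $i'k'$ is proposed, and the branch in which $i'k'$ is rejected leads (possibly after further nodes) to a node proposing $i'j'$. The delicate point is that the ancestor relation must be realized along the specific branch that is consistent with our profile. In that branch, every proposal involving $l'$ is rejected, proposals between $i'$ and $j'$ or $k'$ are decided by $i'$, and $j',k'$ accept. We plan to handle this by choosing the inversion pair so that this holds. If the definition of ancestor only guarantees some branch, we adjust the profile: $j'$ and $k'$ accept all, $l'$ rejects all, and we define $i'$'s deviation to reproduce exactly the decisions along the branch from the root to $i'j'$. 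Exchanges at earlier nodes not involving $i'$ can only involve $l'$, or be $j'k'$, which is infeasible, so their outcomes are forced. This makes the branch realizable.

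Now we compare the two strategies of $i'$. Under $\stratacci$ with $i'=i$ relabeled, $i'$ accepts $i'k'$ at $v$, if not earlier at some ancestor node where $i'$ was offered $i'j'$. That earlier case is impossible, since $i'j'$ lies below the rejection of $i'k'$ and cannot be proposed twice. The final outcome is exactly one exchange $i'k'$, because afterwards $i'j'$ is useless to $i'$ and $l'$ blocks everything else. This gives utility $u_0+(1-\beta_{i'k'})-\beta_{i'j'}\cdot 0$, with all other agents unchanged. Under the deviation, $i'$ rejects $i'k'$ at $v$ and accepts $i'j'$ when it is proposed. She obtains one exchange with $j'$, giving utility $u_0+(1-\beta_{i'j'})$. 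No other agent gains a good in either case, since $l'$ rejects everything and $j'k'$ is infeasible. By the utility decomposition \eqref{eqn:utildecomp}, the difference is $\beta_{i'k'}-\beta_{i'j'}>0$, so the deviation is strictly better.

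The main obstacle is the path-realizability step: ensuring that, under the chosen profile of the others, play actually reaches the node where $i'j'$ is proposed after $i'$ rejects $i'k'$. Equivalently, the protocol must not terminate, and must not instead offer $i'$ some other alternative such as a second $i'k'$-type exchange. We will argue this from the restricted structure of the $4$-agent, $2$-good instance, where the only proposals are the four exchanges $ij,jk,kl,il$. Combined with the exclusion of duplicate proposals, this leaves few enough cases to check by hand.
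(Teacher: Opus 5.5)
Your route is the paper's. You use the same profile ($l'$ rejects everything, $j'$ and $k'$ accept everything) and the same comparison of ``only $i'k'$'' against ``only $i'j'$'', giving a gain of $\beta_{i'k'}-\beta_{i'j'}>0$. The gap is in the step you yourself call the main obstacle: you leave it to an unspecified case check, and your provisional fix does not work. In the tree, the branch from the node $v$ proposing $i'k'$ down to the node proposing $i'j'$ may contain an \emph{accepted} $l'k'$. Nothing forbids this, since $l'k'$ does not conflict with $i'j'$. Under your profile $l'$ rejects it, so ``the outcomes are forced'' does not make that branch realizable.

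More fundamentally, your argument never uses that $M$ is stable, and without stability the lemma is false. Consider the protocol ``propose $i'k'$; if it is rejected, propose $l'k'$; propose $i'j'$ only if $l'k'$ was accepted; otherwise stop''. It contains the inversion pair, yet under your profile the deviation leaves $i'$ with no exchange at all. No other profile helps either: e.g.\ if $l'$ accepts $l'k'$, the deviation yields $1-\beta_{i'j'}-\beta_{i'l'}-\beta_{i'k'}<1-\beta_{i'k'}$ relative to the initial utility. The paper's own proof just asserts that $i'k'$ comes before $i'j'$ along the realized play, so it glosses over the same point.

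To close the gap, argue in two steps.

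(a) Every proposal on the root-to-$v$ path, and every other proposal at $v$, involves $l'$. Agent $i'$ is offered nothing before $v$, since neither $i'k'$ nor $i'j'$ can occur twice on a path, and $j'k'$, $i'l'$ are infeasible. Each of these $l'$-proposals must be rejected on that path. An accepted $l'k'$ would make $i'k'$ a repeated receipt for $k'$, and an accepted $l'j'$ would make the later $i'j'$ a repeated receipt for $j'$. Hence, with $l'$ rejecting everything, play reaches $v$ whatever $i'$ does. Under the accepting policy $i'k'$ is executed there, after which no further exchange can be executed.

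(b) Under the deviation you do not need to reach that particular $i'j'$ node. Once $i'k'$ is rejected, $i'$ can obtain $j'$'s good only via $i'j'$. Likewise $j'$ can obtain $i'$'s good only via $i'j'$, as $l'j'$ is never executed. So if $i'j'$ were never proposed, it would be a mutually beneficial, never-rejected exchange at termination, contradicting stability. Hence it is proposed, accepted by both, and the outcome is exactly $i'j'$.

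If the protocol may also condition on \emph{who} rejected (as the paper's closing remark allows), let the other agents reproduce the decisions on the prefix up to $v$. This is legitimate, since $i'$ makes no decision before $v$. Then let $l'$ reject everything afterwards; step (b) is unchanged.
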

    
    \begin{proof}[Proof of Lemma~\ref{inversion-pair}]
        Consider the following strategy profile: $j'$ and $k'$ accept all their proposals,
        while the remaining agent $l'$ who is not a part of the inversion pair rejects all proposals involving her.
        Since $l'$ rejects all, we know that any exchanges involving $l'$ will not happen. By definition of an inversion pair, we know that $M$ proposes exchange $i'k'$ before exchange $i'j'$. If $i'$ follows the accepting policy, the only accepted exchange is $i'k'$. Instead, if $i'$ uses a strategy that rejects $i'k'$ and accepts $i'j'$, the only accepted exchange is $i'j'$. Since $\beta_{i'j'} < \beta_{i'k'}$, $i'$ gets strictly higher utility by following the latter policy instead of the accepting policy.
    \end{proof}

    \parahead{Step 3}
    To complete the proof, it is sufficient to show that any stable protocol without an inversion pair is also not DSIC.
    We argue that there is only one possible tree which does not create an inversion pair.
    For this, note that there are 4 possible exchanges
    $ij, jk, kl, il$ on this instance.
    First note that every exchange must appear at least once to ensure stability, because in the case that all agents reject all exchanges, every available exchange must be proposed before the protocol realizes that the agents will reject it.
    Second, 
    in any path from the root to a leaf, an exchange can appear at most once: a protocol cannot propose an exchange more than once because the exchange is either accepted, in which case it has happened, or rejected, in which case it cannot be proposed again.
    Moreover, an exchange $i'j'$ should not be a descendant of $i'k'$ if $i'k'$ has already been accepted; as explained in Step 1, such protocols are trivially not DSIC.

    Now, let us try to arrange the 4 proposals while following the above rules and avoiding inversion pairs. In round 1 we must propose only $ij$, because proposing anything else will put $ij$ in an inverted position and proposing $kl$ with $ij$ will put $kl$ in an inverted position against $jk$.
    If $ij$ is accepted on round 1, then we have to propose $kl$ in the next round to ensure stability.
    If $ij$ is rejected, we must propose only $jk$ in round 2 because proposing anything else will put $jk$ in an inverted position and proposing $il$ with $jk$ will put $il$ in an inverted position against $kl$.
    Continuing in this fashion, we obtain
    the tree shown in Figure~\ref{fig:tree-sub2}.

    Now, let us consider the following strategy profile
    in this protocol.
    Suppose $i$ accepts $ij$ and rejects $il$, and $k$ and $l$ accepts all.
    If $j$ follows the accepting policy, exchanges $ij$ and $kl$ will happen, resulting in all agents possessing all goods. If $j$ rejects $ij$ and accepts $jk$, then $jk$ is the only exchange that happens because $i$ will reject $il$. We see that $j$ gets strictly higher utility by deviating from the accepting policy since two other agents do not receive goods.
    Thus, this protocol is not DSIC.
    \end{proof}

    Note that by definition, we allow the protocol to schedule differently based on \emph{who} rejects an exchange, instead of merely whether an exchange is accepted or rejected. For example, if the protocol knows that exchange $ij$ is rejected by $i$ but accepted by $j$, it might punish $i$ by not scheduling any other exchanges involving her.

    However, having this flexibility does not break the proof above. First, any protocol with an inversion pair is still not DSIC because the systematic counter example we constructed in Step 2 works out regardlessly. Second, the only stable protocol not creating an inversion pair is still the SAC protocol we discussed in Step 3. Thus, there is still no stable DSIC protocol on this instance.

\subsection{Proof of Theorem~\ref{thm:truthfulreporting}}
\label{sec:truthfulreporting}

Next, we prove Theorem~\ref{thm:truthfulreporting}, which
states that truthfully eliciting initial allocations is impossible in this problem setting.
The key intuition is that by pretending to have fewer goods than she does, an agent can cause a protocol to schedule more exchanges involving her,
and hence reduce the number of exchanges for other agents. 
While our proof below clearly relies on the specifics of our setting, we believe the underlying intuition carries over to other sharing models with freely replicable goods and competitive agents.

\begin{proof}
    Say that all three agents are participating so that  $\allagents = \partagents = \{i,j,k\}$.
    Suppose, by way of contradiction, that $\mech$ is a stable exchange protocol 
    in which for all \emph{true} initial allocations $\inity$, each agent truthfully reporting $\initdatai = \inityi$ and accepting all proposals is a Nash equilibrium. We will show the contradiction by
    studying the protocol's behavior under three different values for the \emph{reported} data allocations:
    \emph{Case (a): } $\initdatai = \initdataj = [1, 0]$ and $\initdatak = [0, 1]$;
    \emph{Case (b): } $\initdatai = [1, 1]$, $\initdataj = [1, 0]$ and $\initdatak = [0, 1]$;
    \emph{Case (c): } $\initdatai = [1, 0]$, $\initdataj = [1, 1]$ and $\initdatak = [0, 1]$.

    \subparahead{Case (a)}
    First consider $\initdatai = \initdataj = [1, 0]$ and $\initdatak = [0, 1]$.
    In this case, there are only two possible exchanges that $\mech$ can schedule, \ie
    $\feaspropsp(\initdata, \partagents) = \{((i, 1), (k, 2)), \, ((j, 1), (k, 2))\}$.
    
    If $\mech$ does not propose any exchanges on round 1, \ie it terminates immediately, then $\datafinal = \initdata$, and the resulting final allocation would not be stable in the event that all agents had truthfully reported, as $\feaspropsp(\datafinal, \partagents)$ is non-empty, and the proposals in $\feaspropsp(\datafinal, \partagents)$ were never rejected by the agents.
    On the other hand, if $\mech$ proposes both exchanges in $\feaspropsp(\initdata, \partagents)$, then agent $k$ would not accept both proposals: precisely, if she accepts both,
    the final allocation would be
    $\datafinali = \datafinalj =  \datafinalk = [1, 1]$, and $k$'s utility will be $2 - 2\betaik - 2\betajk$; however, had she accepted only one of those proposals (say the exchange with $i$), 
    the final allocation would be
    $\datafinali = \datafinalk = [1, 1]$ and $\datafinalj = [1, 0]$, and $k$'s utility will be $2 - 2\betaik - \betajk$ which is larger.
    Hence, $\mech$ should only propose one of the two proposals in $\feaspropsp(\initdata, \partagents)$ in round $1$. Suppose that $\mech$ proposes
    $((i, 1), (k, 2))$ with some probability $\alpha$
    and $((j, 1), (k, 2))$ with probability $1-\alpha$.

    \subparahead{Case (b) and (c)}
    Next, consider Case (b) where $\initdatai = [1, 1]$, $\initdataj = [1, 0]$ and $\initdatak = [0, 1]$.
    In this case, there is only one feasible exchange $\feaspropsp(\initdata, \partagents) = \{((j, 1), (k, 2))\}$. To achieve stability, $\mech$ has to propose this exchange.
    Similarly, in Case (c) where $\initdatai = [1, 0]$, $\initdataj = [1, 1]$ and $\initdatak = [0, 1]$.
    $\mech$ has to propose the single feasible
    exchange in $\feaspropsp(\initdata, \partagents) = \{((i, 1), (k, 2))\}$.

    \subparahead{Contradiction}
    First assume $\alpha > 0$ (from Case (a)) 
    and that \emph{true} initial allocations are
    $\inityi = [1, 1]$, $\inityj=[1, 0]$, $\inityk=[0, 1]$.
    Suppose that $j, k$ are reporting truthfully and following the accepting  policy.
    In this case, if $i$ truthfully report $\initdatai = [1, 1]$, via our analysis of Case (b), agents $j$ and $k$ receive an additional good, and her utility 
    is $U_i \defeq 2 - 2\betaij - 2\betaik$.
    On the other hand, had she untruthfully reported
    $\initdatai = [1, 0]$ and followed the accepting policy, she will prevent $j$ from receiving an additional good with probability $\alpha$;
    hence, her expected utility will be
    $(1-\alpha)(2 - 2\betaij - 2\betaik) +
    \alpha(2 - \betaij - 2\betaik) = 2 - (2-\alpha)\betaij - 2\betaik > U_i$.

    Next, if $\alpha=0$, a similar argument reveals that $j$ may be able to hide her goods and achieve a higher utility.
    Specifically, suppose the \emph{true} initial allocations are
    $\inityi = [1, 0]$, $\inityj=[1, 1]$, $\inityk=[0, 1]$,
    and that $i, k$ are reporting truthfully and accepting all proposals.
    If $j$ reports truthfully, her utility is
    $U_j \defeq 2 - 2\betaij - 2\betajk$.
    However, had she untruthfully reported
    $\initdataj = [1, 0]$ and followed the accepting policy, she will prevent $i$ from gaining additional goods, and her utility will be
    $2 - \betaij - 2\betaik > U_j$.
\end{proof}

\subsection{Proof of Observation~\ref{obs:singleround}}
\label{sec:singleround}

Next, we prove Observation~\ref{obs:singleround}, which shows that a single-round protocol cannot be both stable and NIC.
Recall from Desideratum~\ref{des:stability} that stability requires for all possible agent strategies, 
every remaining mutually beneficial exchange be rejected by the agents upon termination.
Hence,
if a protocol were to terminate after a single round, to achieve stability, it will need to propose all feasible exchanges as agents may reject some of them.
However, such a protocol will not be NIC as a rational agent will not accept multiple proposals in which she is receiving the same good, as she may be giving away multiple goods in exchange for just one.
The following proof formalizes this intuition (also see Remark~\ref{rem:singleroundoutside}).

\begin{proof}[Proof]
Say all three agents participate, so that $\allagents = \partagents = \{i,j,k\}$, with initial data $\initdatai = \initdataj = [1,0]$ and $\initdatak = [0,1]$. There are exactly two feasible exchanges, $\feasprops(\partagents) = \feaspropsp(\initdata, \partagents) = \{((i,1),(k,2)),\,((j,1),(k,2))\}$.
If the planner proposes both exchanges in round~1, the protocol is not NIC: agent~$k$ would not accept both. If she did, the final allocation would be $\datafinali = \datafinalj = \datafinalk = [1,1]$, giving her utility $2 - 2\beta_{ik} - 2\beta_{jk}$; by instead accepting only one proposal (say, with~$i$), the final allocation becomes $\datafinali = \datafinalk = [1,1]$ and $\datafinalj = [1,0]$, yielding utility $2 - 2\beta_{ik} - \beta_{jk} > 2 - 2\beta_{ik} - 2\beta_{jk}$.  

If the planner proposes only one of the two exchanges, the protocol produces a stable allocation if agents accept this proposal. However, rejection of this proposal by any agent leads to termination with $\datafinal = \initdata$. Since $\feaspropsp(\datafinal, \partagents)$ has two elements and only one of them has been rejected, the protocol is not stable in this case, and thus fails to satisfy stability for all agent strategies.
Likewise, if the planner were to propose no exchanges at all, the protocol would also be unstable.
\end{proof}

\begin{remark}
\label{rem:singleroundoutside}
It is worth noting that the above proof relies on the requirement in Desideratum~\ref{des:stability} that, upon termination, any mutually beneficial exchange $p \in \feaspropsp(\datafinal(M,\agsubset,\stratpart), \agsubset)$ has already been rejected. 
From a practical perspective, this ensures that strategic agents cannot reject proposals from the protocol in anticipation of more favorable exchanges they could execute outside the protocol.

To illustrate, suppose we relax Desideratum~\ref{des:stability} to require stability of the terminal allocation only when all agents follow the accepting policy, 
\ie only require $\feaspropsp(\datafinal(M,\agsubset,\stratpart), \agsubset) = \emptyset$ under the accepting strategies.
Consider again the example in the proof of Observation~\ref{obs:singleround} and assume $\betajk < \betaik$.
A single-round protocol that proposes only $((i,1),(k,2))$ satisfies this weaker notion of stability.
It is also IR and NIC as both $i$ and $k$ benefit.
However, $k$ could reject this proposal and privately arrange $((j,1),(k,2))$ with the less competitive agent $j$.
Hence, if agents can act outside the planner's protocol, this would not be a good exchange protocol.

One could alternatively model agent behavior outside the protocol (where agents directly propose and accept exchanges), but this would be cumbersome and significantly complicate the model.
Our stronger stability condition---while meaningful on its own right---also rules out such outside exchanges.
\end{remark}

\section{On Pareto-efficiency}
\label{sec:pediscussion}

In noncompetitive models for the exchange of goods, \emph{stability}---a state in which agents cannot benefit from further exchanges---implies Pareto efficiency (PE). In contrast, we show that in highly competitive settings, these two properties can diverge: an unstable allocation may still be PE, while a stable allocation may fail to be PE.
To illustrate this tension, consider an example with three agents $i, j, k$ and three goods, where $\initdata_i = [1, 0, 0]$, $\initdata_j = [0, 1, 0]$, $\initdata_k = [0, 0, 1]$, and $\beta_{ij} = \beta_{ik} = \beta_{jk} = 0.9$. The initial allocation is clearly unstable, as any two agents are incentivized to exchange goods (\eg $i$ will trade good 1 with $j$ for good 2). The only stable allocation through pairwise exchanges is $x_i = x_j = x_k = [1, 1, 1]$, where all agents possess all goods. However, in this allocation, each agent's utility is $3 - 3 \times 0.9 - 3 \times 0.9 = -2.4$, strictly worse than the utility of $-0.8$ in the initial (unstable) state. Thus, the stable allocation is Pareto-dominated by the initial one.

The above example illustrates a highly competitive setting (large $\betaij$'s) where stability and PE diverge.
We will show that even in low-competition regimes (small $\betaij$'s), stability does not imply PE.
We conjecture, however, that \algname{} achieves PE in such low-competition regimes. We provide partial support by proving this under an intermediate conjecture on \algname{}'s behavior.

\subparahead{Definitions}
We will now formalize these intuitions. To simplify the exposition, we assume all agents participate (\ie $\partagents = \allagents$) and that each good is initially held by at least one agent (\ie $\forall r \in [m], \exists\, i \in \allagents$ such that $\initdatair = 1$). The latter is without loss of generality since if no agent initially holds a good, it cannot be part of any exchange.

Let us also define Pareto efficiency.
An allocation $x \in \{0,1\}^{|\allagents| \times \numgoods}$ \emph{Pareto-dominates} another allocation $x'$ if every agent has equal or higher utility in $x$ than in $x'$, and at least one agent has strictly higher utility; that is, $\utilalloci(x) \geq \utilalloci(x')$ for all $i\in\allagents$ and $\utilalloci(x) > \utilalloci(x')$ for some $i\in\allagents$. An allocation is \emph{Pareto-efficient} if it is not Pareto-dominated by any other allocation.

Next, we identify two (non-exhaustive) regimes characterizing the level of competition among agents. We refer to a set of agents as a \emph{low externality coalition} (LEC) if, upon giving each agent one additional good they do not currently possess, all agents are made better off.
Intuitively, while agents still wish to have more goods for themselves, they all benefit if each receive an equal amount of additional goods.
Conversely, we call them a \emph{high externality coalition} (HEC) if, upon giving each agent one additional good they do not currently possess, no agent is made better off.
Using~\eqref{eqn:utilsummary}, the following definition provides an equivalent characterization of these conditions.

\begin{definition}[LEC/HEC]
A set of agents $\allagents$ is an LEC if $1 - \sum_{j \neq i} \beta_{ij} > 0$ for all $i \in \allagents$. Similarly, $\allagents$ is an HEC if $1 - \sum_{j \neq i} \beta_{ij} \leq 0$ for all $i \in \allagents$.
\end{definition}

\subsection{Pareto-efficiency in HECs}

The following theorem illustrates that Pareto efficiency (PE) is an ill-suited objective in high externality coalitions. In particular, the initial allocation is already Pareto-efficient, and it is impossible to make all agents better off by simply giving them additional goods. In fact, as the example above demonstrates, agents can be strictly worse off in a stable state compared to the beginning.

\begin{restatable*}{theorem}{pehardness}
\label{thm:pehardness}
Suppose $\allagents$ is an HEC and that none of the agents possess all goods initially. Then, from any initial allocation $\initdata$, any allocation $x$ in which agents receive additional goods does not Pareto-dominate $\initdata$.
\end{restatable*}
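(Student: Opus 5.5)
The plan is a direct aggregation argument that uses only the utility formula~\eqref{eqn:utilsummary} and the symmetry $\betaij=\betaji$. Let $x$ be any allocation in which agents only receive additional goods, \ie $x_{i,r}\ge \initdata_{i,r}$ for all $i,r$. Define $d_i \defeq \onev^\top x_i - \onev^\top \initdatai \ge 0$, the number of new goods agent $i$ gains. By~\eqref{eqn:utilsummary}, the change in $i$'s utility is
\[
\Delta_i \defeq \utilalloci(x)-\utilalloci(\initdata) = d_i - \sum_{j\neq i}\betaij\, d_j .
\]
Everything reduces to showing that the vector $(\Delta_i)_i$ cannot be nonnegative with at least one strictly positive entry.

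Suppose, for contradiction, that $x$ Pareto-dominates $\initdata$, so $\Delta_i\ge 0$ for all $i$ and $\Delta_{i_0}>0$ for some $i_0$. Summing over all agents and exchanging the order of summation in the double sum gives
\[
\sum_{i}\Delta_i = \sum_i d_i - \sum_i\sum_{j\neq i}\betaij d_j = \sum_i d_i - \sum_j d_j\sum_{i\neq j}\beta_{ji} = \sum_i d_i\Big(1-\sum_{j\neq i}\betaij\Big).
\]
The exchange of summation uses the symmetry $\betaij=\beta_{ji}$. By the HEC definition, each factor $1-\sum_{j\neq i}\betaij\le 0$, and each $d_i\ge 0$, so $\sum_i\Delta_i\le 0$. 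This contradicts $\sum_i\Delta_i\ge\Delta_{i_0}>0$. Hence no such $x$ Pareto-dominates $\initdata$.

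The hypothesis that no agent initially holds all goods is only needed to ensure the statement is non-vacuous: it guarantees that allocations with additional goods exist, such as the ones reachable by the exchanges in the protocol. I would note this remark rather than use the hypothesis in the argument. There is no real obstacle. The only step needing care is the reindexing of the double sum, where symmetry of $\beta$ is essential, together with the observation that the argument covers the degenerate case $d\equiv 0$ (then all $\Delta_i=0$, so there is no strict improvement). If the intended reading instead allows arbitrary $x$, including ones that remove goods, I would restrict to the monotone case as the statement's phrase ``agents receive additional goods'' indicates, since that is the only case relevant for exchange protocols.
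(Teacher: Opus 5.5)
Your proof is correct, but it takes a different route from the paper's. The paper uses an extremal argument. It picks the agent $i$ who gains the fewest additional goods, say $q$. It then compares $x$ with a hypothetical allocation $x'$ in which every agent gains exactly $q$ goods; by the HEC condition, $x'$ is weakly worse than the initial allocation for everyone. If some agent gains more than $q$ in $x$, the min-gainer $i$ is strictly worse off in $x$ than in $x'$, and the case where all gains equal $q$ is handled separately.

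You instead aggregate. Your identity $\sum_i \Delta_i = \sum_i d_i\bigl(1-\sum_{j\neq i}\beta_{ij}\bigr)$ is exactly the change in the social-welfare expression noted in the introduction. So you get the stronger fact that total welfare cannot increase, with no case split.

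The price is that your reindexing genuinely needs $\beta_{ij}=\beta_{ji}$. Without symmetry you would get the column sums $\sum_{j\neq i}\beta_{ji}$, which the HEC condition does not control. The paper's argument uses only the row sums $\sum_{j\neq i}\beta_{ij}$, so it would survive asymmetric competition levels. It also names a specific agent, the min-gainer, who is not made better off. Your argument only concludes that someone is strictly worse off unless all utility changes are zero.

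As you note, neither argument uses the hypothesis that no agent initially holds all goods.
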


\begin{proof}
First, consider any other allocation $\data$ where agents have received additional goods; since goods cannot be taken away from the initial allocation, each agent has at least as many goods as in $\initdata$.
Let us count how many more goods each agent has in $\data$ compared to $\initdata$, and identify the agent $i$ who gains the fewest additional goods---denote this number by $q$ (with $q \geq 0$).

Next, consider an allocation $\data'$ obtained by starting from $\initdata$ and giving each agent exactly $q$ additional goods. By the HEC conditions, it follows that each agent in $\data'$ is not better off than in $\initdata$, since each agent loses utility
$
q \left(1 - \sum_{j \neq i} \beta_{ij} \right) \leq 0
$.
Now we compare $\data'$ and $\data$. If $\data = \data'$, we have established that $\data'$ does not Pareto-dominate $\initdata$. If $\data \neq \data'$, agent $i$’s utility is worse in $\data$ than in $\data'$ as other agents have gained strictly more goods.
Hence, $\data$ does not Pareto-dominate $\initdata$.
\end{proof}

\subsection{Pareto-efficiency in LECs}

Next, we consider the case where the agents form a low externality coalition.
We believe that \algname{} achieves PE allocations in this case.
We support this claim with a partial proof, under the following conjecture about \algname's behavior.

\begin{conjecture}
    \label{conj:clearperfagent}
    Suppose all agents follow the accepting policy. Then, for any initial allocation $\initdata$, in the final allocation $\datafinal$ produced by \emph{\algname}, there is at least one agent who is allocated all the goods, \ie there exists $i\in\allagents$ such that $\datafinalir = 1$ for all $r\in[m]$.
\end{conjecture}

We have verified this conjecture extensively via simulations in Appendix~\ref{app:simulations}.
The following theorem shows that \algname{} finds PE allocations when this conjecture is true.

\begin{restatable*}{theorem}{clearpelec}
    \label{thm:clearpelec}
    Suppose the agents are an LEC and they all participate with the accepting policy.
    If Conjecture~\ref{conj:clearperfagent} is true, then the final allocation produced by
    \emph{\algname{}} is Pareto-efficient.
\end{restatable*}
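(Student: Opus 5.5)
The plan is to let the agent guaranteed by Conjecture~\ref{conj:clearperfagent} serve as an anchor that cannot gain any more goods, and to show that the LEC condition turns Pareto dominance into a linear system whose only solutions would force every agent, including that anchor, to hold strictly more goods.

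Let $\datafinal$ be the final allocation of \algname{} when all agents accept, and fix an agent $p$ with $\datafinal_{p,r}=1$ for all $r\in[m]$, as provided by Conjecture~\ref{conj:clearperfagent}. Suppose, for contradiction, that some allocation $y\in\{0,1\}^{|\allagents|\times m}$ Pareto-dominates $\datafinal$. Since each utility in~\eqref{eqn:utilsummary} depends only on the counts $\onev^\top x_j$, define $d_j = \onev^\top y_j - \onev^\top \datafinal_j$ for each $j\in\allagents$. Pareto dominance then reads
\[
u_k(y)-u_k(\datafinal) \;=\; d_k - \sum_{j\neq k}\beta_{kj} d_j \;\geq\; 0 \quad \text{for all } k,
\]
with strict inequality for at least one $k$. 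Write this as $A d = c$, where $c\geq 0$, $c\neq 0$, and $A = I - B$ with $B_{kj}=\beta_{kj}$ for $k\neq j$ and $B_{kk}=0$. Finally, because $p$ already holds all $m$ goods, we have $d_p \le 0$.

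The key step is to invert $A$ with a nonnegative inverse. The LEC condition $\sum_{j\neq k}\beta_{kj}<1$ says every row sum of the nonnegative matrix $B$ is below $1$, so $\|B\|_\infty<1$. The Neumann series $A^{-1}=\sum_{n\geq 0}B^n$ therefore converges and is entrywise nonnegative. Moreover, since every $\beta_{kj}\in(0,1)$ is strictly positive, $B$ has all off-diagonal entries positive. Hence $I+B$ already has every entry strictly positive, and so $A^{-1}$ is entrywise strictly positive.

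It follows that $d = A^{-1}c$ with $c\ge 0$ and $c\ne 0$, so $d_j>0$ for every $j$. In particular $d_p>0$, which contradicts $d_p\le 0$. Thus no allocation Pareto-dominates $\datafinal$. The argument uses only the existence of the perfect agent $p$, not the specific structure of \algname{}, which is consistent with the intended role of Lemma~\ref{perfect-agent-PE}. It also does not need to restrict $y$ to allocations reachable by exchanges.

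The main point to handle carefully is the strict positivity of $A^{-1}$, since it is what converts a single strict utility improvement into a strict gain in goods for $p$. Nonnegativity of $A^{-1}$ alone would give only $d\ge 0$, and hence $d_p=0$, with no contradiction. Strict positivity, which rests on $\beta_{kj}>0$ for all $k\neq j$, is therefore essential. If some $\beta$'s were allowed to be zero, I would instead argue via irreducibility of $B$, or directly: from $d\ge 0$ and $d_p=0$, the $p$-th row of $Ad=c$ gives $-\sum_{j\neq p}\beta_{pj}d_j=c_p\ge 0$, forcing $d_j=0$ for all $j$ and hence $c=0$, again a contradiction. With the paper's assumption $\beta\in(0,1)$ none of this is needed.
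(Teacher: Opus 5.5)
Your proof is correct, and it follows the paper's outline. The theorem is the sufficiency half of Lemma~\ref{perfect-agent-PE}, which rests on the claim that a Pareto improvement must give every agent strictly more goods; that is impossible for an agent already holding all $m$. You differ from the paper in how you prove that claim.

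The paper argues extremally. It takes the agent $k$ who loses the most goods, say $q$, and removes $q$ goods from everyone; by the LEC condition this weakly hurts all agents. If $x$ differs in counts from this shifted allocation, some other agent holds strictly more while $k$ holds the same, so $k$ is strictly worse off because $\beta_{kj}>0$. This is an elementary discrete maximum principle that mirrors the definition of an LEC.

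You instead write the utility gains as $(I-B)d=c$ and invert via the Neumann series. This gives $d=(I-B)^{-1}c$, where $(I-B)^{-1}\geq I+B$ entrywise is strictly positive. Your route yields an explicit formula for the count changes. More substantively, it uses only $\rho(B)<1$ rather than the row-sum condition $\|B\|_\infty<1$. Since $\rho(B)<1$ is strictly weaker, your argument shows that an allocation with a perfect agent is Pareto-efficient beyond LECs. For instance, in Example~\ref{neither-LEC-nor-HEC} agent $i$'s row sum is $1.04$ but $\rho(B)\approx 0.54$, which confirms the paper's unproved assertion that the all-goods allocation there is Pareto-efficient.

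There is one slip in your closing aside, which does not affect the main argument. If some $\beta$'s were zero, row $p$ of $Ad=c$ would only force $d_j=0$ for those $j$ with $\beta_{pj}>0$. You would then have to propagate this along the support graph of $B$. If that graph is disconnected, the conclusion can genuinely fail: an agent whose $\beta$'s are all zero and who lacks a good can be given it at no cost to anyone.
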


The following lemma characterizes necessary and sufficient conditions for Pareto-efficiency when the agents form an LEC.
Theorem~\ref{thm:clearpelec} is a direct consequence of the sufficient condition, while we will use the necessary condition for the ensuing discussion.

\begin{lemma} \label{perfect-agent-PE}
    In an LEC, an allocation $\data \in\{0,1\}^{|\allagents| \times  \numgoods}$ is Pareto-efficient if and only if at least one agent has received all goods.
\end{lemma}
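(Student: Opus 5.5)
The sufficiency direction, which is the one Theorem~\ref{thm:clearpelec} needs, reduces to a linear-algebra fact about the matrix $I-B$, where $B$ is the off-diagonal competition matrix. The necessity direction is a one-line construction that directly mirrors the definition of an LEC. Throughout I use only~\eqref{eqn:utilsummary}. Utilities depend on an allocation only through the vector of counts $n_k = \onev^\top \data_k$, since
\[
\utilalloci(\data) = n_i - \sum_{j\neq i}\betaij n_j .
\]

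\emph{Necessity.} Suppose no agent holds all goods in $\data$. Then every agent $k$ lacks some good $r_k$. Let $\data'$ be obtained from $\data$ by additionally giving each agent $k$ a copy of $r_k$. This is a legitimate allocation, since every good is initially held by someone, so in particular it is a copy of an existing good. Each count increases by exactly one, so for every $k$,
\[
\utilalloc_k(\data')-\utilalloc_k(\data) = 1-\sum_{j\neq k}\beta_{kj} > 0
\]
by the LEC condition. Hence $\data'$ Pareto-dominates $\data$, and $\data$ is not Pareto-efficient.

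\emph{Sufficiency.} Suppose agent $i$ holds all $m$ goods in $\data$, and suppose toward contradiction that $y$ Pareto-dominates $\data$.
\begin{itemize}
\item Set $d_k = \onev^\top y_k - \onev^\top \data_k$ and $c_k = \utilalloc_k(y)-\utilalloc_k(\data)$. Then $c = (I-B)d$, where $B_{kj}=\beta_{kj}$ for $k\neq j$ and $B_{kk}=0$.
\item Pareto dominance gives $c\geq 0$ entrywise and $c\neq 0$.
\item Because $i$ already owns every good, $d_i\leq 0$.
\end{itemize}
The LEC condition says every row sum of $B$ is strictly less than $1$, so $\|B\|_\infty<1$. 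Hence the Neumann series converges and $(I-B)^{-1}=\sum_{t\ge 0}B^t$.

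I then claim this inverse is strictly positive entrywise when $|\allagents|\geq 2$. Off-diagonal entries are positive already from the $t=1$ term, since every $\beta_{kj}\in(0,1)$. Diagonal entries are positive from the $t=0$ term (and also from $B^2$). Therefore $d=(I-B)^{-1}c$ has every coordinate strictly positive, because $c\geq 0$ is nonzero. In particular $d_i>0$, contradicting $d_i\le 0$. The degenerate case $|\allagents|=1$ is trivial: the single agent holding all goods cannot improve at all.

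The main obstacle is this sufficiency step. The naive argument — agent $i$ cannot gain own goods, so others must not gain — fails, because other agents may gain goods while $i$'s loss is compensated elsewhere. The fix is to exploit the strict diagonal dominance of $I-B$, i.e.\ the M-matrix structure, so that any nonnegative nonzero vector of utility gains forces every count to strictly increase, including agent $i$'s.
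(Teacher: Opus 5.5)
Your proof is correct. The necessity direction is identical to the paper's. For sufficiency, you and the paper prove the same intermediate fact by different means. That fact is: in an LEC, if one allocation Pareto-dominates another, every agent's count of goods strictly increases, which is impossible for an agent already holding all $m$ goods.

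The paper uses an extremal argument. It picks the agent $k$ whose count drops the most, say by $q\ge 0$, and compares against the allocation obtained from $x'$ by removing $q$ goods from everyone. It then applies $1-\sum_{j\neq k}\beta_{kj}>0$ at that single agent, together with $\beta>0$, to show $k$ is weakly worse off, and strictly so unless all counts coincide. You instead write the utility change as $c=(I-B)d$ and use entrywise positivity of $(I-B)^{-1}=\sum_{t\ge 0}B^t$. These are two standard proofs of the monotonicity of a strictly diagonally dominant Z-matrix; the paper's is essentially a discrete maximum principle.

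The paper's route is fully elementary. However, its intermediate allocation need not exist as a genuine $0/1$ allocation, since some agent may hold fewer than $q$ goods in $x'$. So it is implicitly reasoning about count vectors, which your formulation makes explicit from the start. Your route is shorter and gives the explicit relation $d=(I-B)^{-1}c$. It also shows that sufficiency needs only that $(I-B)^{-1}$ exist and be entrywise positive, for instance $\rho(B)<1$ with all $\beta_{kj}>0$, which is weaker than the LEC row-sum condition. The row-sum form of LEC is genuinely needed only for necessity, where giving everyone one extra good must help every agent.
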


\begin{proof}
The necessary condition follows directly from the definition of an LEC: if none of the agents have all the goods, we can make a Pareto-dominating allocation by giving each agent one more good they do not have.
Next, we make the following claim:
``if an allocation $x$ Pareto dominates allocation $x'$, then every agent must have strictly more goods in $x$ than in $x'$''.
The sufficient condition follows from this claim.
In particular, if an agent is allocated all the goods in some allocation $x$,
in any other allocation $x'$, this agent cannot receive strictly more goods; as this allocation cannot be Pareto-dominated, it is Pareto-efficient.

All that is left to do is to prove the claim above, which we will do by way of contradiction.
Assume that not all agents in $x$ have strictly more goods than in $x'$. 
First, let us compare the number of goods each agent holds in $x$ versus $x'$, and identify an agent $k$ who loses the greatest number of goods in $x$ compared to $x'$, \ie $k = \argmax_{i\in\allagents}(\onev^\top \datai' - \onev^\top \datai)$. Let $q \geq 0$ denote the number of goods that $k$ loses. Consider an intermediate allocation $x''$, obtained by starting from $x'$ and removing $q$ goods from each agent.

By the LEC condition, each agent’s utility in $x''$ decreases by $-q(1 - \sum_{j \neq i} \beta_{ij}) \leq 0$ compared to $x'$, so no agent is better off in $x''$ than in $x'$. If $x = x''$, then clearly $x$ does not Pareto dominate $x'$, as no agent strictly improves. 
Otherwise, if $x \neq x''$, then some agent $j \neq k$ must have strictly more goods in $x$ than in $x''$, while agent $k$ has exactly the same number of goods in $x$ and in $x''$. This implies that $k$’s utility 
is strictly lower in $x$ than in $x''$, as other agents have more goods.
Therefore, $x$ does not Pareto-dominate $x'$ as well.
\end{proof}

\parahead{Stability does not imply Pareto-efficiency}
We conclude this section with an example which illustrates that not all stable exchange protocols are Pareto-efficient, even in LECs.
It also illustrates that \retrospect{} may play an important role in satisfying Conjecture~\ref{conj:clearperfagent} (in addition to NIC).

\insertFigPO

\begin{example}
Suppose we have three agents $\{i, j, k\}$ and $m = 9$ goods, as illustrated in Figure~\ref{fig:POexample}. Let $\initdata_i = [0, 0, 1, 1, 1, 1, 0, 0, 0]$, $\initdata_j = [0, 0, 0, 0, 0, 0, 1, 1, 1]$, and $\initdata_k = [1, 1, 1, 1, 1, 0, 0, 0, 0]$, with competition factors satisfying $\beta_{ij} < \beta_{ik} < \beta_{jk}$. Assume all agents follow the accepting policy.
Suppose we run \emph{\algname{}} without \emph{\retrospect{}};
this exchange protocol is provably stable since the stability proof does not involve \emph{\retrospect{}}.
This protocol schedules the following exchanges: between $i$ and $j$, it schedules $((i, 3), (j, 7))$, $((i, 4), (j, 8))$, and $((i, 5), (j, 9))$; between $i$ and $k$, it schedules $((i, 6), (k, 1))$; and between $j$ and $k$, it schedules $((j, 7), (k, 1))$ and $((j, 8), (k, 2))$. The final allocation is stable, as no two agents can beneficially trade goods they initially owned (see Figure~\ref{fig:PObad}).
However, it is not PE since all agents still lack a good (see Lemma~\ref{perfect-agent-PE}).

The failure to produce an agent with all goods stems from suboptimal exchange choices. For instance, if $j$ had received good 6 from $i$ instead of good 3, she could have later offered good 9 to $k$ in exchange for good 3. Interestingly, \emph{\retrospect{}} rearranges the exchanges to enable such improvements (see Figure~\ref{fig:POgood}). While its primary goal is to ensure NIC through this rearranging, we find that it also helps \emph{\algname{}} to produce a Pareto-efficient final allocation as a consequence.
\end{example}

\subsection{Pareto-efficiency outside LECs and HECs}

Finally, we consider cases where agents form neither HECs nor LECs. In these settings, as the following example demonstrates, the potential to improve everyone's utility from the initial state through additional exchanges may depend on the initial allocation~$\initdata$.

\begin{example} \label{neither-LEC-nor-HEC}
    Suppose we have $m=3$ goods and five agents $\{i,j,k, l, h\}$.
    Suppose $\beta_{ij} = \beta_{ik} = \beta_{il} = \beta_{ih} = 0.26$ and
    $\beta_{jk} = \beta_{jl} = \beta_{jh} = \beta_{kl} = \beta_{kh} = \beta_{lh} = 0.01$.
    That is, agent $i$ is highly competitive with the others while the rest have low competition with each other.
    These agents are neither an LEC nor an HEC because if we give all of them an additional good, 
    $i$'s utility decreases by $1-4\times0.26 = -0.04$,
    while other agents' utilities increase by $1-0.26-3\times0.01 = 0.71$.

    First assume that $\initdata_i = [0, 0, 1]$, $\initdata_j = \initdata_k = \initdata_l = [1, 1, 0]$, and $\initdata_h = [1, 0, 1]$.
    If we run \emph{\algname{}} on this instance (with tie-breaking rules about $\beta_{ij}$'s on line \ref{increasing-beta}), it will schedule the following exchanges:
    $((j, 2), (h, 3))$, $((i, 3), (k, 1))$, and $((i, 3), (l, 2))$. This results in a final allocation where all agents receive all goods.
    The utility change is $2-4\times0.26 = 0.96$ for $i$ and $1-2\times0.26-3\times0.01 = 0.45$ for the others.
    Hence, this final stable allocation Pareto-dominates the initial allocation, and is in fact Pareto-efficient.
    In this example, by gaining 2 goods, $i$ offsets her high competition with others.

    Next, let us consider a different initial allocation where $i$ also holds good 2, \ie $\initdatai = [0, 1, 1]$, while others have the same initial goods.
    Clearly the initial allocation is not stable.
    However, no stable allocation can Pareto-dominate the initial allocation: if all agents receive an additional good, $i$ is worse off, whereas if only a subset of the agents receive an additional good, those who did not receive goods are worse off.
    While some stable allocations may be Pareto-efficient, so is the unstable initial allocation.

\end{example}

\section{Discussion}
\label{sec:discussion}

We briefly discuss some alternative modeling choices.

\parahead{Core-like stability}
One could define stability using a core-based notion from coalitional games~\citep{shapley1974cores, shapley1971core}, but
this too is unsuitable under large negative externalities. A subset of agents might break away from the protocol (grand coalition) to obtain higher utility, but this is only because it reduces overall sharing.
Moreover, its members need not remain committed to the break-away coalition since
they can benefit by trading outside the coalition. 
Hence, the core fails to capture stability in highly competitive settings.
We illustrate this via the following example.

\begin{example}
Consider an instance with 4 agents $\{i, j, k, l\}$ and 4 goods, with each agent possessing exactly one unique good, \ie $\initdatai = [1, 0, 0, 0]$, $\initdataj = [0, 1, 0, 0]$, etc.
Moreover, say that the competition factors ($\beta$'s) are $0.9$ for all pairs.
In this setting, under the accepting strategy profile, any stable protocol must produce the allocation where all agents possess all goods and each agent has utility $4 - 12 \times 0.9 = -6.8$.
The core would require that no subset of agents be better off if they choose to leave the protocol and conduct pairwise exchanges within themselves.
However, in this example, any 3 agents could benefit by leaving the protocol and exchanging privately among themselves; for example, if $\{i, j, k\}$ leaves and exchanges privately, each of them has 3 goods in the final allocation, while $l$ has only 1, resulting in a higher $3 - 7 \times 0.9 = -3.3$ utility for each of $i,j$, and $k$. However, the $\{i, j, k\}$ coalition is also unstable, because any 2 of them, say $i$ and $j$, also have the incentive to leave and exchange privately. Doing so will result in $i$ and $j$ each having 2 goods, while $k$ and $l$ each have 1, yielding an even higher $2 - 4 \times 0.9 = -1.6$ utility for $i$ and $j$.
More importantly, agents who break away from the protocol are not necessarily incentivized to remain loyal to the break-away coalition; for instance, in the scenario where $i,j,k$ break away from the protocol, agent $i$ could simply go back and exchange with $l$ to increase her utility.
\end{example}

\parahead{Allowing onward sharing of goods}  
In our model, agents cannot share goods they have received from others. Allowing onward sharing would fundamentally alter the dynamics. We outline some key aspects of this setting, leaving a detailed analysis to future work.

At first glance, onward sharing might appear to simplify the problem. For an allocation to be stable, the goods held by any pair of agents must form a subset relation; otherwise, agents could continue exchanging goods they lack. This subset structure seemingly makes it easier for a planner to characterize and construct stable allocations.

However, such a setting complicates agents' incentives. Each exchange now has greater potential value, as received goods can be shared further to increase utility. Agents thus become selective in both sharing and receiving goods. Specifically, they are more willing to share widely-held  good, but reluctant to receive them, since these provide few future exchange opportunities. Conversely, they prefer to receive scarce  goods but hesitate to share them, as doing so grants more opportunities to others than to themselves. These incentive considerations significantly complicate the problem and pose challenges for designing protocols that ensure incentive compatibility.

\section{Run time of \algname{}}
\label{sec:runtime}

First, in~\S\ref{app:retrospectoptimization}, we describe three optimizations to the description of \algname{} in Algorithm~\ref{alg:main} to improve the runtime \retrospect{}.
Then, in~\S\ref{app:runtimeafteropt}, we will analyze the runtime after these optimizations.

\subsection{Optimizations to \retrospect{}}
\label{app:retrospectoptimization}

\subparahead{(1) Representation of allocations}
To efficiently perform line~\ref{query} of \retrospect{}, where we must identify the original owner of a good shared with an agent, we enhance the representation of the allocation. Instead of using $\data_{i, s} \in \{0, 1\}$ to indicate whether agent~$i$ possesses good~$s$, we use $x_{i, s} \in \{0, 1, 2, \ldots, n\}$ to also encode the source of the good (where $n = |\allagents|$). Specifically, $x_{i, s} = 0$ means that $i$ does not hold good~$s$, while $x_{i, s} = j$ for $j \in \{1, \ldots, n\}$ indicates that $i$ received good~$s$ from agent~$j$, including the case where $j = i$ (\ie $i$ originally owned it). We apply the same convention to $\interdata$. This allows us to identify the original sharer of any good in constant time.

\subparahead{(2) Skipping \emph{\retrospect{}} for agents who have already received all goods}
If an agent is already scheduled to receive all goods while \algname{} is running, \ie if $\interdata_{i} = \mathbf{1}$, then we can safely skip any \retrospect{} calls where agent~$i$ is the receiving agent. To see why, note that for \retrospect{} to succeed (\ie not return \retrofail{}), it must eventually reach line~\ref{base-case} in some recursive call, which requires the condition $\interdata_{i,s} = 0$ for some good~$s$. However, if $\interdata_{i,s} = 1$ for all $s$, this condition can never be satisfied. Thus, \retrospect$(i, j, S)$ will necessarily return \retrofail{} for any $j$ and $S$, and the call can be skipped.

\subparahead{(3) Skipping redundant \emph{\retrospect{}} calls}
Consider an initial call \retrospect($i$, $j$, $\{i, j\}$) from lines~\ref{IC1} or~\ref{IC2} of the main algorithm. Suppose that within this call, a recursive call \retrospect($i$, $k$, $S \cup \{k\}$) returns \retrofail{} for some agent $k \neq i$ and some set $S \subset \partagents \setminus \{k\}$ (a \retrospect{} call where $k$ is the giving agent). Then, for the remainder of the initial call, we skip all further \retrospect{} calls where $k$ is the giving agent.
Lemma~\ref{repeating-target} guarantees that this optimization does not affect the behavior of \retrospect{}. As we will see shortly, it significantly improves runtime by reducing the number of recursive calls made by an initial \retrospect{} call. This optimization can be implemented by maintaining a global variable tracking all failed agents $k$ outside the \retrospect{} procedure.

    \begin{lemma} \label{repeating-target}
        The optimization outlined in \emph{(3)} above does not change the behavior of
        \emph{\retrospect{}}.
    \end{lemma}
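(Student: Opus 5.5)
The plan is to show that, within one initial call \retrospect$(i, j, \{i,j\})$, once a recursive call \retrospect$(i, k, S\cup\{k\})$ has returned \retrofail{}, every later call with giving agent $k$ would also return \retrofail{}, and would leave $\interdatat$ and $\Pt$ unchanged. If so, replacing such a call by an immediate \retrofail{} yields the same return values and the same state, and hence the same execution.

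First observe that all recursive calls inside the initial call have the same receiving agent $i$. Line~\ref{recursive-call} passes $i$ unchanged, so only the giving agent and the examined set $S$ vary. Also, a failed call leaves the state untouched, because every tentative modification is undone on the failure branch. Moreover, any state change by a successful sub-call is immediately followed by a successful return that propagates up to the initial call. This holds because line~\ref{recursive-case} returns at once after line~\ref{alter-x}. Consequently, between a failure of giving agent $k$ and any later call with giving agent $k$ within the same initial call, $\interdatat$ and $\Pt$ are identical, since no successful modification has yet been committed.

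The core step is a monotonicity claim, proved by induction on $|\partagents\setminus S|$. Fix the state. If \retrospect$(i,k,S_1)$ fails, then \retrospect$(i,k,S_2)$ fails for every $S_2$, as long as the calls in between are themselves failures. The naive version, that enlarging $S$ only removes options, is not enough, because the later call may have a \emph{smaller} or incomparable $S$. So I would instead characterize failure intrinsically. \retrospect$(i,k,\cdot)$ fails iff no good $s$ with $\datatmo_{i,s}=0$ and $\initdata_{k,s}=1$ can be freed. Freeing $s$ means there is no alternating chain $k\to j_1\to j_2\to\cdots$ of distinct agents, where each step uses a scheduled exchange $((i,r),(j_a,s_a))\in\Pt$, the next good is non-rejected, and the chain ends at a good with $\interdatat_{i,\cdot}=0$.

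Then I would argue as follows. If some call from $k$ with set $S_2$ found such a chain, then the earlier call with set $S_1$ would also have found it, or a shortcut of it. Any agent in $S_1\setminus S_2$ appearing on the chain must itself be an ancestor of $k$ in the earlier recursion, which is still in progress. A path from that ancestor through $k$ and on to a free slot would therefore give a chain directly from the ancestor. By the induction hypothesis, the ancestor's own loop explores that route and succeeds, so the ancestor does not fail. This is the point I expect to be the main obstacle. The troublesome case is when the successful chain passes through an agent already on the current recursion stack at failure time. Handling it requires showing that the portion of the chain beyond that agent is explored by that agent's pending loop, or has been explored by an already-failed sibling. I would formalize this through the invariant that every agent marked as failed has no augmenting chain avoiding the current stack.

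From this invariant, skipping calls from failed agents cannot remove any successful chain from consideration, so every return value coincides and Lemma~\ref{repeating-target} follows.
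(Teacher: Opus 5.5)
Your route is the paper's. The paper also views \retrospect{} as a DFS over giving agents, with sinks the agents that can fill an empty slot of $i$. It argues that a repeat visit to a failed $k$ could only succeed along a chain through an explored ancestor $j'$, whose own search would then find the shortcut (``cut out the cycle $j'\to k\to j'$''). Your invariant makes that step precise: a failed agent has no augmenting chain avoiding the current stack. It is maintained pop by pop, because an agent that returns \retrofail{} has no chain avoiding its own ancestors. It yields that a repeated call fails and leaves the state untouched, which is what ``same behavior'' requires.

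The gap, which the paper's proof shares, is your claim that a successful sub-call is always followed by successful returns up to the initial call. Line~\ref{alter-x} is guarded by the test $((i,r),(j',s'))\notin\rejprops$. If the child succeeds but that exchange was rejected earlier, the parent undoes the child's changes and continues its loop. Your state-invariance conclusion survives, since those changes are undone. However, an agent can now leave the stack without having returned \retrofail{}, so ``the ancestor does not fail'' gives no contradiction.

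Nothing in your argument excludes the following. With stack $j\to a\to k$, $k$ fails only because its route runs through $a$. Then $a$ returns a free good $s'$, but $j$ finds $((i,r'),(a,s'))\in\rejprops$ and undoes. Later $j$ reaches $k$ through another good. The unoptimized call $k\to a\to s'$ now tests a different exchange $((i,r),(a,s'))$ and may succeed, whereas the optimized version skips $k$.

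For the same reason, a chain characterization with ``non-rejected'' steps does not describe the algorithm. A parent tests only the first good its child returns and never asks for another.

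As written, your argument, like the paper's, is valid only when $\rejprops=\emptyset$, i.e.\ in round~1, which is the only round when all agents accept. To cover general rounds, either restrict the lemma to that case, or clear the failure marks whenever a child's success is undone by the $\rejprops$ test. Between such events every pop is a genuine failure, and your invariant argument goes through with chains taken in the $\rejprops$-free graph.
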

    \begin{proof}
        Consider the following graphical view when we call \retrospect($i$, $j$, $\{i,j\}$). Our goal is to make the recursion resemble a DFS on the constructed graph.
        We will make each agent except $i$ a separate node. Agent $j$ will be the source node. The set of agents $T$ who can offer $i$ goods on its empty slots would be the sinks, \ie we can find a good $s$ s.t. $\interdata_{i,s} = 0, \initdata_{t, s} = 1$ for any $t \in T$. We make a directed edge from agent $l$ to $l'$ if we can find good $s$ s.t. $\initdata_{l, s} = 1, \interdata_{i,s} = 1$ and $l'$ is the agent who gives $s$ to $i$ in this round.
        
        Now we explain why this construction is valid. First, the giving agent of the initial call is $j$, so we start our search at $j$. When a recursive \retrospect{} call has $t \in T$ as the giving agent, we reach the base case for success on line \ref{base-case}, so each $t$ is a valid sink representing success.
        Each edge from $l$ to $l'$ means that we can potentially make a recursive \retrospect{} call where $l'$ is the giving agent in a \retrospect{} call where $l$ is the giving agent. The condition for making these edges is on line \ref{query}.
        We want to find a path from the source $j$ to any sink $t \in T$. If we find such a path, the initial \retrospect{} call succeeds. Otherwise, it fails. The \retrospect{} subroutine, then, can be viewed as a DFS on this graph.
          
        Say we started searching from $j$ and get to $k$, somewhere in the middle. \retrospect$(i, k, S \cup \{k\})$ returning \retrofail{} is the same as saying that we cannot get to a sink through unexplored nodes ($\notin S$) starting from $k$. Then, either there is absolutely no path from $k$ to any $t\in T$, or there is such a path but it uses explored nodes ($\in S$). We examine each case separately.
        In the first case, we know that future searches reaching $k$ would return \retrofail{} as well, so there is no need to visit $k$ again. In the second case, say the path goes through some already explored node $j' \in S$, \ie it is of the form $k \rightarrow j' \rightarrow t$. Note that we must have gone through $j \rightarrow j' \rightarrow k$ before reaching $k$ because $j' \in S$ is an explored node. Then, we can cut out the cycle $j' \rightarrow k \rightarrow j'$ and directly follow $j\rightarrow j' \rightarrow t$, \ie the search starting from $j'$ will also find this path and succeed.
        Therefore, it is redundant to keep looking at $k$. In both cases, we can safely skip subsequent \retrospect{} calls where $k$ is the giving agent.
    \end{proof}

    \subsection{Runtime after optimization}
    \label{app:runtimeafteropt}

In each round, \algname{} considers $O(|\partagents|^2)$ agent pairs. For each pair, the exchange loop on line~\ref{exchange-loop} may run up to $O(m)$ times in the worst case, as each agent can acquire up to $m$ goods. This results in an overall $O(|\partagents|^2 m)$ multiplier on the work performed within each iteration of the loop.

Each iteration consists of two components. First, we construct demand sets, which takes $O(m)$ time. Second, we either find a direct exchange on line~\ref{find-exchange} (with worst-case cost $O(m^2)$), or invoke a constant number of \retrospect{} calls on lines~\ref{IC1} and~\ref{IC2}. By Lemma~\ref{repeating-target}, a single initial \retrospect{} call generates at most $O(|\partagents|)$ recursive calls, and each call performs a linear scan over at most $m$ goods on line~\ref{all-possible}, for a total cost of $O(|\partagents|m)$. 

Therefore, the total work per iteration is $O(m + \max\{m^2, |\partagents|m\}) = O(m \cdot \max\{m, |\partagents|\})$. Multiplying by the outer loop factor, the worst-case runtime of a single round of \algname{} is $O(|\partagents|^2 m^2 \cdot \max\{m, |\partagents|\})$.
Note that optimization (1) is used to make each \retrospect{} call run in linear time, and optimization (3) is used to bound the number of recursive \retrospect{} calls.
Optimization (2) does not provide a tighter theoretical bound, but it significantly improves runtime in practice.

\section{Simulations}
\label{app:simulations}

\insertSimulation

We generated random instances $(\initdata, \beta)$ using three parameters: the number of agents $n = |\allagents|$, the number of goods $m$, and the probability $p$ that an agent initially holds a given good (each agent has each good with probability $p$). We ran \algname{} on various combinations of $n, m, p$ with up to $10^6$ trials per setting as shown in Table \ref{tab:simulation}. In all cases, \algname{} produced a final allocation in which at least one agent received all goods, thus empirically supported Conjecture \ref{conj:clearperfagent}.

We make a few other remarks about the simulation setting and simulation results.
First, the simulation code is shared
\href{https://osf.io/gwcrd/?view_only=d7456d050dd94ed39cbf28631f9fd26f}{\textcolor{blue}{here}} to facilitate reproducibility. There is only one file containing all the simulation code and instructions to run the simulation. Since the purpose of this code is to test against Conjecture \ref{conj:clearperfagent}, we implemented a single-round version of \algname{} (produces a final allocation assuming all agents accept) and enforced \algname{} to make a particular choice when multiple exchanges are available (``tie-breaking'').

We pick parameters $n, m, p$ to cover a diverse set of instances. The choice of $p \in \{0.1, 0.5, 0.9\}$ resembles different levels of data density in the initial allocation.

In order to keep the code concise, we implemented optimization (2) but not optimization (3). We noticed that the runtime increases significantly for large instances with $p = 0.1$ and suspect that \algname{} could suffer from deep recursion due to the lack of optimization (3). Because of this, we put N/A for these settings in Table \ref{tab:simulation}. On the other hand, for small instances or $p=0.9$, \algname{} typically terminates much faster than the worst case.

\end{document}